\newcommand{\reduceitem}{}
\newtheorem{theorem}{Theorem}[section]    
\newtheorem{fact}[theorem]{Fact}    
\newtheorem{definition}{Definition}[section] 
\newtheorem{claim}[theorem]{Claim}    
\newtheorem{corollary}[theorem]{Corollary}    
\newtheorem{lemma}[theorem]{Lemma}    
\newsavebox{\fmbox}
\newenvironment{fmpage}[1]
     {\medskip\begin{lrbox}{\fmbox}\begin{minipage}{#1}}
     {\end{minipage}\end{lrbox}\fbox{\usebox{\fmbox}}\medskip}
\newcommand{\encadre}[1]{
\begin{center}
\begin{fmpage}{10cm}
#1
\end{fmpage}
\end{center}}
\newcommand{\e}{{\mathrm{e}}}
\newcommand{\complexi}{{\mathrm{i}}}
\newcommand{\E}{\mathbb{E}}
\newcommand{\complex}{\mathbb{C}}
\newcommand{\Span}{\mathsf{Span}}
\newcommand{\ket} [1] {\lvert #1 \rangle}
\newcommand{\bra} [1] {\langle #1 \rvert}
\newcommand{\braket}[2]{\langle #1\vert #2 \rangle}
\newcommand{\density}[1]{\ket{#1}\!\bra{#1}}
\newcommand{\proj}[1]{\ket{#1}\!\bra{#1}}
\newcommand{\norm}[1]{\lVert #1 \rVert}
\newcommand{\size}[1]{\left\lvert #1 \right\rvert}
\newcommand{\abs}{\size}
\newcommand{\tensor}{\otimes}
\newcommand{\diag}{{\mathrm{diag}}}
\newcommand{\Order}{{\mathrm{O}}}
\newcommand{\order}{{\mathrm{o}}}
\newcommand{\adjoint}{\dagger}
\newcommand{\set}[1]{\left\{ #1 \right\}}
\newcommand{\etal}{{\em et al.\/}}
\newcommand{\suppress}[1]{}
\newcommand{\comment}[1]{}
\newcommand{\reflex}{\mathrm{ref}}
\newcommand{\id}{\mathrm{Id}}
\newcommand{\eps}{\varepsilon}
\newcommand{\aitch}{{\mathcal{H}}}
\newcommand{\ay}{{\mathcal{A}}}
\newcommand{\bee}{{\mathcal{B}}}
\newcommand{\kay}{{\mathcal{K}}}
\newcommand{\algof}{\mathbf{Rotate}}
\newcommand{\algod}{\mathbf{Detect}}
\newcommand{\malgod}{\mathbf{MainDetect}}
\newcommand{\tulsi}{\mathbf{T}}
\newcommand{\eht}{\mathsf{HT}}
\newcommand{\qht}{\mathsf{{QHT}}}
\newcommand{\tphi}{{\widetilde{\phi}}}
\begin{document}

\title{On the hitting times of quantum versus random walks
\thanks{
Research supported in part by the European Commission IST Integrated Project
Qubit Applications (QAP) 015848, the 
ANR Blanc AlgoQP grant of the French Research Ministry, 
NSERC Canada, CIFAR,
an Ontario ERA, QuantumWorks, and ARO/NSA (USA). 
}}
\author{Fr\'ed\'eric Magniez\thanks{%
LRI, Univ. Paris-Sud, CNRS;
F-91405 Orsay, France;
\texttt{magniez@lri.fr},
\texttt{richterp@lri.fr}.
 }
\and
Ashwin Nayak\thanks{%
C\&O and IQC, U.\ Waterloo, and Perimeter Institute,
Ontario N2L 3G1, Canada;
\texttt{anayak@uwaterloo.ca}.
Some of this work was conducted while this author was visiting LRI, U.\
Paris-Sud, Orsay, France.
Research at Perimeter Institute is supported in part by the
Government of Canada through NSERC and by the Province of Ontario
through MEDT. 
}
\and 
Peter C. Richter$^{\dag}$
\and 
Miklos Santha\thanks{LRI, Univ. Paris-Sud, CNRS;
F-91405 Orsay, France and
Centre for Quantum Technologies,
National University of Singapore, Singapore 117543; \texttt{santha@lri.fr}.
Research at the Centre
for Quantum Technologies is funded by the Singapore Ministry of Education 
and the National Research Foundation.}
}

\date{}
\maketitle

\begin{abstract}
  The {\em hitting time} of a classical random walk (Markov chain) is
  the time required to {\em detect} the presence of -- or
  equivalently, to {\em find} -- a marked state.  The hitting time of
  a quantum walk is subtler to define; in particular, it is unknown
  whether the detection and finding problems
have the same time complexity.  In this
  paper we define new Monte Carlo type classical and quantum hitting
  times, and we prove several relationships among these and the
  already existing Las Vegas type definitions.  In particular, we show
  that for some marked state the two types of hitting time 
  are of the same order in both the classical and the quantum case.

  Further, we prove that for any reversible ergodic Markov chain $P$,
  the quantum hitting time of the quantum analogue of $P$ has the same
  order as the square root of the classical hitting time of $P$.  We
  also investigate the (im)possibility of achieving a gap greater than
  quadratic using an alternative quantum walk.
  In doing so, we define a notion of reversibility for
a broad class of quantum
  walks and show how to derive from any such quantum walk a classical
  analogue.
  For the special case of quantum walks built on reflections, we show
  that the hitting time of the classical analogue is exactly the
  square of the quantum walk.

  Finally, we present new quantum algorithms for the detection and
  finding problems.  The complexities of both algorithms are related
  to the new, potentially smaller, quantum hitting times.  The
  detection algorithm is based on phase estimation and is particularly
  simple.  The finding algorithm combines a similar phase estimation
  based procedure with ideas of Tulsi from his recent
  theorem~\cite{Tulsi08} for the 2D grid.  Extending his result, we
  show that for any state-transitive Markov chain with unique marked
  state, the quantum hitting time is of the same order for both the
  detection and finding
  problems.  
\end{abstract}


\section{Introduction}

Many classical randomized algorithms are based on {\em random walks},
or {\em Markov chains}.  Some use random walks to generate random
samples from the Markov chain's stationary distribution, in which case
the {\em mixing time} of the Markov chain is the complexity measure of
interest.  Others use random walks to search for a ``marked'' state in the
Markov chain, in which case the {\em hitting time} is of interest.  In
recent years, researchers studying {\em quantum walks} have attempted
to define natural notions of ``quantum mixing time''
\cite{NV,ABNVW,AAKV} and ``quantum hitting time'' \cite{AKR,Sze,MNRS}
and to develop quantum algorithmic applications to sampling and search
problems.

Ambainis \cite{Amb1} was the first to solve a natural problem---the
``element distinctness problem''---using a quantum walk.  Following
this, quantum walk algorithms were discovered for triangle finding
\cite{MSS}, matrix product verification \cite{BS}, and group
commutativity testing \cite{MN}.  All of these are ``hitting time''
applications involving quantum walk search on Johnson
graphs---highly-connected graphs whose vertices are subsets of a fixed
set and whose edges connect subsets differing by at most two elements.
Quantum walk algorithms for the generic {\em spatial search} problem
\cite{AA} were given by Shenvi \etal~\cite{SKW} on the hypercube, and
by Childs and Goldstone \cite{CG2} and Ambainis \etal~\cite{AKR} on
the torus.  Szegedy \cite{Sze} showed that for any symmetric Markov
chain and any subset $M$ of marked elements, we can detect whether or
not $M$ is nonempty in at most 
(of the order of) the square-root of the classical hitting time.  To
achieve this goal, Szegedy defined the quantum analogue of any
symmetric Markov chain.  Later Magniez \etal~\cite{MNRS} extended this
to define the quantum analogue of the larger class of
irreducible Markov chains.

Unresolved by Szegedy's work is the question: with what probability
does the algorithm output a marked state, as opposed to merely
detecting that $M$ is nonempty?  This issue was addressed by Magniez
\etal~\cite{MNRS}, who gave an algorithm which finds a marked state
with constant probability but whose complexity may be more than the
square root of the classical hitting time. Indeed, for the $\sqrt{N}
\times \sqrt{N}$ grid 
their bound is $\Theta(N)$ whereas the classical hitting time is
$\Theta(N \log N)$.  The algorithms of Ambainis \etal~\cite{AKR} and
Szegedy~\cite{Sze} perform better on the grid if there is a {\em
  unique\/} marked state: they find the marked state in time
$\Order(\sqrt{N} \cdot \log N)$.  
(The case of multiple marked elements may be reduced to this case at
the cost of a polylog factor in run-time.)  For some time it remained
unclear if one could do better, until Tulsi \cite{Tulsi08} showed how
to find a unique marked element in time $\Order(\sqrt{N \log N})$.  His
algorithm seems to be something of a departure from previous quantum
walk algorithms, most of which have been analyzable as the product of
two reflections 
{\em {\`a} la\/} the Grover algorithm \cite{Gro}. 
The 2D grid
was the canonical example of a graph on which it was unknown how to
find a marked state quantumly with the same complexity as detection.
Tulsi's result thus raises the question: is finding ever any harder than
detection?


In this paper we address several questions related to classical and
quantum hitting times.  In the literature on Markov chains,
hitting time is usually defined as the complexity of the natural Las
Vegas algorithm for finding a marked element by running the chain.  We
first give an alternative definition based on the Monte Carlo version
of the same algorithm. To our knowledge, this variant of the hitting
time has not been considered previously.  We show that for some marked
state, the two hitting times are of the same order
(Theorem~\ref{classicalcomparison}).

Within the setting of abstract search algorithms presented by Ambainis
\etal~\cite{AKR}, we introduce quantum analogues of the two classical
hitting times (Definition~\ref{qhtdef}).  The analogue of the Las
Vegas version was already present in
Szegedy's work~\cite{Sze},
whereas the other is new.  Unlike in the classical case, detection and
finding are substantially different problems in quantum computing.  We
address both problems here.

For the detection problem, we introduce a new algorithm $\algod$ based
on phase estimation which is similar to the approach of Magniez
\etal~\cite{MNRS}.  
Our algorithm
can detect the presence of a marked element in the starting state.
The advantages of this algorithm are its
simplicity and the fact that its complexity is related to the new
Monte Carlo type quantum hitting time (Theorem~\ref{detecting}).
\suppress{
Coupled with a standard control test algorithm, our algorithm can
detect the presence of a marked element in the starting state
(Theorem~\ref{detecting}).  }
This is an improvement over the
Szegedy detection algorithm whose complexity
is related to the potentially larger Las Vegas quantum hitting time.

We then present a variant of the above algorithm, called $\algof$,
which can be used for the more difficult problem of finding, and whose
complexity is also related to the Monte Carlo type quantum hitting
time (Theorem~\ref{finding}).  This improves the finding algorithm
due to Ambainis \etal\ whose complexity was characterized by a potentially
larger quantity, the inverse of the smallest eigenphase of the search
algorithm.  Our algorithm is also simpler.  Combining $\algof$ with
the ideas
in the Tulsi algorithm for the 2D grid, we can find a marked
element with constant probability and with the same complexity as
detection for a large class of quantum walks---the quantum analogue of
state-transitive reversible ergodic Markov chains.

As in the classical case, for some marked elements the two
types of the quantum hitting time are of the same order
(Fact~\ref{factqcomparison} and Theorem~\ref{qcomparison}).  For any
reversible ergodic Markov chain $P$, we prove that the quantum hitting
time of the quantum analogue of $P$ is of same order as the square
root of the classical hitting time of $P$
(Theorem~\ref{thm-qht-sqrt}).  Moreover, for the Las Vegas hitting
times they are exactly the same.

Finally, we investigate the (im)possibility of achieving a greater than quadratic gap using some other quantum walk.
For this we consider general quantum walks on the edges of an
undirected graph $G$; these were defined, for example, in the survey
paper of Ambainis~\cite{amb}, see also~\cite{sant}.
We define a quite natural notion of
reversibility for general quantum walks.
We conjecture that for any reversible quantum walk $U_2$ on an undirected graph $G$, 
there exists a reversible ergodic Markov chain $P$ on $G$
such that for every marked state, the quantum hitting time of $U_2$
is at most the square root of the 
classical hitting time of $P$.  We are able to prove this in the
special case of quantum walks built on reflections
(Theorem~\ref{lowerbound}).  Our proof introduces a classical analogue
of
such quantum walks which might be of independent interest
(Definition~\ref{classicalization}).
Curiously, the classical analogue
is reversible if and only if the quantum walk is reversible
(Lemma~\ref{classical-analogue}).

\section{Classical hitting times}\label{classical}

Let $P$ be an ergodic and reversible Markov chain over state space $X
= \{1, \ldots , n\}$, 
which we identify with its transition probability matrix. We suppose
that the eigenvalues of $P$ are nonnegative, by replacing $P$ with
$(P+I)/2$ if necessary.  More generally, one may also assume that the
eigenvalues of $P$ are at
least~$\alpha$, where~$\alpha \in [0,1)$, by replacing $P$ with
$((1-\alpha) P+(1+\alpha)I)/2$ if necessary.  We make this further
assumption when needed, for instance in Section~\ref{find}.  Let $\pi$
denote the stationary distribution of $P$.  Let $P_{-z}$ be the $(n -
1) \times (n-1)$ matrix we get by deleting from $P$ the row and column
indexed by $z$. Similarly, for a vector $v$, we let $v_{-z}$ stand for
the vector obtained by omitting the $z$-coordinate of $v$.

\begin{claim}
The eigenvalues of $P_{-z}$ are all in the
interval $[\kappa_n,1)$, where $\kappa_n$ is the smallest eigenvalue of $P$.
\end{claim}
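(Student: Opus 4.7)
The plan is to reduce to a symmetric matrix, apply the Cauchy interlacing theorem to get the bounds $[\kappa_n,1]$, and then use ergodicity (Perron--Frobenius) to rule out the upper endpoint.

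First I would use reversibility to symmetrize. Let $D = \diag(\pi)$ and set $S = D^{1/2} P D^{-1/2}$. Detailed balance $\pi_i P_{ij} = \pi_j P_{ji}$ shows that $S$ is symmetric with the same spectrum as $P$, namely in $[\kappa_n,1]$. Crucially, because $D^{1/2}$ is diagonal, deleting row and column $z$ commutes with the conjugation: $S_{-z} = (D^{1/2})_{-z}\, P_{-z}\, (D^{-1/2})_{-z}$, so $P_{-z}$ and $S_{-z}$ are similar. It therefore suffices to analyze the spectrum of the symmetric principal submatrix $S_{-z}$ of $S$.

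Next I would invoke the Cauchy interlacing theorem: if $S$ has eigenvalues $1 = \lambda_1 \ge \lambda_2 \ge \cdots \ge \lambda_n = \kappa_n$ and $S_{-z}$ has eigenvalues $\mu_1 \ge \cdots \ge \mu_{n-1}$, then $\lambda_k \ge \mu_k \ge \lambda_{k+1}$ for every $k$. This immediately gives $\mu_{n-1} \ge \lambda_n = \kappa_n$ and $\mu_1 \le \lambda_1 = 1$, placing every eigenvalue of $P_{-z}$ in $[\kappa_n,1]$.

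The one delicate point, and really the only obstacle, is to upgrade $\mu_1 \le 1$ to strict inequality $\mu_1 < 1$. Here I would use the variational characterization
\[
\mu_1 = \max_{v \neq 0,\, v_z = 0} \frac{\langle v, S v\rangle}{\langle v, v\rangle}.
\]
If this maximum equalled $1 = \lambda_1$, then any maximizer $v$ would lie in the top eigenspace of $S$. But ergodicity of $P$ together with Perron--Frobenius forces this eigenspace to be one-dimensional, spanned by $D^{1/2}\mathbf{1} = (\sqrt{\pi_1},\dots,\sqrt{\pi_n})$, which has strictly positive entries in every coordinate (including the $z$-th). This contradicts the constraint $v_z = 0$, so $\mu_1 < 1$. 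Combining this with interlacing yields the claim.
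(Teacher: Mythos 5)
Your proof is correct, but it takes a genuinely different route from the paper's. The paper follows Szegedy's Lemma~8: after symmetrizing, it writes an eigenvector of $P_{-z}$ (padded with a zero in coordinate $z$) in the eigenbasis of $P$ and derives the secular equation $\sum_k |\delta_k|^2/(\kappa_k-\lambda)=0$, which cannot hold if $\lambda<\kappa_n$ because every summand would then be strictly positive; the strict bound $\lambda<1$ is handled in one line via $\norm{P_{-z}}<1$ for the ergodic substochastic block. You instead invoke the Cauchy interlacing theorem on the principal submatrix $S_{-z}$ of the symmetrized matrix $S$, which yields $\kappa_n \le \mu_{n-1}$ and $\mu_1 \le 1$ in a single step, and then you rule out $\mu_1=1$ by noting that a maximizer of the Rayleigh quotient on the hyperplane $\{v_z=0\}$ would have to lie in the $1$-eigenspace of $S$, which by Perron--Frobenius is the line through $(\sqrt{\pi_1},\dots,\sqrt{\pi_n})$ and therefore has no zero coordinate. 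The two arguments are close cousins --- the secular-equation computation is essentially a proof of interlacing for rank-one bordered perturbations --- but yours is shorter and relies on standard named theorems, while the paper's self-contained calculation mirrors Szegedy's and extracts finer information about where each $\mu_k$ sits relative to the $\kappa_k$'s (not needed for this claim). Your treatment of the strict upper bound is also more explicit than the paper's terse appeal to $\norm{P_{-z}}<1$.
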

\begin{proof}
  The proof globally proceeds along the lines of the proof of Lemma~8
  in~\cite{Sze}.
  For $x \in X$ let $e_x$ denote the characteristic vector of $x$.
  Let $w_1, \ldots , w_n$ be the eigenvectors of $P$ with associated
  eigenvalues $1= \kappa_1 \geq \ldots \geq \kappa_n > 0$. Let $v$ be
  an arbitrary eigenvector of $P_z$ with eigenvalue $\lambda$.
  Since~$P$ is ergodic, $\norm{P_z} < 1$, therefore~$\lambda < 1$.  We
  show that $\lambda \geq \kappa_n$. This is obviously true if
  $\lambda = \kappa_k$ for some $k$; let us suppose that this is not
  the case.

Let $w$ be the vector obtained from $v$ by augmenting it with a 0
in the~$z$-coordinate.  We express both $w$ and $e_z$ in the
eigenbasis of $P$: let $w = \sum_{k=1}^n \gamma_k w_k$ and $e_z =
\sum_{k=1}^n \delta_k w_k$. Then $wP = \lambda w + \nu e_z$ for some
real number $\nu$. Moreover, $\nu \neq 0$; otherwise $w$ would
have been an eigenvector of $P$, meaning that $\lambda=\kappa_k$,
which contradicts our supposition.  
For $k = 1, \ldots , k$, we have $\kappa_k \gamma_k = \lambda \gamma_k
+ \nu \delta_k$.  Since $w$ and $e_z$ are orthogonal, we also have
$\sum_{k=1}^n \gamma_k \bar{\delta_k} = 0$. Therefore $\sum_{k = 1}^n
\frac{|\delta_k|^2}{\kappa_k - \lambda} = 0$.  The statement then
follows since the left hand side of the above equation would be
positive if $\lambda$ were less than $\kappa_n$.
\end{proof}

\begin{definition}
For $z \in X$, the {\em $z$-hitting time} of $P$, denoted
by $\eht(P,z)$, is the expected number of steps the chain $P$ takes to reach the state $z$ when started
in the initial distribution $\pi$. 
\end{definition}
It is well known that the $z$-hitting time of $P$ is given by the formula
$\eht(P,z) = \pi_{-z}^\dagger (I-P_{-z})^{-1} u_{-z}$,
where $u$ is the all-ones vector. 
Simple algebra
shows that
\[
\pi_{-z}^\dagger (I-P_{-z})^{-1} u_{-z} 
    \quad = \quad \sqrt{\pi_{-z}}^\dagger (I-S_{-z})^{-1} \sqrt{\pi_{-z}},
\]
where $\sqrt{\pi_{-z}}$ is the entry-wise square root of $\pi_{-z}$
and $S_{-z}$ is the ``symmetrized form'' $S_{-z} = \sqrt{\Pi_{-z}}
P_{-z} \sqrt{\Pi_{-z}}^{-1}$ of $P_{-z}$ with $\Pi_{-z} =
\diag(\pi_{x})_{x\neq z}$.  The matrices $P_{-z}$ and $S_{-z}$ have
the same spectrum since they are similar.  Let $\{v_j : j \leq n-1 \}$
be the set of normalized eigenvectors of $S_{-z}$ where the eigenvalue
of $v_j$ is $\lambda_j = \cos \theta_j$ with $0 < \theta_j \leq
\pi/2$.  By reordering the eigenvalues we can suppose that $1 >
\lambda_1 \geq \ldots \geq \lambda_{n-1} \geq 0$.  If $\sqrt{\pi_{-z}}
= \sum_j \nu_j v_j$ is the decomposition of $\sqrt{\pi_{-z}}$ in the
eigenbasis of $S_{-z}$ then the $z$-hitting time satisfies:
\[
\eht(P,z)  \quad = \quad \sum_j \frac{\nu_j^2}{1-\lambda_j}.
\]
When $0 < \theta \leq \pi/2$ then $1 - \theta^2/2 \leq \cos \theta
\leq 1 - \theta^2/4.$ Therefore we can approximate the hitting time
with another expectation that is very closely related to the analogous
quantum notion. More precisely, let $H_z$ be the random variable which
takes the value $1/\theta_j^2$ with probability $\nu_j^2$, and 0 with
probability $1 - \sum_j \nu_j^2$. We denote the expectation of $H_z$
by $\E[H_z]$. Then we have $ 2\, \E[H_z] \leq \eht(P,z) \leq 4\, \E[H_z]$.

In the definition of the hitting time the Markov chain is used 
in a Las Vegas algorithm: we count the (expected) number of steps to
reach the marked element without error.  We can also use the chain as
an algorithm that reaches the marked element with some probability
smaller than~$1$, leading to a Monte Carlo type definition.
Technically, to be able to underline the analogies between the
classical and quantum notions, we define the hitting time with error
via the random variable $H_z$.
\begin{definition}
For $ z \in X$  and for $0 < \eps <1$, the {\em $\eps$-error 
$z$-hitting time} of $P$, denoted by $\eht_{\eps}(P,z)$ is defined as
\[
\eht_{\eps}(P,z) \quad = \quad \min \;\; \{ y \;:\; \Pr [ H_z > y ] \leq \eps \}.
\]
\end{definition}

Observe that for all $z$, if $\eps \leq \eps'$ then $\eht_{\eps'}(P,z)
\leq \eht_{\eps}(P,z)$. We first show that the use of $H_z$ for the
definition of the Monte Carlo hitting time is indeed justified (proof
in Appendix~\ref{appjustif}).  For this, let us denote by
$h_{\eps}(P,z)$ the smallest integer $k$ such that the probability
that the chain does not reach $z$ in the first $k$ steps is at most
$\eps$.

\begin{theorem}\label{justif}
For all $z$ and $\eps$, we have
\begin{eqnarray*}
h_{\eps}(P,z) & \leq & \left(4 \ln \frac{2}{\eps} \right) 
                         \eht_{\eps/2}(P,z), \quad \text{and} \\
\eht_{\eps}(P,z) & \leq &  \frac{1}{2} \; h_{\eps/3}(P,z).
\end{eqnarray*}
\end{theorem}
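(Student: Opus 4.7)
The plan is to introduce the quantity $q_k$, the probability that the chain started from $\pi$ avoids $z$ during its first $k$ steps, and use its spectral expression as a bridge between the two sides of each inequality.  The same similarity argument as the one used just above the definition of $\eht_\eps$ yields
\[
q_k \;=\; \pi_{-z}^\dagger P_{-z}^k u_{-z} \;=\; \sqrt{\pi_{-z}}^\dagger S_{-z}^k \sqrt{\pi_{-z}} \;=\; \sum_j \nu_j^2 \cos^k \theta_j,
\]
so that $h_\eps(P,z) = \min\{k : q_k \leq \eps\}$ and $\Pr[H_z > y] = \sum_{j : \theta_j^2 < 1/y} \nu_j^2$ both involve exactly the same pairs $(\nu_j, \theta_j)$.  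Everything after this reduces to an elementary comparison of a weighted sum of $\cos^k \theta_j$'s to a partial sum of $\nu_j^2$'s, piped through the two-sided bound $1-\theta^2/2 \leq \cos\theta \leq 1-\theta^2/4$ on $(0,\pi/2]$ cited in the paragraph above.

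For the first inequality, I set $y = \eht_{\eps/2}(P,z)$ and $k = 4y \ln(2/\eps)$, then split the sum for $q_k$ according to whether $\theta_j^2 \geq 1/y$ or $\theta_j^2 < 1/y$.  On the ``large-angle'' part the upper bound $\cos\theta_j \leq 1 - 1/(4y) \leq e^{-1/(4y)}$ gives $\cos^k\theta_j \leq e^{-k/(4y)} = \eps/2$, so the contribution is at most $\eps/2$.  On the ``small-angle'' part the total weight $\sum_{j : \theta_j^2 < 1/y} \nu_j^2$ is exactly $\Pr[H_z > y] \leq \eps/2$ by the definition of $\eht_{\eps/2}$.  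Adding the two gives $q_k \leq \eps$, hence $h_\eps(P,z) \leq k = 4 \ln(2/\eps) \cdot \eht_{\eps/2}(P,z)$.

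For the second inequality I set $k = h_{\eps/3}(P,z)$, so that $q_k \leq \eps/3$, and aim to show $\Pr[H_z > k/2] \leq \eps$.  Restricting the sum for $q_k$ to modes with $\theta_j^2 < 2/k$ and using the lower bound $\cos\theta_j \geq 1 - \theta_j^2/2 > 1 - 1/k$, I obtain $\cos^k \theta_j \geq (1-1/k)^k$; since $(1-1/k)^k$ is monotonically increasing in $k$ with limit $1/e$, it exceeds $1/3$ as soon as $k$ is moderately large, giving
\[
q_k \;\geq\; \sum_{j : \theta_j^2 < 2/k} \nu_j^2 \cos^k \theta_j \;\geq\; \frac{1}{3} \Pr[H_z > k/2].
\]
Hence $\Pr[H_z > k/2] \leq 3 q_k \leq \eps$ and $\eht_\eps(P,z) \leq k/2$.

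The main obstacle is a boundary phenomenon in the second inequality: $(1-1/k)^k \geq 1/3$ fails for the very smallest values of $k$, so those cases need a separate, essentially ad hoc argument.  The natural way around it is to exploit the fact that when $k = h_{\eps/3}$ is small, the constraint $q_{k-1} > \eps/3$ together with the upper bound $\cos\theta_j \leq 1 - \theta_j^2/4$ already restricts which $(\nu_j, \theta_j)$ configurations are possible, and one verifies directly that each of them forces $\eht_\eps \leq k/2$.  Beyond this low-$k$ bookkeeping, the proof is merely the spectral translation between $H_z$ and the sum $q_k$ described above.
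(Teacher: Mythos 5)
Your proof is the paper's proof almost line by line: write $q_k=\sum_j\nu_j^2\cos^k\theta_j$, split the sum according to a threshold on $\theta_j$, and push the pieces through $1-\theta^2/2\le\cos\theta\le 1-\theta^2/4$. In the second inequality the bound $(1-1/m)^m\ge 1/3$ with $m=h_{\eps/3}(P,z)$ that you invoke is literally the paper's bound too: the paper writes $(1-\frac{1}{2k})^{2k}$ after setting $k=\frac{1}{2}h_{\eps/3}(P,z)$, so $2k=m$.

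The small-$m$ boundary phenomenon you flag is real, but it is not a gap you have relative to the paper: the authors assert $(1-\frac{1}{2k})^{2k}\ge 1/3$ with no restriction on $2k=h_{\eps/3}(P,z)$, and that inequality fails for $h_{\eps/3}\le 5$, so their proof has the identical hole. Be skeptical, though, of the repair you sketch: the low-$k$ bookkeeping cannot rescue the constant $\frac{1}{2}$ exactly. Consider a two-state reversible chain with $z=2$, $p_{11}$ slightly below $\frac{1}{3}$, and $\pi_1$ slightly above $\eps$; the nontrivial eigenvalue of $P$ is then positive, and one checks $s_0=\pi_1>\eps/3$ while $s_1=\pi_1 p_{11}\le\eps/3$, so $h_{\eps/3}=1$. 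Yet here $\eht_\eps=1/\theta_1^2$ with $\theta_1=\arccos p_{11}$, which tends to $(\arccos\frac{1}{3})^{-2}\approx 0.66>\frac{1}{2}$. For very small $h_{\eps/3}$ the constant $\frac{1}{2}$ is therefore slightly too strong. Both your argument and the paper's are sound once $h_{\eps/3}\ge 6$, which is the asymptotic regime the theorem is really meant to address.
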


How much smaller than the Las Vegas hitting time can the Monte Carlo
hitting time be?  The following results state that for some $z$ they
are of the same order of magnitude.

\begin{theorem}\label{classicalcomparison}
We have the following inequalities between the two notions of hitting time:
\begin{itemize}\reduceitem
\item For all $z$ and $\eps$, \ \ $\eht_{\eps}(P,z) \leq \frac{1}{2 \eps}\; \eht(P,z)$.
\item There exists $z$  such that for all $\eps < 1/2$,  \ \ 
$\eht(P,z) \leq 4\, \eht_{\eps}(P,z)$. 
\end{itemize}
\end{theorem}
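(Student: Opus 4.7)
The plan is to treat the two inequalities separately, each as a direct consequence of the spectral description of $H_z$.

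For the first inequality, I would apply Markov's inequality to the nonnegative random variable $H_z$. Since $\Pr[H_z > \E[H_z]/\eps] \leq \eps$, the definition of $\eht_\eps$ gives $\eht_\eps(P,z) \leq \E[H_z]/\eps$, and combining with the already-derived bound $\E[H_z] \leq \eht(P,z)/2$ yields $\eht_\eps(P,z) \leq \eht(P,z)/(2\eps)$.

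For the second inequality, the plan is to exhibit a state $z$ at which the distribution of $H_z$ is concentrated at (or above) its largest atom $1/\theta_1^2$. The key numerical observation is that
\[
\frac{1}{\theta_1^2} \;\geq\; \sum_j \frac{\nu_j^2}{\theta_j^2} \;=\; \E[H_z] \;\geq\; \frac{\eht(P,z)}{4},
\]
where the first inequality uses $\sum_j \nu_j^2 \leq 1$ together with $\theta_1 \leq \theta_j$, and the last inequality is the already-established $\eht(P,z) \leq 4\,\E[H_z]$. Hence, if $z$ can be chosen so that the top spectral coefficient satisfies $\nu_1^2 \geq 1/2$, then $\Pr[H_z = 1/\theta_1^2] = \nu_1^2 \geq 1/2$; for every $\eps < 1/2$ and every $y < 1/\theta_1^2$ this forces $\Pr[H_z > y] \geq \nu_1^2 > \eps$, and therefore $\eht_\eps(P,z) \geq 1/\theta_1^2 \geq \eht(P,z)/4$.

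The main obstacle is thus to exhibit a state $z$ with $\nu_1^2 \geq 1/2$. The natural candidate is the state $z^*$ minimizing $\pi_z$: since $\sqrt{\pi}$ is the top eigenvector of the symmetrized chain $S$ (eigenvalue $1$), its projection $\sqrt{\pi_{-z^*}}$ onto the hyperplane orthogonal to $e_{z^*}$ has norm $\sqrt{1 - \pi_{z^*}} \geq \sqrt{1 - 1/n}$ and Rayleigh quotient under $S$ near $1$; a standard eigenvalue-interlacing / spectral-gap argument then shows that the top eigenvector of $S_{-z^*}$ is closely aligned with $\sqrt{\pi_{-z^*}}$, giving $\nu_1^2$ close to $\|\sqrt{\pi_{-z^*}}\|^2 = 1 - \pi_{z^*} \geq 1/2$ whenever $n \geq 2$. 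Making this perturbative estimate quantitatively tight is the delicate step; as a backup, an averaging argument over $z$ weighted by $\pi$ offers an alternative route to establishing the existence of a suitable state.
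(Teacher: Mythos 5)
Both reductions in your proposal match the paper's proof exactly.  The first inequality is handled, as in the paper, by Markov's inequality combined with $\E[H_z]\le\eht(P,z)/2$.  For the second, your chain
\[
\frac{1}{\theta_1^2}\;\ge\;\sum_j\frac{\nu_j^2}{\theta_j^2}\;=\;\E[H_z]\;\ge\;\frac{\eht(P,z)}{4},
\]
together with $\nu_1^2\ge 1/2\Rightarrow\eht_\eps(P,z)\ge 1/\theta_1^2$ for $\eps<1/2$, is exactly the inequality $\eht(P,z)\le 4\sum_j\nu_j^2/\theta_j^2\le 4/\theta_1^2\le 4\eht_\eps(P,z)$ used in the paper.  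So the overall architecture is right, \emph{conditional} on exhibiting a state $z$ with $\nu_1^2\ge 1/2$.

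That existence claim is precisely where your proposal has a genuine gap.  The paper simply invokes Lemma~8 of Szegedy~\cite{Sze} for it; you instead try to establish it by choosing $z^*=\arg\min_z\pi_z$ and arguing perturbatively that $\sqrt{\pi_{-z^*}}$ must align with the top eigenvector of $S_{-z^*}$.  This does not go through as stated.  Knowing that the Rayleigh quotient of $\sqrt{\pi_{-z^*}}$ under $S_{-z^*}$ is close to $1$ (one computes it equals $1-\pi_{z^*}(1-p_{z^*z^*})/(1-\pi_{z^*})$) tells you $\lambda_1(S_{-z^*})$ is close to $1$, but it does \emph{not} by itself tell you that the vector lies mostly in the top eigenspace.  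Writing $\sqrt{\pi_{-z}}=\sum_j\nu_jv_j$ and using $\sum_j\nu_j^2(1-\lambda_j)\le\pi_z$, the best one gets is
\[
\sum_{j\ge 2}\nu_j^2\;\le\;\frac{\pi_z}{1-\lambda_2(S_{-z})},
\]
and the gap $1-\lambda_2(S_{-z})$ is not under control: for a chain where deleting $z^*$ leaves a slowly-mixing (or nearly disconnected) residual chain, $\lambda_2(S_{-z^*})$ can be so close to $1$ that this bound is vacuous even when $\pi_{z^*}$ is tiny.  Cauchy interlacing constrains eigenvalues, not eigenvector overlaps, so it does not rescue the argument; a Davis--Kahan-type alignment bound would need exactly the missing gap estimate.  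Your fallback ``averaging over $z$ weighted by $\pi$'' is only a sketch of a sketch and would also need a concrete identity to average.  In short, the second bullet of the theorem is reduced correctly, but the key existence statement remains unproven in your write-up and should instead be credited to Szegedy's Lemma~8 (or proved by a genuinely different, gap-free argument).
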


\begin{proof}
  The first statement simply follows from the Markov inequality and from
  the relation $ \E[H_z] \leq \eht(P,z)/2$.  For the second statement,
  let $z$ be an element such that $\nu_1^2 \geq 1/2$. The existence of
  such an element is assured by Lemma~8 in~\cite{Sze}. Then
$\eht(P,z) \leq \sum_j 4 \nu_j^2 / \theta_j^2 \leq 4/\theta_1^2 \leq 4 \eht_{\eps}(P,z)$.
\end{proof}

\section{Quantum hitting times}
\subsection{Two notions of quantum walk}\label{ct}

Let $U=U_2 U_1$ be an {\em abstract search algorithm} as
in~\cite{AKR}, where $U_1 = I - 2\proj{\mu}$ for a ``target vector''
$\ket{\mu}$ with real entries, and $U_2$ 
is a real unitary matrix with a unique
$1$-eigenvector $\ket{\phi_0}$.  Without loss of generality we always
assume that $\ket{\phi_0}$ has real entries.  The state $\ket{\mu}$ is
the quantum analogue of the state $z$
which we seek in the
classical walk $P$, $U_2$ the analogue of $P$, and $\ket{\phi_0}$ the
analogue of the stationary distribution $\pi$.

The abstract search algorithm usually starts with 
state~$\ket{\phi_0}$, and iterates $U$ several times in order to get a
large deviation from $\ket{\phi_0}$.  In this paper, we prefer to
start with a slightly different initial state.  The general behavior
of the abstract search algorithm remains unchanged by this.  We
replace the starting state $\ket{\phi_0}$ by $\ket{\tphi_0} =
\ket{\phi_0} - \braket{\phi_0}{\mu}\ket{\mu}$, the (unnormalized)
projection of the~$1$-eigenvector~$\ket{\phi_0}$ of~$U_2$ on the space
orthogonal to~$\ket{\mu}$. This substitution was first considered
in~\cite{Sze}, and corresponds to first making a measurement according
$(\ket{\mu},\ket{\mu}^\perp)$. If the measurement outputs $\ket{\mu}$
we are done.  Otherwise we run the abstract search algorithm.

This choice of the initial state is motivated by the results in
Section~\ref{sec-quadratic} which relate quantum hitting time to
classical hitting time. All other results in 
this paper remain valid if we keep $\ket{\phi_0}$.

Ambainis \etal\ characterized the spectrum of $U$ in term of the
decomposition of $\ket{\mu}$ in the eigenvector basis of $U_2$. One of
their results is:
\begin{theorem}[\cite{AKR}]\label{akr0}
Let $U_2$ be a unitary
matrix with real entries and a unique $1$-eigenvector $\ket{\phi_0}$.
Let $\ket{\mu}$ be a unit vector with real entries, and let $U_1= I - 2\proj{\mu}$.
Let $U=U_2U_1$.
\begin{itemize}\reduceitem
\item If $\braket{\phi_0}{\mu}=0$, then $\ket{\tphi_0}=\ket{\phi_0}$ and $U\ket{\tphi_0}=\ket{\tphi_0}$.
\item If $\braket{\phi_0}{\mu}\neq 0$,
then
$U$ has no $1$-eigenspace.
\end{itemize}
\end{theorem}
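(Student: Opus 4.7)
For the first bullet, my argument is a one-line verification: when $\braket{\phi_0}{\mu}=0$, the definition of $\ket{\tphi_0}$ yields $\ket{\tphi_0}=\ket{\phi_0}$, and then $U_1\ket{\phi_0} = \ket{\phi_0} - 2\braket{\mu}{\phi_0}\ket{\mu} = \ket{\phi_0}$, so $U\ket{\tphi_0} = U_2\ket{\phi_0} = \ket{\phi_0} = \ket{\tphi_0}$. For the second bullet, my plan is to proceed by contradiction: suppose both $\braket{\phi_0}{\mu}\neq 0$ and $U\ket{\psi}=\ket{\psi}$ hold for some nonzero $\ket{\psi}$. I would rewrite $U_2U_1\ket{\psi}=\ket{\psi}$ as $U_1\ket{\psi}=U_2^{-1}\ket{\psi}$ and expand $U_1 = I - 2\proj{\mu}$ to obtain the key identity
\[
(I-U_2^{-1})\ket{\psi} \;=\; 2\braket{\mu}{\psi}\ket{\mu}.
\]
From here I would case-split on whether the coefficient $\braket{\mu}{\psi}$ vanishes.

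In the first case ($\braket{\mu}{\psi}=0$), the right-hand side vanishes, so $\ket{\psi}$ is a $1$-eigenvector of $U_2$. By the uniqueness hypothesis, $\ket{\psi}$ is proportional to $\ket{\phi_0}$; but then $\braket{\mu}{\psi}$ equals the same scalar times $\braket{\mu}{\phi_0}\neq 0$, which forces the scalar to be $0$, contradicting $\ket{\psi}\neq 0$. In the second case ($\braket{\mu}{\psi}\neq 0$), the displayed identity exhibits $\ket{\mu}$ as an element of the range of $I-U_2^{-1}$. Since the range of $I-U_2^{-1}$ is the orthogonal complement of the kernel of $(I-U_2^{-1})^{\dagger} = I-U_2$, and that kernel is spanned by $\ket{\phi_0}$, I conclude $\braket{\phi_0}{\mu}=0$, again contradicting the hypothesis.

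I do not anticipate any substantive obstacle: the proof is a short algebraic manipulation followed by a standard range/kernel orthogonality argument for unitary (hence normal) operators. The only place where one must be careful is in correctly identifying $\mathrm{Range}(I-U_2^{-1})^{\perp} = \mathrm{Kernel}(I-U_2)$, which relies on the unitarity of $U_2$ through $(U_2^{-1})^{\dagger} = U_2$. Neither the realness of $U_2$ nor the realness of $\ket{\mu}$ seems essential for this particular statement, which is encouraging as a sanity check on the plan.
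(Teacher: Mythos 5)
The paper states this as a result of Ambainis et al.\ \cite{AKR} and does not supply a proof, so there is no in-paper argument to compare against; your proof must be assessed on its own merits, and it is correct. The first bullet is a direct computation. For the second bullet, the case split is clean: when $\braket{\mu}{\psi}=0$ the identity $(I-U_2^{-1})\ket{\psi}=2\braket{\mu}{\psi}\ket{\mu}$ forces $U_2\ket{\psi}=\ket{\psi}$, and uniqueness of the $1$-eigenvector together with $\braket{\phi_0}{\mu}\neq 0$ gives the contradiction; when $\braket{\mu}{\psi}\neq 0$ the same identity shows $\ket{\mu}\in\mathrm{Range}(I-U_2^{-1})$, and since $(I-U_2^{-1})^\dagger=I-U_2$ by unitarity, the duality $\mathrm{Range}(A)=\mathrm{Ker}(A^\dagger)^\perp$ gives $\braket{\phi_0}{\mu}=0$, again a contradiction. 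One small imprecision: that range/kernel orthogonality holds for every linear operator on a finite-dimensional inner-product space, not just normal ones; normality (here unitarity) is used only to compute $(U_2^{-1})^\dagger=U_2$. Your observation that the realness hypotheses are dispensable for this theorem is accurate --- they are used later (e.g., to pair eigenvalues into complex-conjugate pairs in Lemma~\ref{lemma-akr-tulsi}), but not here.
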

Thus one can use $U$ in order to detect if $\braket{\phi_0}{\mu}\neq
0$.  Indeed, in that case, after a certain number $T$ of iterations of
$U$ on $\ket{\tphi_0}$, the state moves far from the intial state
$\ket{\tphi_0}$.  Such a deviation caused by some operator $V$ (in our
case $V=U^T$,
i.e., $U$ iterated~$T$ times) is usually detected by using the well known {\em control
  test\/} which requires the use the controlled operator $\text{c-}V$.
Namely observe that $(H\otimes I)(\text{c-}V)(H\otimes I)
\ket{0}\ket{\psi}= \frac{1}{2}\ket{0}(\ket{\psi}+V\ket{\psi}) +
\frac{1}{2}\ket{1}(\ket{\psi}-V\ket{\psi})$.  Therefore 
a measurement of the first register gives outcome $1$ with probability
$\norm{\ket{\psi}-V\ket{\psi}}^2/4$.

Szegedy~\cite{Sze} designed a generic method for constructing an
abstract search algorithm given a (classical) Markov chain.  Let
$P=(p_{xy})$ be an ergodic
Markov chain over state space $X = \{1, \ldots , n\}$ with stationary
distribution $\ket{\pi}$.  The time-reversal~$P^*$ of this chain is
defined by equations $\pi_y p^*_{yx}=\pi_x p_{xy}$.
The chain $P$ is reversible if $P=P^*$.

The quantum analogue of $P$ may be thought of as a walk on the {\em edges\/} of the
original Markov chain, rather than on its vertices. Thus, its state space
is a vector subspace of~$\aitch = \complex^{X \times X}$.
For a state~$\ket{\psi} \in \aitch$,
let~$\Pi_\psi = \proj{\psi}$ denote the orthogonal projector
onto~$\Span(\ket{\psi})$, and let $\reflex(\psi)= 2 \Pi_\psi - \id $
denote the reflection through the line generated by $\ket{\psi}$,
where~$\id$ is the identity operator on~${\aitch}$.
If $\kay$ is a subspace of $\aitch$ spanned by a set of mutually orthogonal
states $\{\ket{\psi_i} : i \in I\}$, then let
$\Pi_{\kay} = \sum_{i \in I} \Pi_{\psi_i} $ be the orthogonal projector
onto $\kay$, and let 
$\reflex(\kay) = 2 \Pi_{\kay} - \id $
be the reflection through $\kay$.
Let $\ay=\Span( \ket{x}\ket{p_x} : x\in X)$ and 
$\bee=\Span( \ket{p^*_y}\ket{y} : y\in X)$ be vector subspaces
of~$\aitch$, 
where
\[
\ket{p_x} \quad = \quad \sum_{y\in X} \sqrt{p_{xy}} \, \ket{y}
\quad\textrm{and} \quad 
\ket{p^*_y} \quad = \quad \sum_{x\in X} \sqrt{p^*_{yx}} \,
\ket{x}.
\]
Define similarly for any $z\in X$ the subspaces $\ay_{-z}=\Span(
\ket{x}\ket{p_x} : x\in X\setminus\{z\})$ and $\bee_{-z}=\Span(
\ket{p^*_y}\ket{y} : y\in X\setminus\{z\})$.

\begin{definition}[\cite{Sze,MNRS}]
\label{def-qmc}
  Let $P$ be an ergodic Markov chain.  The unitary operation $W(P) =
  \reflex(\bee) \cdot \reflex(\ay)$ defined on $\aitch$ is called the
  {\em quantum analogue} of~$P$; and the unitary operation $W(P,z) =
  \reflex(\bee_{-z}) \cdot \reflex(\ay_{-z})$ defined on $\aitch$ is
  called the {\em quantum analogue} of~$P_{-z}$.
\end{definition}

The unitary operation $\mathsf{SWAP}$ is defined by
$\mathsf{SWAP}\ket{x}\ket{y}=\ket{y}\ket{x}$. 
When $P$ is
reversible, the connection between the quantum walk of Szegedy and the
quantum walk of Ambainis \etal\ is made explicit by the following
fact.
\begin{fact}
  \label{fact-akr-sze}
Let $z\in X$ and $\ket{\mu}=\ket{z}\ket{p_z}$.  Let $U_2=\mathsf{SWAP}
\cdot \reflex(\ay)$ and $U_1=I-2\proj{\mu}$.  If $P$ is reversible
then $(U_2 U_1)^2=W(P,z)$. In particular, the unitary operators~$U =
U_2 U_1$ and~$W(P,z)$ are diagonal in the same orthonormal basis.
\end{fact}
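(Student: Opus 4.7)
The plan is to first simplify $U_2 U_1$ by exploiting that $\ket{\mu}\in\ay$, and then use reversibility together with the $\mathsf{SWAP}$ structure of $U_2$ to recognize $(U_2U_1)^2$ as the product of reflections defining $W(P,z)$.

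First, I would observe that $\ket{\mu}=\ket{z}\ket{p_z}$ is precisely one of the spanning vectors of $\ay$ and is orthogonal to all the others, so $\ay = \ay_{-z}\oplus\Span(\ket{\mu})$. This yields the operator identity $\reflex(\ay)=\reflex(\ay_{-z})+2\proj{\mu}$. Since $\ket{\mu}\perp\ay_{-z}$, we also have $\reflex(\ay_{-z})\ket{\mu}=-\ket{\mu}$. A short calculation combining these with $U_1=I-2\proj{\mu}$ collapses $\reflex(\ay)\cdot U_1$ to exactly $\reflex(\ay_{-z})$. Consequently,
\[
U_2 U_1 \;=\; \mathsf{SWAP}\cdot\reflex(\ay)\cdot U_1 \;=\; \mathsf{SWAP}\cdot\reflex(\ay_{-z}).
\]

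Next, I would use the hypothesis that $P$ is reversible: detailed balance gives $p^*_{yx}=p_{yx}$, hence $\ket{p^*_y}=\ket{p_y}$ for every $y\in X$. So $\mathsf{SWAP}$ carries each basis vector $\ket{x}\ket{p_x}$ of $\ay_{-z}$ to $\ket{p_x}\ket{x}\in\bee_{-z}$, showing that $\mathsf{SWAP}$ sends $\ay_{-z}$ onto $\bee_{-z}$ as subspaces. Since $\mathsf{SWAP}$ is self-inverse, conjugating the reflection gives $\mathsf{SWAP}\cdot\reflex(\ay_{-z})\cdot\mathsf{SWAP}=\reflex(\bee_{-z})$. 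Squaring the simplified expression above then yields
\[
(U_2U_1)^2 \;=\; \mathsf{SWAP}\,\reflex(\ay_{-z})\,\mathsf{SWAP}\,\reflex(\ay_{-z}) \;=\; \reflex(\bee_{-z})\,\reflex(\ay_{-z}) \;=\; W(P,z),
\]
which is the first claim.

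For the final assertion, because $W(P,z)=U^2$ where $U=U_2U_1$ is unitary, any orthonormal eigenbasis of $U$ is automatically an orthonormal eigenbasis of $U^2$, so $U$ and $W(P,z)$ are simultaneously diagonal. The only step requiring genuine care is the passage $\ket{p^*_y}=\ket{p_y}$, where one must verify that reversibility makes the spanning sets of $\ay$ and $\bee$ actually mapped onto one another by $\mathsf{SWAP}$ rather than merely being isomorphic under a more complicated correspondence; but detailed balance rules out any such subtlety, so I expect no real obstacle beyond careful bookkeeping.
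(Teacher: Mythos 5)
Your proof is correct; the paper states this fact without giving a proof, so there is nothing to compare against beyond verifying the argument stands on its own. The key identity $\reflex(\ay)\cdot U_1=\reflex(\ay_{-z})$ indeed follows from $\Pi_{\ay}=\Pi_{\ay_{-z}}+\proj{\mu}$ together with $\reflex(\ay_{-z})\proj{\mu}=-\proj{\mu}$: expanding $(\reflex(\ay_{-z})+2\proj{\mu})(\id-2\proj{\mu})$ the cross terms cancel exactly as you describe. Reversibility $\pi_y p_{yx}=\pi_x p_{xy}$ gives $p^*_{yx}=p_{yx}$ and hence $\ket{p^*_y}=\ket{p_y}$, so $\mathsf{SWAP}$ maps the orthonormal spanning set of $\ay_{-z}$ bijectively onto that of $\bee_{-z}$, giving $\mathsf{SWAP}\,\reflex(\ay_{-z})\,\mathsf{SWAP}=\reflex(\bee_{-z})$ by unitary conjugation of a reflection. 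Squaring then yields $W(P,z)=\reflex(\bee_{-z})\reflex(\ay_{-z})$ as defined, and the simultaneous diagonalizability follows since any orthonormal eigenbasis of the unitary $U$ diagonalizes $U^2$. This is the natural argument and fills in the step the paper leaves implicit.
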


\subsection{Phase estimation and quantum hitting time}
\label{sec-algorithms}

Let $U$ be a unitary matrix with real entries.  The potential
eigenvalues of $U$ are then $1$, $ -1$, and pairs of conjugate complex
numbers $(\e^{\complexi\alpha_j},\e^{-\complexi\alpha_j})$ 
with $0 < \alpha_j < \pi$, for $1 \leq j \leq J$, for some~$J$.

Let $\ket{\psi}$ be a vector with real entries and of norm at most one.
Then $\ket{\psi}$ uniquely decomposes as
\begin{eqnarray}
\label{eqn-decomp}
\ket{\psi} & = & \delta_0\ket{w_0}+\sum_{1 \leq j \leq J} 
\delta_j (\ket{w_j^+} +  \ket{w_{j}^-})+\delta_{-1}\ket{w_{-1}},
\end{eqnarray}
where $\delta_0,\delta_{-1},\delta_j$ are reals, $\ket{w_0}$ is a unit
eigenvector of~$U$ with eigenvalue $1$, $\ket{w_{-1}}$ is a unit
eigenvector with eigenvalue $-1$, and $\ket{w_j^+}, \ket{w_{j}^-}$ are
unit eigenvectors with respective eigenvalues $\e^{\complexi\alpha_j}$
and $\e^{-\complexi\alpha_j}$, and
$\ket{w_{j}^-}=\overline{\ket{w_j^+}}$.

\suppress{
\begin{definition}
The {\em $U$-flip} of $\ket{\psi}$ is defined as 
$2\delta_0\ket{w_0}-\ket{\psi} 
$.
\end{definition}

\begin{fact}
Let $U=U_2U_1$, where $U_1=I-2\proj{\mu}$ is an abstract search algorithm.
If $\braket{\phi_0}{\mu}=0$, then the $U$-flip of $\ket{\tphi_0}$ is $\ket{\tphi_0}$.
If $\braket{\phi_0}{\mu}\neq 0$, then the $U$-flip of $\ket{\tphi_0}$ is $-\ket{\tphi_0}$.
\end{fact}
}

We now describe a procedure whose purpose is to detect the
state~$\ket{\psi}$ has a component orthogonal to the~$1$-eigenspace
of~$U$. In the context of the abstract search algorithm, this is
equivalent to~$\braket{\phi_0}{\mu}\neq 0$.
The idea, similar to the approach of~Magniez \etal~\cite{MNRS}, 
is to apply the phase estimation algorithm of 
Kitaev~\cite{Kitaev96} and Cleve \etal~\cite{CleveEMM98} to
$U$.  

\begin{theorem}[\cite{Kitaev96,CleveEMM98}]
\label{thm-diffusion1}
Given an eigenvector~$\ket{v}$ of a unitary operator~$U$ with
eigenvalue $\e^{\complexi\alpha}$, the corresponding phase
$\alpha \in (-\pi,\pi]$ can be determined with precision $\Delta$ and
error probability at most $1/3$ by
a circuit $\mathbf{Estimate}$.
If $\ket{v}$ is a $1$-eigenvector of $U$,
then  $\mathbf{Estimate}$ determines $\alpha=0$ with probability $1$.
Moreover, $\mathbf{Estimate}$ makes $\Order(1/\Delta)$ calls to the
controlled operator $\text{c-}U$ and its inverse, and it contains
$\Order((\log{1}/{\Delta})^2)$ additional gates.
\end{theorem}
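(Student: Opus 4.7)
The plan is to present the standard phase estimation circuit of Kitaev~\cite{Kitaev96} and Cleve \etal~\cite{CleveEMM98} and verify the three claimed properties in turn. Fix $m = \lceil \log_2(1/\Delta) \rceil + c$ for a small absolute constant~$c$ to be chosen later. The procedure $\mathbf{Estimate}$ uses an $m$-qubit ancilla register initialized to $\ket{0}^{\otimes m}$. First I would apply Hadamards to all ancilla qubits to obtain $\frac{1}{\sqrt{2^m}}\sum_{k=0}^{2^m-1}\ket{k}$, then apply the controlled powers $\text{c-}U^{2^j}$ for $j = 0, \ldots, m-1$ (each implementable by $2^j$ calls to $\text{c-}U$ or, if needed, to its inverse), and finally apply the inverse quantum Fourier transform on the ancilla and measure, reading off an integer $k \in \{0, \ldots, 2^m-1\}$. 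The output is $\alpha = 2\pi k / 2^m$, reinterpreted in $(-\pi,\pi]$ via a signed identification.

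On an eigenvector $\ket{v}$ with eigenvalue $\e^{\complexi\alpha}$, the controlled powers produce the state $\frac{1}{\sqrt{2^m}}\sum_{k=0}^{2^m-1}\e^{\complexi\alpha k}\ket{k}\ket{v}$, and the inverse QFT yields the well-known outcome distribution
\[
\Pr[k] \quad = \quad \left| \frac{1}{2^m}\sum_{\ell=0}^{2^m-1}\e^{2\pi\complexi\ell\, (\alpha/(2\pi) - k/2^m)} \right|^2
\]
on the ancilla. A standard Dirichlet-kernel computation shows this distribution concentrates around the integer nearest $2^m\alpha/(2\pi)$, with the probability that $|2\pi k/2^m - \alpha| > \Delta$ decaying like the reciprocal of the squared gap. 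Taking $c$ to be a few extra ancilla bits suffices to push the total failure probability below $1/3$.

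For a $1$-eigenvector of~$U$, $U^{2^j}\ket{v} = \ket{v}$ for every~$j$, so the ancilla register stays in the uniform superposition $\frac{1}{\sqrt{2^m}}\sum_k \ket{k}$ throughout the controlled-power stage. The inverse QFT therefore maps it exactly back to $\ket{0}^{\otimes m}$, and the measurement deterministically returns $k=0$, so $\alpha=0$ is output with probability~$1$. Counting resources: the controlled powers use $\sum_{j=0}^{m-1} 2^j = 2^m-1 = \Order(1/\Delta)$ calls to~$\text{c-}U$ and its inverse, while the inverse QFT on $m$ qubits contributes $\Order(m^2) = \Order((\log 1/\Delta)^2)$ additional Hadamard and controlled-phase gates, matching the claimed complexity.

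The only nontrivial step is the Dirichlet-kernel tail estimate used to show that a constant number of extra ancilla bits beyond $\log_2(1/\Delta)$ really does drive the error below $1/3$; everything else is circuit bookkeeping that follows directly from~\cite{Kitaev96,CleveEMM98}.
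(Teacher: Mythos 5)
The paper does not actually prove this statement: it is stated as a theorem attributed to \cite{Kitaev96,CleveEMM98} and used as a black box, with no proof supplied in the text. Your reconstruction is therefore not ``the same route as the paper'' or ``a different route'' in any meaningful sense --- it is simply the standard phase-estimation argument from the cited sources, which the authors chose not to reproduce.

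That said, your sketch is correct as far as it goes. The circuit layout (Hadamards, controlled powers $\text{c-}U^{2^j}$, inverse QFT, measurement), the Dirichlet-kernel concentration bound, and the resource accounting ($\sum_{j=0}^{m-1}2^j = 2^m-1 = \Order(1/\Delta)$ uses of $\text{c-}U$ and $\Order(m^2)$ gates for the inverse QFT on $m$ qubits) are all the textbook argument. Your treatment of the $1$-eigenvector case --- that the controlled powers act trivially so the ancilla stays uniform and the inverse QFT returns it exactly to $\ket{0}^{\otimes m}$, giving outcome $0$ with certainty --- is exactly the observation the paper relies on in Lemma~\ref{lemma-hitting}. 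Two small things you have deliberately left at the ``it is standard'' level: (i) the tail bound needs the quantitative form (failure probability at most $1/(2(t-1))$ when allowing $t$ units of slack in the rounded estimate), so that a fixed constant $c$ of extra ancilla bits pushes the error below $1/3$; and (ii) one must relate the precision in the fractional phase $k/2^m$ to precision $\Delta$ in $\alpha\in(-\pi,\pi]$, which costs another absorbed constant factor of $2\pi$. Neither is a gap --- they are exactly what the citation is meant to cover --- but a fully self-contained proof would spell them out.
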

Based on the circuit \textbf{Estimate}, we can detect the presence of
components orthogonal to the~$1$-eigenspace in
an arbitrary state $\ket{\psi}$.
\encadre{
$\algod(U,\Delta,\eps)$  \quad --- \quad
Input: $\ket{\psi}$
\begin{enumerate}
\item\label{s1}
Apply $ \Theta(\log(1/\eps))$ times
the phase estimation circuit $\mathbf{Estimate}$ for $U$ with precision $\Delta$
to the same state $\ket{\psi}$.
\item If at least one of the estimated phases is nonzero, 
\texttt{ACCEPT}.\\
Otherwise \texttt{REJECT}.
\end{enumerate}
}

Let ${\mathit{QH}}$ be the random variable which takes the value $1/\alpha_j$ with probability
$2\delta_j^2$,  the value $1/\pi$ with probability $\delta_{-1}^2$, 
and
the value~$0$ otherwise. 
Observe that in the following lemma, and in the analysis of all our algorithms, 
the probabilities in fact sum to $\norm{\psi}^2$,
since $\ket{\psi}$ is not necessarily normalized, and has norm at most~$1$.
\begin{lemma}\label{lemma-hitting}
Assume that
$\Pr [ \mathit{QH} > 1/\Delta ] \leq \eps$.
Then the procedure
$\algod(U,\Delta,\eps)$ accepts $\ket{\psi}$ with probability $\norm{\psi}^2-\delta_0^2-\Order(\eps)$,
and moreover with probability $0$ if $\abs{\delta_0}=\norm{\psi}$.
\suppress{
 $\ket{\psi}$ to a state 
at Euclidean distance
$\Order(\sqrt{\eps}\;)$ 
from the $U$-flip of $\ket{\psi}$.}
In addition, the number of applications of $\text{c-}U$ 
is
$ \Order(\log(1/\eps)/\Delta)$.
\end{lemma}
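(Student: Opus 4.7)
The plan is to decompose $\ket{\psi}$ according to~(\ref{eqn-decomp}) and analyze the $\Theta(\log(1/\eps))$ sequential invocations of $\mathbf{Estimate}$ componentwise in the eigenbasis of~$U$. Let $q(\alpha) = \Pr[\mathbf{Estimate}(U,\Delta) \text{ returns }0 \mid \text{input is an eigenvector of phase }\pm\alpha]$. I would argue that on input $\ket{\psi}$ the joint probability that every one of the $k$ invocations returns~$0$ factors cleanly as
\[
\Pr[\text{all zero}] \;=\; \delta_0^2 + \sum_{j=1}^J 2\delta_j^2\, q(\alpha_j)^k + \delta_{-1}^2\, q(\pi)^k.
\]

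Theorem~\ref{thm-diffusion1} then gives $q(0)=1$ exactly, and $q(\alpha)\leq 1/3$ whenever $|\alpha|\geq\Delta$ (including $\alpha=\pi$, i.e.\ the $\ket{w_{-1}}$ component), since the value~$0$ lies outside the $\Delta$-confidence window around any such~$\alpha$. For phases $0<\alpha_j<\Delta$ I would use only the trivial bound $q(\alpha_j)\leq 1$, but the hypothesis $\sum_{j:\,\alpha_j<\Delta}2\delta_j^2 = \Pr[\mathit{QH}>1/\Delta]\leq\eps$ limits the total mass of these ``small-phase'' components. Combining these estimates yields
\[
\Pr[\text{all zero}] \;\leq\; \delta_0^2 + \eps + (\|\psi\|^2-\delta_0^2)\,(1/3)^k,
\]
and choosing the implicit constant in $k=\Theta(\log(1/\eps))$ so that $(1/3)^k\leq\eps$ bounds the right-hand side by $\delta_0^2+O(\eps)$. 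Since the outcome probabilities associated with the (subnormalized) input~$\ket{\psi}$ sum to~$\|\psi\|^2$, the acceptance probability equals $\|\psi\|^2-\Pr[\text{all zero}]\geq\|\psi\|^2-\delta_0^2-O(\eps)$, as desired.

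The boundary case $|\delta_0|=\|\psi\|$ collapses the decomposition to $\ket{\psi}=\delta_0\ket{w_0}$, and the ``probability~$1$'' clause of Theorem~\ref{thm-diffusion1} forces every invocation of $\mathbf{Estimate}$ to output~$0$ with certainty, so $\algod$ accepts with probability exactly~$0$. The gate count is immediate: $\Theta(\log(1/\eps))$ invocations of $\mathbf{Estimate}$, each making $O(1/\Delta)$ calls to $\text{c-}U$, for a total of $O(\log(1/\eps)/\Delta)$ such calls.

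The main point requiring care is the clean product formula for $\Pr[\text{all zero}]$: a finite-precision $\mathbf{Estimate}$ is not an exact projective measurement onto the eigenspaces of~$U$, and sequential invocations could \emph{a priori} exhibit interference between eigencomponents. The resolution is that the POVM elements of $\mathbf{Estimate}$ are polynomials in~$U$ and~$U^{\adjoint}$ acting on the input register, hence commute with~$U$ and are diagonal in its eigenbasis. Consequently the off-diagonal entries of the input's density matrix in that basis are irrelevant to the outcomes of all future invocations, and the sequential analysis reduces to a classical Markov chain on the eigen-index, where the factorization above is an exact identity.
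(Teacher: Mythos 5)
Your proof is correct and follows the same high-level plan as the paper's: decompose $\ket{\psi}$ in the eigenbasis of $U$, use the three cases (phase $0$, phases in $(0,\Delta)$ controlled by the hypothesis $\Pr[\mathit{QH}>1/\Delta]\leq\eps$, and phases $\geq\Delta$ driven down by repetition), and read off the acceptance probability. Where you go beyond the paper is in the final paragraph: the paper silently applies the $\mathbf{Estimate}$ error bound componentwise across repeated invocations without explaining why the sequential runs decouple across eigencomponents. Your observation that the Kraus/POVM elements of $\mathbf{Estimate}$ are polynomials in $U$, hence commute with $U$ and are simultaneously diagonal in its eigenbasis, is exactly the missing justification for the product formula $\Pr[\text{all zero}] = \delta_0^2 + \sum_j 2\delta_j^2\,q(\alpha_j)^k + \delta_{-1}^2\,q(\pi)^k$ (together with $q(\alpha)=q(-\alpha)$, which follows since $\langle w_\alpha|M_0^\dagger M_0|w_\alpha\rangle$ depends only on $|\alpha|$). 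That is a genuine tightening of the argument, and the rest of your accounting matches the paper's.
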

\begin{proof}
Let us first assume that $\mathbf{Estimate}$ can compute the eigenphase of any eigenvector
with certainty. This assumption is in fact valid when $\abs{\delta_0}=\norm{\psi}$.
Then the procedure $\algod$ rejects exactly with probability $\delta_0^2$.

Assume now that $\Pr [ \mathit{QH} > 1/\Delta ] \leq \eps$, and $\abs{\delta_0}<\norm{\psi}$.
First observe that $\mathbf{Estimate}$ with precision $\Delta$ uses $1/\Delta$ applications of $\text{c-}U$.
Then the precision $\Delta$ in $\mathbf{Estimate}$
ensures a nonzero approximation of an eigenphase $\pm\alpha_j$
with probability at least~$2/3$ provided that $\alpha_j\geq \Delta$.
By hypothesis, the contribution of these eigenphases has squared Euclidean norm
$2\sum_j\delta_j^2$.
The success probability is then amplified to $1-\Order(\eps)$ by checking
that all the $\Order(\log (1/\eps))$ outcomes of $\mathbf{Estimate}$ are nonzero.
For the special case of eigenphase $0$, whose contribution has squared
Euclidean norm $\delta_0^2$,
$\mathbf{Estimate}$ gives
approximation~$0$ with probability~$1$.

The contribution of the other eigenphases has squared Euclidean norm
less than $\eps$ in the vector $\ket{\psi}$. 
Therefore the overall acceptance probability is
$\norm{\psi}^2-\delta_0^2 - \Order({\eps})$.
\end{proof}

In the case of quantum walk, the above theorem justifies the following definitions
of quantum hitting times.
Let $U$ be some abstract search $U_2 U_1$, where 
$U_1=I-2\proj{\mu}$,
starting from state $\ket{\tphi_0} = \ket{\phi_0} - a_0 \ket{\mu}$,
where~$a_0 = \braket{\mu}{\phi_0}$. We now set $\ket{\psi}=\ket{\tphi_0}$.
Again,
$\mathit{\mathit{QH}}$ is the random variable which takes the value $1/\alpha_j$ with probability $2\delta_j^2$,  the value $1/\pi$ with probability $\delta_{-1}^2$, 
and $0$ otherwise. 

\begin{definition}\label{qhtdef}
The {\em quantum $\ket{\mu}$-hitting time} of $U_2$ is the expectation of $\mathit{QH}$, that is
$$ \qht(U_2,\ket{\mu}) \quad = \quad  2\sum_j \frac{\delta_j^2 }{\alpha_j}+ \frac{\delta_{-1}^2}{\pi}.$$
For $0 < \eps\ <1$, the {\em quantum $\eps$-error 
$\ket{\mu}$-hitting time} of $U_2$ is defined as
$$\qht_{\eps}(U_2,\ket{\mu}) \quad  = \quad  
\min \{y \;:\; \Pr [ \mathit{QH} > y ] \leq \eps \}.$$
\end{definition}

Using Theorem~\ref{akr0}, Lemma~\ref{lemma-hitting} and our definition of quantum hitting time,
 we directly get:
\begin{theorem}\label{detecting}
For every $T\geq\max\set{1,\qht_\eps(U_2,\ket{\mu})}$, the procedure
$\algod(U,1/T,\eps)$ accepts $\ket{\tphi_0}$ with probability $\norm{\tphi_0}^2-\Order(\eps)$
if $\braket{\phi_0}{\mu}\neq 0$, and
accepts with probability $0$ otherwise.
\suppress{maps $\ket{\tphi_0}$ to a state 
at Euclidean distance
$\Order(\sqrt{\eps}\;)$ 
from the $U$-flip of $\ket{\tphi_0}$.}
Moreover the number of applications of $\text{c-}U$ 
is
$ \Order(\log(1/\eps)\times T)$.
\end{theorem}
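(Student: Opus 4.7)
The plan is to reduce the statement directly to Lemma~\ref{lemma-hitting}, using the case analysis of Theorem~\ref{akr0} to determine the structure of the decomposition~\eqref{eqn-decomp} of $\ket{\tphi_0}$ in the eigenbasis of $U$. Setting $\Delta = 1/T$, the procedure $\algod(U,1/T,\eps)$ repeatedly calls $\mathbf{Estimate}$ with precision $\Delta$ on the state $\ket{\tphi_0}$, so everything will follow once the hypotheses of Lemma~\ref{lemma-hitting} are checked in each case.

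First, I would handle the case $\braket{\phi_0}{\mu} = 0$. Here Theorem~\ref{akr0} asserts that $\ket{\tphi_0} = \ket{\phi_0}$ is itself a $1$-eigenvector of $U$, so in the decomposition~\eqref{eqn-decomp} we have $\abs{\delta_0} = \norm{\tphi_0}$ and every other coefficient vanishes. The second clause of Lemma~\ref{lemma-hitting}, which treats exactly the situation $\abs{\delta_0} = \norm{\psi}$, then immediately gives acceptance probability~$0$, as required.

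Next I would treat the case $\braket{\phi_0}{\mu} \neq 0$. Theorem~\ref{akr0} now tells us that $U$ has no $1$-eigenspace, which forces $\delta_0 = 0$ in the decomposition of $\ket{\tphi_0}$. To apply Lemma~\ref{lemma-hitting} I must verify its hypothesis $\Pr[\mathit{QH} > 1/\Delta] \leq \eps$. With $\Delta = 1/T$ this is the condition $\Pr[\mathit{QH} > T] \leq \eps$, which follows directly from Definition~\ref{qhtdef} (the minimum $y$ with $\Pr[\mathit{QH} > y] \leq \eps$ is $\qht_\eps(U_2,\ket{\mu})$) combined with the hypothesis $T \geq \qht_\eps(U_2,\ket{\mu})$. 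The key consistency observation is that the random variable $\mathit{QH}$ defining $\qht_\eps$ is built from the $\delta_j$ of exactly the decomposition~\eqref{eqn-decomp} used in Lemma~\ref{lemma-hitting}. Lemma~\ref{lemma-hitting} then yields acceptance probability $\norm{\tphi_0}^2 - \delta_0^2 - \Order(\eps) = \norm{\tphi_0}^2 - \Order(\eps)$, since $\delta_0 = 0$.

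Finally, the bound on the number of controlled calls transfers verbatim from Lemma~\ref{lemma-hitting}: with $\Delta = 1/T$ it gives $\Order(\log(1/\eps)/\Delta) = \Order(\log(1/\eps) \cdot T)$ applications of $\text{c-}U$. There is no substantive obstacle in this proof; the theorem is essentially the specialization of Lemma~\ref{lemma-hitting} to the abstract-search setting, with Theorem~\ref{akr0} deciding which of the two cases of that lemma applies according to whether $\braket{\phi_0}{\mu}$ vanishes.
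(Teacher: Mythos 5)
Your proof is correct and follows exactly the route the paper intends: the paper's own ``proof'' is just the one-line remark that the theorem follows directly from Theorem~\ref{akr0}, Lemma~\ref{lemma-hitting}, and Definition~\ref{qhtdef}, and you have filled in that derivation faithfully, including the case split on $\braket{\phi_0}{\mu}$ and the verification that $T\geq\qht_\eps(U_2,\ket{\mu})$ gives $\Pr[\mathit{QH}>1/\Delta]\leq\eps$.
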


If one would like to deal only with normalized states,
and to come back to the original starting state $\ket{\phi_0}$,
we can encapsulate the projection to the space
orthogonal to~$\ket{\mu}$  into our algorithm such as in the following main procedure,
and deduce its behavior from the above theorem.

\encadre{
$\malgod(U_2,\ket{\mu},\Delta,\eps)$  \quad --- \quad
Input: $\ket{\psi}$
\begin{enumerate}
\item Make a measurement according
$(\ket{\mu},\ket{\mu}^\perp)$.
\item If the measurement outputs $\ket{\mu}$, \texttt{ACCEPT}.\\
Otherwise apply $\algod(U,\Delta,\eps)$.
\end{enumerate}
}
\begin{corollary}
For every $T\geq\max\set{1,\qht_\eps(U_2,\ket{\mu})}$, the procedure
$\malgod(U_2,\ket{\mu},1/T,\eps)$ accepts $\ket{\phi_0}$ with probability $1-\Order(\eps)$
if $\braket{\phi_0}{\mu}\neq 0$, and
accepts with probability $0$ otherwise.
\suppress{maps $\ket{\tphi_0}$ to a state 
at Euclidean distance
$\Order(\sqrt{\eps}\;)$ 
from the $U$-flip of $\ket{\tphi_0}$.}
\end{corollary}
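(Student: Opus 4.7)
The plan is to reduce directly to Theorem~\ref{detecting} by carefully tracking the two branches created by the measurement in step~1 of $\malgod$. Writing $\ket{\phi_0} = a_0\ket{\mu} + \ket{\tphi_0}$ with $a_0 = \braket{\mu}{\phi_0}$ (real since $U_2$ has real entries and $\ket{\phi_0}$ may be chosen real), the initial measurement in $(\ket{\mu},\ket{\mu}^\perp)$ yields $\ket{\mu}$ with probability $a_0^2$, in which case the procedure accepts outright, and yields an outcome in $\ket{\mu}^\perp$ with probability $\norm{\tphi_0}^2 = 1-a_0^2$, collapsing the state to $\ket{\tphi_0}/\norm{\tphi_0}$ before $\algod(U,1/T,\eps)$ is applied.

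To combine the two branch contributions I would invoke the fact that Theorem~\ref{detecting}, through Lemma~\ref{lemma-hitting}, is phrased for sub-normalized inputs: its stated ``acceptance probability'' is really the norm squared of the accepting part of the output, which is exactly the joint probability that the second branch is selected \emph{and} $\algod$ then accepts. Assume $\braket{\phi_0}{\mu}\neq 0$. Then Theorem~\ref{akr0} rules out a $1$-eigenspace of $U$, so the $\delta_0$ term in the decomposition of $\ket{\tphi_0}$ vanishes, and Theorem~\ref{detecting} applied to the unnormalized $\ket{\tphi_0}$ says that this joint probability equals $\norm{\tphi_0}^2 - \Order(\eps)$. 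Summing with the first branch yields a total acceptance probability of $a_0^2 + \norm{\tphi_0}^2 - \Order(\eps) = 1 - \Order(\eps)$, since $\ket{\phi_0}$ is a unit vector.

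The converse case is immediate: if $\braket{\phi_0}{\mu} = 0$ then $a_0 = 0$ and $\ket{\tphi_0} = \ket{\phi_0}$, so the initial measurement deterministically outputs in $\ket{\mu}^\perp$ and the call to $\algod$ on $\ket{\phi_0}$ accepts with probability $0$ by the second half of Theorem~\ref{detecting}. The only subtle point in the whole argument is the sub-normalized interpretation of the acceptance probability in Theorem~\ref{detecting}; there is no substantive obstacle, and the corollary follows directly from the prior theorem together with the observation that $\ket{\phi_0}$ is a unit vector.
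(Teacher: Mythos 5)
Your proof is correct and matches the route the paper intends. The paper gives no explicit argument here—it merely says the corollary's behaviour is to be ``deduced from the above theorem''—and your unpacking (splitting on the two measurement outcomes, reading the sub-normalized acceptance probability in Theorem~\ref{detecting}/Lemma~\ref{lemma-hitting} as the joint probability of taking the second branch \emph{and} accepting, then observing that $a_0^2 + \norm{\tphi_0}^2 = 1$) is exactly that deduction, with the one subtle point (the sub-normalized convention) correctly identified and justified.
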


\suppress{Combining this theorem together with the control test in Section~\ref{ct}
and Theorem~\ref{akr0}, one can get the following simple algorithm
for detecting if $\braket{\phi_0}{\mu}\neq 0$.
\begin{corollary}\label{algodetect}
For every $T\geq\max(1,\qht_\eps(U_2,\ket{\mu})$, 
the control test applied on $\ket{\tphi_0}$ with the operator
$\algod(U,1/T,\eps)$
outputs
$0$ with probability $1-\Order(\eps)$, if $\braket{\phi_0}{\mu}= 0$;
$1$ with probability $1-\Order(\eps)$,  if $\braket{\phi_0}{\mu}\neq 0$.
\end{corollary}
}

When the abstract search is built on the quantum analogue of a
reversible Markov chain $P$ and $\ket{\mu}=\ket{z}\ket{p_z}$ for some
$z$, we use the following terminology:
\begin{itemize}\reduceitem
\item The {\em quantum $z$-hitting time of $P$} is 
$\qht (P,z) =  \qht(\mathsf{SWAP}\cdot\reflex(\ay), \ket{z}\ket{p_z})$;
\item For $0 < \eps\ <1$, the {\em quantum  $\eps$-error 
$z$-hitting time of $P$} 
is 
$\qht_{\eps}(P,z) = \qht_{\eps}(\mathsf{SWAP}\cdot\reflex(\ay), \ket{z}\ket{p_z})$.
\end{itemize}

With different, more technical arguments, Szegedy proved 
results similar to Theorem~\ref{detecting} albeit with the parameter
$\qht(P,z)$ for symmetric Markov chains:
\begin{theorem}[\cite{Sze}]
When $t$ is chosen uniformly at random in $\{1,2,\ldots,\lceil\qht(P,z)\rceil\}$,
then the expectation of the deviation $\norm{(W(P,z))^t\ket{\tphi_0}-\ket{\tphi_0}}$
is $\Omega(\norm{\tphi_0})$.
\end{theorem}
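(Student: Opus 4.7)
The plan is to compute the squared deviation in the eigenbasis of $U = U_2 U_1$ (where $U_2 = \mathsf{SWAP}\cdot\reflex(\ay)$ and $U_1 = I - 2\proj{\mu}$ with $\ket{\mu} = \ket{z}\ket{p_z}$), average over $t$, and then convert back to the $L^1$ bound. We focus on the non-trivial case $\braket{\phi_0}{\mu}\neq 0$ (otherwise $\ket{\tphi_0} = \ket{\phi_0}$ is already a $1$-eigenvector of $W(P,z)$ and there is nothing to show). By Theorem~\ref{akr0} this forces $\delta_0 = 0$ in the decomposition~\eqref{eqn-decomp}, and by Fact~\ref{fact-akr-sze} we have $W(P,z) = U^2$, so the two operators share the same eigenbasis. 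Since every $-1$-eigenvector of $U$ is a $+1$-eigenvector of $W$, expanding yields
\begin{equation*}
\norm{W(P,z)^t\ket{\tphi_0} - \ket{\tphi_0}}^2 \; = \; 8\sum_j \delta_j^2 \sin^2(t\alpha_j).
\end{equation*}

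Setting $T = \lceil \qht(P,z)\rceil$ and averaging over $t \in \{1,\ldots,T\}$, the task reduces to lower-bounding $\sum_j \delta_j^2 \cdot S_T(\alpha_j)$ where $S_T(\alpha) = \frac{1}{T}\sum_{t=1}^T \sin^2(t\alpha)$. I would first prove the auxiliary trigonometric lemma: $S_T(\alpha) \geq c$ for a universal constant $c>0$ whenever $T\alpha \geq 1$, by using the closed-form geometric sum $\sum_{t=1}^T \cos(2t\alpha) = O(1/\sin\alpha)$ to write $S_T(\alpha) \geq 1/2 - O(1/(T\alpha))$; the ``small $T\alpha$'' regime is handled separately via $\sin x \geq (2/\pi)x$ on $[0,\pi/2]$ and $\sum_{t\leq T} t^2 = \Theta(T^3)$. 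Calling an eigenphase \emph{fast} if $\alpha_j \geq 1/T$ and \emph{slow} otherwise, the fast phases contribute at least $\Omega\bigl(\sum_{j\text{ fast}} \delta_j^2\bigr)$ to the average squared deviation.

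To show this fast contribution is $\Omega(\norm{\tphi_0}^2)$, I apply a Markov-style estimate to the inequality $\qht \geq 2\sum_j \delta_j^2/\alpha_j$: for slow $j$ one has $2\delta_j^2 T \leq 2\delta_j^2/\alpha_j$, so $\sum_{\text{slow}}\delta_j^2 \leq \qht/(2T) \leq 1/2$, while $2\sum_j \delta_j^2 = \norm{\tphi_0}^2 - \delta_{-1}^2$. In the regime relevant to the theorem---where $\norm{\tphi_0} = \Theta(1)$ and the $\ket{w_{-1}}$-component is absent or dominated---this gives $\sum_{\text{fast}}\delta_j^2 = \Omega(\norm{\tphi_0}^2)$. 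Finally, since $\norm{W(P,z)^t\ket{\tphi_0} - \ket{\tphi_0}} \leq 2\norm{\tphi_0}$ uniformly in $t$, the elementary inequality $\E[X^2] \leq (\sup X)\cdot\E[X]$ converts the $L^2$ bound $\E_t \norm{W(P,z)^t\ket{\tphi_0} - \ket{\tphi_0}}^2 = \Omega(\norm{\tphi_0}^2)$ into the desired $\E_t \norm{W(P,z)^t\ket{\tphi_0} - \ket{\tphi_0}} = \Omega(\norm{\tphi_0})$.

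The main obstacle I anticipate is proving the trigonometric averaging lemma with constants sharp enough to absorb the slow phases cleanly; a secondary subtlety is the bookkeeping for $\delta_{-1}$, whose contribution inflates $\qht$ through the $\delta_{-1}^2/\pi$ term but produces no deviation under $W = U^2$, so the final step depends on this component being dominated---essentially the only nontrivial setting in which the theorem can hold.
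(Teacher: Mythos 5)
This theorem is not proved in the paper: it is stated as a citation of Szegedy's result~\cite{Sze}, immediately after a remark that Szegedy's arguments are ``different, more technical.'' So there is no in-paper proof to compare against, and I can only evaluate your sketch on its own merits. Your overall strategy---diagonalize in the $U$-eigenbasis, expand the squared deviation as $8\sum_j \delta_j^2 \sin^2(t\alpha_j)$, average over $t$ via a trigonometric lemma, and convert $L^2$ to $L^1$ using $\E[X] \geq \E[X^2]/\sup X$---is indeed the natural route and matches the flavor of Szegedy's analysis. The algebraic identity $\norm{W^t\tphi_0 - \tphi_0}^2 = 8\sum_j\delta_j^2\sin^2(t\alpha_j)$ is correct given Fact~\ref{fact-akr-sze} and $\delta_0 = 0$.

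However, the Markov step does not close. You argue $\sum_{\text{slow}}\delta_j^2 \leq \qht/(2T) \leq 1/2$, then claim $\sum_{\text{fast}}\delta_j^2 = \Omega(\norm{\tphi_0}^2)$. But $\sum_{\text{fast}}\delta_j^2 = \tfrac{1}{2}\norm{\tphi_0}^2 - \sum_{\text{slow}}\delta_j^2$, and since $\tfrac{1}{2}\norm{\tphi_0}^2 \leq \tfrac{1}{2}$ always (recall $\norm{\tphi_0}^2 = 1 - a_0^2 < 1$), the bound $\sum_{\text{slow}}\delta_j^2 \leq 1/2$ gives a lower bound of $\tfrac{1}{2}\norm{\tphi_0}^2 - \tfrac{1}{2} \leq 0$: vacuous even in your ``regime relevant to the theorem.'' The threshold $1/T$ with $T = \lceil\qht\rceil$ sits exactly on the Markov cutoff, so the crude dichotomy of fast/slow yields nothing; the argument needs an inequality that couples the fast and slow regimes (e.g.\ a Cauchy--Schwarz/H\"older trade-off between $\sum_j p_j\min(1,T\alpha_j)^2$ and $\sum_j p_j/\alpha_j$), or a reformulation in terms of the \emph{normalized} expectation $\qht/\norm{\tphi_0}^2$, under which a single-eigenphase sanity check does produce $T\alpha_1 = \Theta(1)$. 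As stated, your sketch would, in the one-eigenphase case with $\norm{\tphi_0}^2 = \eps$, give only an $\Omega(\eps^{5/2})$ deviation rather than $\Omega(\sqrt{\eps})$, so the quantitative conclusion is genuinely not reached.

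One clarification in your favor: your worry about $\delta_{-1}$ is moot in the paper's conventions. Section~2 assumes $P$ has nonnegative (in fact, after the $(P+I)/2$ replacement, positive) eigenvalues, so by the Claim at the start of that section $S_{-z}$ has spectrum in $(0,1)$, forcing $\theta_j < \pi/2$; hence $U$ has no $(-1)$-eigenvector and $\delta_{-1} = 0$ automatically. The real obstruction is not $\delta_{-1}$ but the Markov bookkeeping above.
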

Under certain
assumptions, Ambainis \etal~\cite{AKR} have a similar result in terms
of the smallest eigenphase of $U_2U_1$.

Suppose we wish to not only detect if~$\braket{\mu}{\phi_0} \neq 0$,
but also to map $\ket{\psi}$ to $\ket{\mu}$.  
Then we are led to a procedure different from $\algod$.
\suppress{
Indeed, the overlap between the $U$-flip of $\ket{\psi}$ and $\ket{\mu}$ is the same as the one
between $\ket{\psi}$ and $\ket{\mu}$.}
One possibility is to try to use $U$ in order to move $\ket{\psi}$ to
an orthogonal state that is closer to~$\ket{\mu}$.
\begin{definition}
The {\em $U$-rotation} of $\ket{\psi}$ is defined as
$  \delta_0\ket{w_0} +
    \sum_{j} \delta_j(\ket{w_j^+}-\ket{w_{j}^-}) +\delta_{-1}\ket{w_{-1}}
$,
where the decomposition of~$\ket{\psi}$ in terms of the orthonormal set
of eigenvectors~$\set{\ket{w_j^+},\ket{w_{j}^-}}$
of~$U$ is given by Eq.~(\ref{eqn-decomp}).
\end{definition}
\begin{fact}
If $\ket{\psi}$ is
orthogonal to both the $1$-eigenspace and the $(-1)$-eigenspace of $U$,
then the $U$-rotation of $\ket{\psi}$ is orthogonal to $\ket{\psi}$.
\end{fact}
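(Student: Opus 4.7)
The plan is to compute $\braket{\psi}{U\text{-rotation of }\psi}$ directly from the decomposition in Eq.~(\ref{eqn-decomp}) and observe that, under the orthogonality hypothesis, all surviving terms cancel in pairs.

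First, I would reduce the expression for $\ket{\psi}$. The hypothesis that $\ket{\psi}$ lies in the orthogonal complement of the $1$-eigenspace and the $(-1)$-eigenspace means $\delta_0 = 0$ and $\delta_{-1} = 0$ in the decomposition~(\ref{eqn-decomp}). So
\[
\ket{\psi} \;=\; \sum_{j} \delta_j \bigl(\ket{w_j^+} + \ket{w_j^-}\bigr),
\qquad
\text{$U$-rotation}(\ket{\psi}) \;=\; \sum_{j} \delta_j \bigl(\ket{w_j^+} - \ket{w_j^-}\bigr).
\]

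Next I would use the standard fact that eigenvectors of a unitary with distinct eigenvalues are orthogonal. Since the phases $\alpha_j \in (0,\pi)$ are distinct across $j$, the eigenvalues $\e^{\complexi \alpha_j}$ and $\e^{-\complexi \alpha_j}$ are all pairwise distinct (the condition $0 < \alpha_j < \pi$ guarantees in particular that $\e^{\complexi \alpha_j} \neq \e^{-\complexi \alpha_j}$). Hence $\braket{w_j^{\sigma}}{w_k^{\tau}} = \delta_{jk}\,\delta_{\sigma\tau}$ for $\sigma,\tau \in \{+,-\}$, where the $\delta$'s here denote Kronecker deltas.

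Taking the inner product and using these orthonormality relations,
\[
\bigl\langle \psi \,\big|\, \text{$U$-rotation}(\psi)\bigr\rangle
\;=\; \sum_{j,k} \delta_j \delta_k \bigl(\braket{w_j^+}{w_k^+} - \braket{w_j^+}{w_k^-} + \braket{w_j^-}{w_k^+} - \braket{w_j^-}{w_k^-}\bigr)
\;=\; \sum_j \delta_j^2 (1 - 0 + 0 - 1) \;=\; 0,
\]
which is the claimed orthogonality. There is no real obstacle here; the only point worth double-checking is the distinctness of the eigenvalues $\e^{\pm \complexi \alpha_j}$, which is built into the convention $0 < \alpha_j < \pi$ used in the decomposition~(\ref{eqn-decomp}).
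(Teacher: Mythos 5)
The paper states this as a \emph{Fact} with no accompanying proof, so there is no author argument to compare against. Your argument is correct and is the natural verification one would expect. The hypothesis zeroes out $\delta_0$ and $\delta_{-1}$ in Eq.~(\ref{eqn-decomp}), and the vectors $\set{\ket{w_j^+},\ket{w_j^-}}$ form an orthonormal set because they are eigenvectors of the unitary $U$ belonging to the $2J$ distinct eigenvalues $\e^{\pm\complexi\alpha_j}$ (the constraint $0<\alpha_j<\pi$ ensures $\e^{\complexi\alpha_j}\neq\e^{-\complexi\alpha_k}$ for all $j,k$, and the indices $j$ label distinct conjugate pairs). The inner product then telescopes to $\sum_j\delta_j^2(1-1)=0$, exactly as you compute. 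If one wanted to avoid invoking distinctness of the $\alpha_j$, one could also note that $\braket{w_j^-}{w_k^-}=\overline{\braket{w_j^+}{w_k^+}}$ (since $\ket{w_j^-}=\overline{\ket{w_j^+}}$) and that $\delta_j\delta_k$ is symmetric in $j,k$, so the $(+,+)$ and $(-,-)$ blocks cancel by symmetry while the cross blocks vanish by the $\alpha_j+\alpha_k\neq 0\pmod{2\pi}$ argument you already give; but this is a refinement, not a correction, and your proof as written is sound.
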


This operation can be implemented efficiently by the following procedure
with further assumptions on~$U$. Namely, we would like $U$ 
to avoid having any eigenvalue close to $-1$.
This is naturally the case when we consider the quantum analogue of a
reversible Markov chain whose eigenvalues are all positive.
\encadre{
$\algof(U,\Delta,\eps)$ \quad --- \quad
Input: $\ket{\psi}$
\begin{enumerate}
\item\label{ss1}
Apply $\Theta(\log(1/\eps))$ times
the phase estimation circuit $\mathbf{Estimate}$ for $U$ with precision $\Delta$
to the same state $\ket{\psi}$.
\item If the majority of estimated phases are negative\\
\null\quad Perform a Phase a Flip.\\
Otherwise do nothing.
\item Undo the Phase Estimations of Step~\ref{ss1}.
\end{enumerate}
}
\begin{theorem}\label{finding}
Assume that
all eigenvalues $e^{\complexi\alpha}$ of $U$ satisfy $\abs{\alpha}\leq\pi/2$.
Then for every $T\geq \qht_\eps(U_2,\ket{\mu})$, 
the procedure $\algof(U,1/T,\eps)$ maps $\ket{\tphi_0}$ to a state 
at Euclidean distance
$\Order(\sqrt{\eps}\;)$ 
from the $U$-rotation of $\ket{\tphi_0}$.
Moreover, the number of applications of $\text{c-}U$ 
is
$ \Order(\log(1/\eps)\times\qht_\eps(U_2,\ket{\mu}))$.
\end{theorem}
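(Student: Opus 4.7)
The plan is to track $\algof(U,1/T,\eps)$ eigenvector-by-eigenvector in the decomposition~(\ref{eqn-decomp}) of $\ket{\tphi_0}$. Under the standing hypothesis $|\alpha| \le \pi/2$, the operator~$U$ has no $-1$ eigenvalue, so $\delta_{-1}=0$ and the target $U$-rotation of $\ket{\tphi_0}$ is simply $\delta_0\ket{w_0} + \sum_j \delta_j(\ket{w_j^+} - \ket{w_j^-})$. Writing $U_{\mathrm{PE}}$ for the operator implementing Step~\ref{ss1} (the $K=\Theta(\log(1/\eps))$ parallel phase estimations acting on a fresh ancilla register initialized to $\ket{\bar 0}$) and $F$ for the majority-controlled phase flip of Step~2, the algorithm realizes $U_{\mathrm{PE}}^\dagger F U_{\mathrm{PE}}$ on $\ket{\bar 0}\otimes\ket{\tphi_0}$.

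For each eigenvector I would show:
\begin{itemize}\reduceitem
\item On $\ket{w_0}$: by Theorem~\ref{thm-diffusion1} each phase estimation returns the estimate~$0$ with probability~$1$, so the majority is never negative, $F$ acts trivially, and $U_{\mathrm{PE}}^\dagger F U_{\mathrm{PE}}\ket{\bar 0}\ket{w_0}=\ket{\bar 0}\ket{w_0}$ exactly.
\item On $\ket{w_j^{\pm}}$ with $\alpha_j\ge \Delta=1/T$: each phase estimation outputs an estimate of the correct sign with probability at least $2/3$, and Chernoff-style amplification over $K$ independent copies makes the majority sign wrong with probability $O(\eps)$. Hence $U_{\mathrm{PE}}\ket{\bar 0}\ket{w_j^\pm}=\ket{g_j^\pm}\ket{w_j^\pm}+\ket{b_j^\pm}\ket{w_j^\pm}$ with $\|b_j^\pm\|=O(\sqrt{\eps})$, and~$F$ acts as $\pm 1$ on the good branch; therefore $U_{\mathrm{PE}}^\dagger F U_{\mathrm{PE}}\ket{\bar 0}\ket{w_j^\pm}=\pm\ket{\bar 0}\ket{w_j^\pm}+\ket{\mathrm{err}_j^\pm}$ with $\|\mathrm{err}_j^\pm\|=O(\sqrt{\eps})$.
\item On $\ket{w_j^{\pm}}$ with $\alpha_j<\Delta$: no control is needed, since by the hypothesis $T\ge\qht_\eps(U_2,\ket{\mu})$ and the definition of $\qht_\eps$, these components satisfy $\sum_{j:\alpha_j<\Delta}2\delta_j^2 \le \eps$.
\end{itemize}

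Combining these bounds by linearity and triangle inequality, $U_{\mathrm{PE}}^\dagger F U_{\mathrm{PE}}(\ket{\bar 0}\ket{\tphi_0})$ differs from $\ket{\bar 0}$ tensored with the $U$-rotation of $\ket{\tphi_0}$ by a vector whose squared norm is $O(\eps)$ contributed by the amplified error terms plus $O(\eps)$ from the small-eigenphase tail, giving the claimed Euclidean distance $O(\sqrt{\eps})$. The cost bound is immediate: each phase estimation uses $O(1/\Delta)=O(T)$ applications of $\text{c-}U$ by Theorem~\ref{thm-diffusion1}, the step~\ref{ss1} and its uncomputation each consist of $\Theta(\log(1/\eps))$ such estimations, and the controlled flip $F$ uses only classical gates on the ancilla.

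The main obstacle is the \emph{coherent} bookkeeping: the ancilla register is not classical, so I must be careful that "bad" branches produced by different eigencomponents do not add up constructively in a way that defeats the $O(\sqrt{\eps})$ bound. The cleanest way to handle this, which I would use, is the identity $U_{\mathrm{PE}}^\dagger F U_{\mathrm{PE}} - (\pm 1)\,\id = U_{\mathrm{PE}}^\dagger(F\mp\id)U_{\mathrm{PE}}$, applied eigencomponent-wise: here $(F\mp\id)$ vanishes on the good subspace, so only the norm-$O(\sqrt{\eps})$ bad branch survives, and unitarity of $U_{\mathrm{PE}}^\dagger$ preserves these norms so that the overall error is $O(\sqrt{\eps})$ by Cauchy--Schwarz on the weights $\delta_j$ (which themselves have squared sum at most $\|\tphi_0\|^2\le 1$).
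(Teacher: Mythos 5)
Your proposal is correct and fills in exactly the coherent, state-level details that the paper leaves implicit: the paper's proof is a one-line pointer to the argument for Theorem~\ref{detecting}, which is phrased in terms of acceptance probabilities, whereas the finding statement is about Euclidean distance to the $U$-rotation. Your eigencomponent-wise decomposition and the identity $U_{\mathrm{PE}}^\dagger F U_{\mathrm{PE}} - (\pm 1)\,\id = U_{\mathrm{PE}}^\dagger(F\mp\id)U_{\mathrm{PE}}$ are the right tools for making that adaptation precise.

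One slip worth correcting in your final paragraph: the combination step is not Cauchy--Schwarz but Pythagoras. Because $U_{\mathrm{PE}}$ (phase estimation of $U$) and $F$ (a conditional phase acting only on the ancilla) both commute with $\id\otimes\proj{w_j^\pm}$, each per-eigencomponent error $U_{\mathrm{PE}}^\dagger(F\mp\id)U_{\mathrm{PE}}\ket{\bar 0}\ket{w_j^\pm}$ lies in the subspace $(\text{ancilla})\otimes\Span(\ket{w_j^\pm})$, and these subspaces are mutually orthogonal across distinct pairs $(j,\pm)$. Hence the total error has squared norm $\sum_j\delta_j^2\bigl(\norm{\mathrm{err}_j^+}^2+\norm{\mathrm{err}_j^-}^2\bigr)$, which is $O(\eps)\cdot 2\sum_j\delta_j^2 + O(\eps) = O(\eps)$, giving distance $O(\sqrt{\eps})$. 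Cauchy--Schwarz alone, without this orthogonality, yields only $\sqrt{\sum_j\delta_j^2}\cdot\sqrt{\sum_{j,\pm}\norm{\mathrm{err}_j^\pm}^2}$, and the second factor is not controlled since it can grow with the number of eigenphases. Your eigencomponent-wise setup actually supplies the needed orthogonality, so the proof goes through --- just name the step correctly.
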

\begin{proof}
The proof follows the same argument as in Theorem~\ref{detecting}.
\end{proof}

\subsection{Comparison between $\qht$ and $\qht_\eps$}
\label{sec-comparison}

We assume henceforth that $\braket{\phi_0}{\mu}\neq 0$; otherwise, the problem
is trivial---by our definition $\qht_{\eps}(U_2, \ket{\mu}) =
\qht(U_2, \ket{\mu}) = 0$.

The Markov inequality immediately implies the following relationship
between these two measures:
\begin{fact}\label{factqcomparison}
For all $U_2$, $\ket{\mu}$, and $\eps$,
\[
\qht_{\eps}(U_2, \ket{\mu}) \quad \leq \quad \frac{1}{\eps}\; \qht(U_2,
\ket{\mu}).
\]
\end{fact}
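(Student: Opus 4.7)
The plan is to derive the inequality as a one-line consequence of Markov's inequality, applied to the random variable $\mathit{QH}$ defined just before Definition~\ref{qhtdef}. Recall that $\mathit{QH}$ takes only nonnegative values (namely $1/\alpha_j$, $1/\pi$, or $0$), so Markov's inequality applies directly.

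First I would unwind the definitions. By construction, $\qht(U_2,\ket{\mu}) = \E[\mathit{QH}]$: this is exactly the formula $2\sum_j \delta_j^2/\alpha_j + \delta_{-1}^2/\pi$ given in Definition~\ref{qhtdef}. Likewise, $\qht_{\eps}(U_2,\ket{\mu})$ is the smallest cutoff $y$ for which $\Pr[\mathit{QH}>y]\le \eps$. The assumption $\braket{\phi_0}{\mu}\neq 0$ at the start of Section~\ref{sec-comparison} guarantees $\qht(U_2,\ket{\mu})>0$, so the ratio $\qht(U_2,\ket{\mu})/\eps$ is well defined and positive.

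Next, I would apply Markov's inequality in the standard form: for any $y>0$,
\[
\Pr[\mathit{QH} > y] \quad \leq \quad \frac{\E[\mathit{QH}]}{y} \quad = \quad \frac{\qht(U_2,\ket{\mu})}{y}.
\]
Choosing $y = \qht(U_2,\ket{\mu})/\eps$ makes the right-hand side equal to $\eps$, so $y$ satisfies the condition in the definition of $\qht_\eps$. Since $\qht_\eps(U_2,\ket{\mu})$ is the minimum such $y$, we conclude $\qht_\eps(U_2,\ket{\mu}) \le \qht(U_2,\ket{\mu})/\eps$, as desired.

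There is no real obstacle here: the content of the fact is just that $\qht$ is an expectation and $\qht_\eps$ is a tail threshold of the same underlying random variable, so the inequality is the generic Markov bound. The only thing worth flagging is the nonnegativity of $\mathit{QH}$, which is immediate from its definition; everything else is bookkeeping with Definition~\ref{qhtdef}.
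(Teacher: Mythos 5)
Your proof is correct and matches the paper's approach exactly: the paper likewise dismisses this fact as an immediate consequence of Markov's inequality applied to the nonnegative random variable $\mathit{QH}$, whose expectation is $\qht(U_2,\ket{\mu})$ and whose $\eps$-tail threshold is $\qht_\eps(U_2,\ket{\mu})$. You have simply spelled out the bookkeeping the paper leaves implicit.
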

The other direction requires a closer look at the spectral
decomposition of $U_2U_1$. In this section, we again follow the
framework of the abstract search algorithm $U=U_2U_1$.  The
eigenvalues of $U_2$ different from $1$ are either $-1$ or they appear
as complex conjugates $\e^{\pm \complexi \theta_j}$, where~$\theta_j
\in (0,\pi)$. For convenience, we assume that
$\theta_{-1}=\pi$, $0 = \theta_0 \leq
\theta_1 \leq \theta_2 \leq \ldots$,  
and we always use index $j$ for
positive integers.  
Recall that both $\ket{\mu}$ and the $1$-eigenvector $\ket{\phi_0}$ of $U_2$
have real entries.
Writing $\ket{\mu}$ in the eigenspace
decomposition of $U_2$ we get
\[
\ket{\mu} \quad = \quad 
    a_0 \ket{\phi_0} + \sum_j a_j \left( \ket{\phi_j^+}
    + \ket{\phi_j^-} \right) + a_{-1} \ket{\phi_{-1}},
\]
where $\ket{\phi_{-1}}$ is a $(-1)$-eigenvector of $U_2$,
and $\ket{\phi_j^\pm}$ are $\e^{\pm \complexi \theta_j}$-eigenvectors
of $U_2$ such that $a_{-1}$ and $a_j$ are real,
$\ket{\phi_{-1}}$ has real entries,
and $\ket{\phi_j^-}=\overline{\ket{\phi_j^+}}$.
Since $\ket{\phi_0}$ has real entries, $a_0$ is also real.

Ambainis \etal~\cite{AKR} (see also Tulsi~\cite{Tulsi08}) show the
following relation between the spectrum of $U_2$ and that of $U$.
\begin{lemma}[\cite{AKR,Tulsi08}]
\label{lemma-akr-tulsi}
The eigenvalues~$\e^{\pm \complexi \alpha}$
of the operator~$U$ are solutions of
the equation:
\[ a_0^2 \cot\frac{\alpha}{2} 
             + \sum_j a_j^2 \left(\cot \left(\frac{\alpha + \theta_j}{2}
               \right) + \cot\left( \frac{\alpha - \theta_j}{2} \right)
               \right)
             - a_{-1}^2 \tan \frac{\alpha}{2} \quad = \quad 0 .
\]             
\suppress{\[
\abs{\sin\frac{\alpha}{2}} \quad = \quad
    \frac{1}{2} \times
    \frac{a_0}{\sqrt{\sum_j \frac{a_j^2}{\cos\alpha-\cos \theta_j }
    + \frac{a_{-1}^2}{4\cos^2(\alpha/2)}}}.
\]}
The corresponding unnormalized eigenvectors $\ket{w_\alpha}
= \ket{\mu} + \complexi\ket{w_\alpha'}$
satisfy $\braket{\mu}{w'_\alpha}=0$
and
\[
\ket{w_\alpha'} \quad = \quad
    a_0 \cot\frac{\alpha}{2} \, \ket{\phi_0} 
    + \sum_j a_j  \left( \cot\left(\frac{\alpha - \theta_j}{2}\right)
      \ket{\phi_j^{+}} + \cot\left(\frac{\alpha + \theta_j}{2}\right)
      \ket{\phi_j^{-}} \right) 
    - a_{-1} \tan\frac{\alpha}{2} \, \ket{\phi}.
\]
\end{lemma}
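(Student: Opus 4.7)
The plan is to reduce the eigenvalue problem for $U = U_2 U_1 = U_2(I - 2\proj{\mu})$ to a scalar secular equation, which is the standard technique for rank-one perturbations of a normal operator. Starting from $U\ket{w} = e^{\complexi\alpha}\ket{w}$, I would rewrite this as $(U_2 - e^{\complexi\alpha}I)\ket{w} = 2\braket{\mu}{w}\, U_2\ket{\mu}$. For any eigenvalue $e^{\complexi\alpha}$ of $U$ outside the spectrum of $U_2$, the operator $U_2 - e^{\complexi\alpha}I$ is invertible, so
\[
\ket{w_\alpha} \quad = \quad 2\braket{\mu}{w_\alpha}\,(U_2 - e^{\complexi\alpha}I)^{-1} U_2\ket{\mu}.
\]
Pairing with $\bra{\mu}$ produces the secular equation $1 = 2\bra{\mu}(U_2 - e^{\complexi\alpha}I)^{-1} U_2\ket{\mu}$, which determines $\alpha$.

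Next, I would expand $\ket{\mu}$ in the eigenbasis of $U_2$ and evaluate the resolvent on each eigenspace using the elementary identities
\[
\frac{1}{1 - e^{\complexi\beta}} \; = \; \frac{1}{2} + \frac{\complexi}{2}\cot\frac{\beta}{2},
\qquad
\frac{1}{1 + e^{\complexi\alpha}} \; = \; \frac{1}{2} - \frac{\complexi}{2}\tan\frac{\alpha}{2}.
\]
These apply to the $\ket{\phi_0}$-term (with $\beta = \alpha$), to each pair $\ket{\phi_j^\pm}$ (with $\beta = \alpha \mp \theta_j$, after writing $e^{\mp \complexi\theta_j}/(e^{\mp\complexi\theta_j} - e^{\complexi\alpha}) = 1/(1 - e^{\complexi(\alpha\pm\theta_j)})$), and to the $\ket{\phi_{-1}}$-term. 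Collecting real and imaginary parts of the secular equation, the real part reduces to $\norm{\mu}^2 = a_0^2 + 2\sum_j a_j^2 + a_{-1}^2 = 1$, which is automatic, while the imaginary part yields exactly the trigonometric equation stated in the lemma.

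The same expansion, applied directly to $(U_2 - e^{\complexi\alpha}I)^{-1} U_2\ket{\mu}$, naturally splits into a real component equal to $\tfrac{1}{2}\ket{\mu}$ and an imaginary component equal to $\tfrac{\complexi}{2}\ket{w_\alpha'}$, where $\ket{w_\alpha'}$ is exactly the vector displayed in the statement. Normalising the eigenvector by setting $\braket{\mu}{w_\alpha} = 1$ then gives $\ket{w_\alpha} = \ket{\mu} + \complexi\ket{w_\alpha'}$, and the orthogonality $\braket{\mu}{w_\alpha'} = 0$ is just the secular equation from the previous step read backwards.

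The only real delicacy, and the point that requires some care rather than pure computation, is justifying that the relevant eigenvectors of $U$ correspond to $\alpha$ outside the spectrum of $U_2$ and to $\braket{\mu}{w} \neq 0$. This is a standard feature of rank-one unitary perturbations: the complementary case $\braket{\mu}{w}=0$ forces $U_2\ket{w} = e^{\complexi\alpha}\ket{w}$, so $\ket{w}$ lies in an eigenspace of $U_2$ that meets $\ket{\mu}^\perp$, on which $U$ and $U_2$ agree and which is not the subspace of interest for the search algorithm. Once this case is peeled off, the computation above yields all remaining eigenpairs.
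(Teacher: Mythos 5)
The paper does not prove this lemma itself --- it imports it from \cite{AKR,Tulsi08} without reproducing the argument --- so there is no in-paper proof to compare against. Your proposal is correct and is, in substance, the standard rank-one-perturbation argument that underlies those references: write $U\ket{w}=e^{\complexi\alpha}\ket{w}$ as $(U_2-e^{\complexi\alpha}I)\ket{w}=2\braket{\mu}{w}\,U_2\ket{\mu}$, invert the resolvent, pair with $\bra{\mu}$ to get the secular equation, and expand the resolvent in the eigenbasis of $U_2$ using $1/(1-e^{\complexi\beta})=\tfrac{1}{2}+\tfrac{\complexi}{2}\cot\tfrac{\beta}{2}$. Your calculation that the resolvent applied to $\ket{\mu}$ splits as $\tfrac{1}{2}\ket{\mu}+\tfrac{\complexi}{2}\ket{w'_\alpha}$, that the normalization $\braket{\mu}{w_\alpha}=1$ gives $\ket{w_\alpha}=\ket{\mu}+\complexi\ket{w'_\alpha}$, and that $\braket{\mu}{w'_\alpha}=0$ is exactly the imaginary part of the secular equation, are all correct and cleanly organized.

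One presentational caveat: you call the $\tfrac{1}{2}\ket{\mu}$ and $\tfrac{\complexi}{2}\ket{w'_\alpha}$ pieces ``real'' and ``imaginary'' components, but since $\ket{\phi_j^{\pm}}$ are complex vectors this is only a decomposition by the real/imaginary parts of the scalar coefficients, not of the entries. The conclusion is unaffected. Also, the discussion of eigenvalues $e^{\complexi\alpha}$ coinciding with the spectrum of $U_2$ could be tightened: an eigenvector $\ket{w}$ of $U$ with $\braket{\mu}{w}\neq0$ and $e^{\complexi\alpha}$ equal to some $e^{\pm\complexi\theta_j}$ can occur only when $a_j=0$, in which case one restricts to the cyclic subspace generated by $\ket{\mu}$ under $U_2$ and the argument goes through unchanged; the case $\braket{\mu}{w}=0$ is, as you say, precisely the shared spectrum of $U$ and $U_2$ on $\ket{\mu}^{\perp}$-invariant eigenspaces. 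Neither point is a gap, just a place to be explicit.
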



As in the classical case, we are only able to upper bound $\qht$ by
$\qht_\eps$ for some particular target states~$\ket{\mu}$.  Therefore,
we consider in the following lemma (proof in
Appendix~\ref{apptransitive}) an arbitrary set of orthonormal vectors
$M=\{\ket{\mu_z}\}$ whose span contains $\ket{\phi_0}$.  In the case
of the quantum analogue of a Markov chain $P$
as in Definition~\ref{def-qmc}, a natural choice for $\ket{\mu_z}$ is
$\ket{z}\ket{p_z}$ for some~$z$ in the state space of the Markov
chain.  Recall that~$\ket{\tphi_0} = \ket{\phi_0} - a_0 \ket{\mu}$.
\begin{lemma}\label{transitive}
  Let $M=\{\ket{\mu_z}\}$ be a set of orthonormal vectors with real
  coefficients in the standard basis, such that
  $\ket{\phi_0}\in\Span(M)$.  For every $z$, let $\alpha_z$ be the smallest
  positive real number $\alpha$ such that $\e^{\pm \complexi \alpha}$
  are eigenvalues of the operator $U = U_2 (I-2\proj{\mu_z})$.  Then
  there exists $z$ such that the length of the projection of
  $\ket{\tphi_0}$ onto the subspace generated by $\ket{w_{\alpha_z}}$
  and $\ket{w_{- \alpha_z}}$ is at least $1/\sqrt{2}$.
\end{lemma}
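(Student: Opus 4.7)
The plan is to combine the explicit spectral formula from Lemma~\ref{lemma-akr-tulsi} with the normalization constraint $\sum_z a_0(z)^2 = 1$ that follows from $\ket{\phi_0}\in\Span(M)$ and the orthonormality of $M$. First I compute the squared length of the projection $P_{V_z}\ket{\tphi_0}$ in closed form for arbitrary $z$, then I select the desired $z$ by a contradiction argument.

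Fix $z$ and decompose $\ket{\mu_z}$ in the $U_2$-eigenbasis with real coefficients $a_0(z),a_j(z),a_{-1}(z)$. The unnormalized eigenvector $\ket{w_{\alpha_z}}=\ket{\mu_z}+\complexi\ket{w'_{\alpha_z}}$ from Lemma~\ref{lemma-akr-tulsi} satisfies $\braket{\mu_z}{w'_{\alpha_z}}=0$, and its $\ket{\phi_0}$-component is $a_0(z)\cot(\alpha_z/2)$. Since $\ket{\tphi_0}=\ket{\phi_0}-a_0(z)\ket{\mu_z}$ is real, $\ket{w_{-\alpha_z}}=\overline{\ket{w_{\alpha_z}}}$, and these two eigenvectors are orthogonal, a direct calculation gives $\braket{w_{\alpha_z}}{\tphi_0}=-\complexi a_0(z)\cot(\alpha_z/2)$ together with an identical squared overlap from $\ket{w_{-\alpha_z}}$, whence
\[
\beta(z)\;:=\;\norm{P_{V_z}\ket{\tphi_0}}^2\;=\;\frac{2\,a_0(z)^2\cot^2(\alpha_z/2)}{\norm{w_{\alpha_z}}^2}.
\]
Writing $f_z(\alpha)=a_0(z)^2\cot(\alpha/2)+\sum_j a_j(z)^2[\cot((\alpha+\theta_j)/2)+\cot((\alpha-\theta_j)/2)]-a_{-1}(z)^2\tan(\alpha/2)$ for the characteristic function of Lemma~\ref{lemma-akr-tulsi} (whose smallest positive zero is $\alpha_z$), a short calculation using $\csc^2=1+\cot^2$, $\sec^2=1+\tan^2$, and $\norm{\mu_z}^2=1$ yields the clean identity $\norm{w_{\alpha_z}}^2=-2 f'_z(\alpha_z)$. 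Hence $\beta(z)=-a_0(z)^2\cot^2(\alpha_z/2)/f'_z(\alpha_z)$.

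To locate the desired $z$ I would argue by contradiction: assume $\beta(z)<1/2$ for every $z\in M$. Since $\alpha_z$ is the smallest positive eigenphase of $U_2(I-2\proj{\mu_z})$, it lies below every $\theta_j$ on which $\ket{\mu_z}$ has nonzero support, so the equation $f_z(\alpha_z)=0$ rearranges to express $a_0(z)^2\cot(\alpha_z/2)$ as an explicit positive combination of the remaining terms. Substituting this identity into the hypothesized inequality $4 a_0(z)^2\cot^2(\alpha_z/2)<-2 f'_z(\alpha_z)$ should produce a uniform upper bound on $a_0(z)^2$, which---summed over $z$---contradicts $\sum_z a_0(z)^2=1$.

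The main obstacle lies precisely in this last step: both $\alpha_z$ and $f'_z(\alpha_z)$ depend intricately on the full decomposition $\{a_j(z),a_{-1}(z)\}$ of $\ket{\mu_z}$, and different $z\in M$ give unrelated characteristic equations with no evident symmetry linking them. Extracting from $f_z(\alpha_z)=0$ together with the local bound $\beta(z)<1/2$ a scalar quantity that aggregates meaningfully across $M$---most likely one that is essentially linear in $a_0(z)^2$ modulo a manageable perturbation---is where I expect the real work of the proof to be concentrated.
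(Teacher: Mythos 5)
Your setup is correct as far as it goes: the exact formula
\[
\beta(z)\;=\;\frac{2\,a_0(z)^2\cot^2(\alpha_z/2)}{\norm{w_{\alpha_z}}^2}
\]
matches what the paper obtains (the paper lower-bounds the projection length via the particular vector $\ket{s}=(\ket{w_{\alpha_z}}-\ket{w_{-\alpha_z}})/\complexi$, but because $\ket{s}$ happens to span the $\ket{\mu_z}^\perp$ line inside the two-dimensional eigenspace, this bound is in fact tight and coincides with your $\beta(z)$). Your identity $\norm{w_{\alpha_z}}^2=-2f_z'(\alpha_z)$ also checks out, though it is not needed.

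The genuine gap is exactly where you flag it, and the contradiction scheme based solely on $\sum_z a_0(z)^2=1$ is not the way to close it. That single normalization condition says nothing about how large $\norm{w_{\alpha_z}}^2$ might be relative to $a_0(z)^2\cot^2(\alpha_z/2)$, because the cross-terms $\sum_j a_j(z)^2\bigl(\sin\alpha_z/(\cos\alpha_z-\cos\theta_j)\bigr)^2$ could, for all you know, dominate uniformly in $z$. The paper's extra idea is a \emph{second} averaging, against the principal $U_2$-eigenvector $\ket{\phi_1^+}$ rather than against $\ket{\phi_0}$ alone. Since $\{\ket{\mu_z}\}$ is orthonormal and its span contains $\ket{\phi_0}$ (hence also $\ket{\phi_1^\pm}$, or at least enough of it), one has
\[
\sum_z a_{0,z}^2 \;=\;\norm{\phi_0}^2\;=\;1\;=\;\sum_z a_{1,z}^2,
\]
so by pigeonhole there exists $z$ with $a_{0,z}^2\le a_{1,z}^2$. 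For that $z$, the characteristic equation $f_z(\alpha_z)=0$, rewritten as
\[
a_{0,z}^2\cot\frac{\alpha_z}{2}\;=\;\sum_j a_{j,z}^2\,\frac{2\sin\alpha_z}{\cos\alpha_z-\cos\theta_j}\;+\;a_{-1,z}^2\tan\frac{\alpha_z}{2},
\]
has every right-hand term nonnegative (as $\alpha_z<\theta_1$), so dropping all but the $j=1$ term and dividing through by $a_{0,z}^2\ge a_{1,z}^2$ gives $\cot(\alpha_z/2)\ge 2\sin\alpha_z/(\cos\alpha_z-\cos\theta_1)$; monotonicity in $\theta$ extends this to every $\theta_j$ and to $\theta=\pi$, giving also $\cot(\alpha_z/2)\ge\tan(\alpha_z/2)$. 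Multiplying these pointwise bounds back into the characteristic equation shows $a_{0,z}^2\cot^2(\alpha_z/2)$ dominates the entire remainder of $\norm{w_{\alpha_z}}^2$, yielding $\norm{w_{\alpha_z}}^2\le 4\,a_{0,z}^2\cot^2(\alpha_z/2)$ and hence $\beta(z)\ge 1/2$ directly, with no contradiction needed. Without the comparison to $a_{1,z}^2$, I do not see how to aggregate the local inequalities $\beta(z)<1/2$ into anything contradicting $\sum_z a_0(z)^2=1$; that comparison is the missing ingredient.
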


\begin{corollary}\label{marioq}
  Let $M=\{\ket{\mu_z}\}$ be a set of real orthonormal vectors such
  that $\ket{\phi_0}\in\Span(M)$.  For all $U_2$ there exists $z$ such
  that for all $\eps \leq 1/2$,
$$
\qht_{\eps}( U_2, \ket{\mu_z}) \quad = \quad \frac{1}{\alpha_z}.
$$
\end{corollary}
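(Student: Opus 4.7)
The plan is to read off $\qht_\eps(U_2,\ket{\mu_z})$ directly from its defining quantile, once Lemma~\ref{transitive} has delivered a $z$ that concentrates $\ket{\tphi_0}$ on the smallest non-trivial eigenspace of $U = U_2(I - 2\proj{\mu_z})$.

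First I apply Lemma~\ref{transitive} to the orthonormal family $M$, obtaining a $z$ for which the projection of $\ket{\tphi_0}$ onto $\Span(\ket{w_{\alpha_z}},\ket{w_{-\alpha_z}})$ has length at least $1/\sqrt{2}$. Translating this into the decomposition (\ref{eqn-decomp}) for $U$, the coefficient $\delta_{\alpha_z}$ satisfies $2\delta_{\alpha_z}^2 \geq 1/2$, so the random variable $\mathit{QH}$ associated with $U$ puts probability mass at least $1/2$ on the value $1/\alpha_z$.

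Next I observe that $\mathit{QH} \leq 1/\alpha_z$ almost surely. Indeed, $\alpha_z$ is the smallest positive number with $\e^{\pm\complexi\alpha_z}$ in the spectrum of $U$, so every other positive eigenphase $\alpha_j$ satisfies $1/\alpha_j \leq 1/\alpha_z$; the remaining possible value $1/\pi$, which would arise from a $(-1)$-eigenvalue, also lies below $1/\alpha_z$ since $\alpha_z \in (0,\pi)$. Therefore $\Pr[\mathit{QH} > 1/\alpha_z] = 0 \leq \eps$, which yields the upper bound $\qht_\eps(U_2,\ket{\mu_z}) \leq 1/\alpha_z$. For the matching lower bound I note that for any $y < 1/\alpha_z$ we have $\{\mathit{QH} > y\} \supseteq \{\mathit{QH} = 1/\alpha_z\}$, whence $\Pr[\mathit{QH} > y] \geq 2\delta_{\alpha_z}^2 \geq 1/2 \geq \eps$, ruling out such $y$ from meeting the defining inequality of $\qht_\eps$.

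Putting the two bounds together gives $\qht_\eps(U_2,\ket{\mu_z}) = 1/\alpha_z$. The only non-routine ingredient is the invocation of Lemma~\ref{transitive}, where all the work is hidden; past that point the argument is a direct evaluation of the discrete quantile defining $\qht_\eps$. The sole delicate point is the borderline case $\eps = 1/2$ combined with $2\delta_{\alpha_z}^2 = 1/2$ exactly, but this is harmless for the stated equality since Lemma~\ref{transitive} produces an at-least inequality compatible with the claim.
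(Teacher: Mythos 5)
Your derivation matches the intended one: the paper calls this a corollary of Lemma~\ref{transitive} precisely because once that lemma hands you a $z$ with $2\delta_{\alpha_z}^2 \ge 1/2$, the quantile defining $\qht_\eps$ is read off exactly as you do --- $\Pr[\mathit{QH} > 1/\alpha_z] = 0$ since $\alpha_z$ is the smallest positive eigenphase of $U$, and $\Pr[\mathit{QH} > y] \ge 2\delta_{\alpha_z}^2 \ge 1/2 \ge \eps$ for every $y < 1/\alpha_z$. Both bounds on $\qht_\eps$ then follow from the definition.

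Your closing remark about the borderline case is not, however, a justification. If $\eps = 1/2$ and $2\delta_{\alpha_z}^2 = 1/2$ both held with equality, then for any $y$ strictly between $1/\alpha_z$ and the next atom of $\mathit{QH}$ one would have $\Pr[\mathit{QH} > y] = 1/2 \le \eps$, so such a $y$ would lie in the defining set and push $\qht_{1/2}$ strictly below $1/\alpha_z$; the ``at-least inequality'' from the lemma is the source of the worry, not its cure. The corner case is in fact vacuous: the estimates inside the proof of Lemma~\ref{transitive} give $a_0^2\cot^2(\alpha/2) \ge 4\sum_j a_j^2\sin^2\alpha/(\cos\alpha-\cos\theta_j)^2 + a_{-1}^2$ together with $\tan^2(\alpha/2)\le 1/2$, which yields $\norm{s}^2 \le 6\,a_0^2\cot^2(\alpha/2)$ and hence $\abs{\braket{s}{\phi_0}}^2/\norm{s}^2 \ge 2/3$, so $2\delta_{\alpha_z}^2 > 1/2$ strictly. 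But since Lemma~\ref{transitive} as stated only promises $\ge 1/\sqrt{2}$, a fully self-contained proof should either cite that sharper constant from the lemma's proof or restrict the corollary to $\eps < 1/2$.
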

\begin{theorem}\label{qcomparison} 
  Let $M=\{\ket{\mu_z}\}$ be a set of real orthonormal vectors such
  that $\ket{\phi_0}\in\Span(M)$.  For all $U_2$ there exists $z$ such
  that for all $\eps \leq 1/2$,
$$ 
\qht( U_2, \ket{\mu_z}) \quad \leq \quad
\qht_{\eps}( U_2, \ket{\mu_z}).
$$
\end{theorem}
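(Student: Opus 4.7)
The plan is to reuse the element $z$ supplied by Corollary~\ref{marioq} (equivalently by Lemma~\ref{transitive}): for that $z$, we already know that $\qht_{\eps}(U_2,\ket{\mu_z}) = 1/\alpha_z$ for every $\eps \le 1/2$. It therefore suffices to prove the reverse comparison $\qht(U_2,\ket{\mu_z}) \le 1/\alpha_z$, which I would read off directly from the spectral formula in Definition~\ref{qhtdef} without any further spectral analysis.

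Concretely, I fix the decomposition of~$\ket{\tphi_0}$ in the eigenbasis of~$U = U_2(I - 2\proj{\mu_z})$ used throughout Section~\ref{sec-algorithms}, with real coefficients $\delta_j$ on the $\e^{\pm \complexi \alpha_j}$-eigenspaces and~$\delta_{-1}$ on the $(-1)$-eigenspace. By Definition~\ref{qhtdef},
\[
\qht(U_2,\ket{\mu_z}) \quad = \quad 2\sum_j \frac{\delta_j^2}{\alpha_j} + \frac{\delta_{-1}^2}{\pi}.
\]
Since $\alpha_z$ is, by definition, the \emph{smallest} positive eigenphase of~$U$, every $\alpha_j$ satisfies $\alpha_j \ge \alpha_z$, and of course $\alpha_z \le \pi$, so each reciprocal appearing above is bounded by~$1/\alpha_z$. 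Combined with $2\sum_j \delta_j^2 + \delta_{-1}^2 = \norm{\tphi_0}^2 = 1 - a_0^2 \le 1$ (using that~$U$ has no~$1$-eigenspace by Theorem~\ref{akr0} once we have disposed of the trivial case $\braket{\phi_0}{\mu_z}=0$), this yields
\[
\qht(U_2,\ket{\mu_z}) \quad \le \quad \frac{1}{\alpha_z}\!\left(2\sum_j \delta_j^2 + \delta_{-1}^2\right) \quad \le \quad \frac{1}{\alpha_z} \quad = \quad \qht_{\eps}(U_2,\ket{\mu_z}),
\]
which is the desired inequality.

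There is no genuine obstacle here: the real work has been absorbed into Lemma~\ref{transitive}/Corollary~\ref{marioq}, which locates a~$z$ concentrating the spectral weight of~$\ket{\tphi_0}$ on the slowest-rotating eigenspace of~$U$. The only points to keep straight are (i)~that the spectral sum defining~$\qht$ is indexed by the eigenphases of~$U$ rather than of~$U_2$, and (ii)~that~$\alpha_z$ in the corollary coincides with the minimum of those eigenphases, so that factoring out~$1/\alpha_z$ is legitimate. Once both are observed, the bound follows by a one-line convexity-style comparison on the expectation of the random variable~$\mathit{QH}$.
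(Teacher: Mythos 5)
Your proof is correct and follows the same route as the paper: invoke Corollary~\ref{marioq} to get $\qht_{\eps}(U_2,\ket{\mu_z}) = 1/\alpha_z$ for the chosen $z$, then observe that $\qht(U_2,\ket{\mu_z}) \le 1/\alpha_z$ directly from Definition~\ref{qhtdef} because every eigenphase in the sum is at least $\alpha_z$ and the weights sum to $\norm{\tphi_0}^2 \le 1$. The paper compresses this last step into the phrase ``by definition at most $1/\alpha_z$''; you have simply spelled it out.
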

\begin{proof}
This is a consequence of Corollary~\ref{marioq} since 
$\qht( U_2 , \ket{\mu_z})$ is by definition at most 
$\frac{1}{\alpha_z}$.
\end{proof}

\subsection{Quadratic detection speedup for reversible chains}\label{sec-quadratic}
Let $P$ be an ergodic
Markov chain over state space $X = \{1, \ldots , n\}$.  We further
suppose that $P$ is a reversible Markov chain with {\em positive\/}
eigenvalues, otherwise we simply replace $P$ with
$\gamma P+(1-\gamma)I$, for any $\gamma<1/2$.
Let $z \in X$.
\begin{theorem}
\label{thm-qht-sqrt}
Assume that the eigenvalues of $P$ are all positive.
Then we have the following relations:
\begin{itemize}\reduceitem
\item For all $z$, \ \ \  $\qht (P,z) \leq \sqrt{\eht (P,z)/2}$.
\item For all $z$ and $\eps$, \ \ \  $\qht_{\eps}(P,z) = \sqrt{\eht_{\eps}(P,z)}$.\label{s2}
\end{itemize}

\end{theorem}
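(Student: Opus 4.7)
}
The strategy is to establish a tight correspondence between the spectral data determining $\qht(P,z)$ and the spectral data determining $\eht(P,z)$, and then derive both inequalities from that correspondence. First, I would apply Fact~\ref{fact-akr-sze}: since $P$ is reversible, with $U_2 = \mathsf{SWAP}\cdot\reflex(\ay)$, $\ket{\mu}=\ket{z}\ket{p_z}$, and $U=U_2 U_1$, we have $U^2 = W(P,z) = \reflex(\bee_{-z})\reflex(\ay_{-z})$. Hence the eigenphases of $U$ are exactly one half of those of $W(P,z)$. By Szegedy's spectral theorem, the nontrivial eigenvalues of $W(P,z)$ on $\ay_{-z}\oplus\bee_{-z}$ are $\e^{\pm 2\complexi\theta_j}$, where $\cos\theta_j = \lambda_j$ are the eigenvalues of the symmetrized form $S_{-z}$ introduced in Section~\ref{classical}. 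The assumption that $P$ has positive eigenvalues gives $\lambda_j > 0$, so $\theta_j \in (0,\pi/2)$ and $U$ has no $(-1)$-eigenvalue. Thus the eigenphases of $U$ appearing in the decomposition~(\ref{eqn-decomp}) of a state in $\ay_{-z}$ are exactly $\pm\theta_j$.

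Next I would identify the starting state. Using reversibility, $\ket{\phi_0}=\sum_x \sqrt{\pi_x}\,\ket{x}\ket{p_x}$ is the unique $1$-eigenvector of $U_2$, and $a_0=\braket{\mu}{\phi_0}=\sqrt{\pi_z}$, so that $\ket{\tphi_0}=\sum_{x\neq z}\sqrt{\pi_x}\,\ket{x}\ket{p_x}\in\ay_{-z}$. The key coefficient identity I would verify is that for each $j$, the projection of $\ket{\tphi_0}$ onto the two-dimensional $U$-invariant subspace associated with eigenphases $\pm\theta_j$ has squared length equal to $\nu_j^2$, where $\sqrt{\pi_{-z}}=\sum_j\nu_j v_j$ is the decomposition used in Section~\ref{classical}. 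This follows because the isometry $\ay_{-z}\to\complex^{X\setminus\{z\}}$ that sends $\ket{x}\ket{p_x}\mapsto\ket{x}$ carries $\ket{\tphi_0}$ to $\sqrt{\pi_{-z}}$, and the $W(P,z)$-invariant $2$-planes in $\ay_{-z}\oplus\bee_{-z}$ are precisely the ones whose intersection with $\ay_{-z}$ pulls back to the one-dimensional $S_{-z}$-eigenspaces. In the decomposition~(\ref{eqn-decomp}) this translates to $2\delta_j^2=\nu_j^2$ for every $j$ and no $(-1)$-component.

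With this correspondence in hand, both claims are short. The random variable $\mathit{QH}$ then takes value $1/\theta_j$ with probability $\nu_j^2$, while $H_z$ takes value $1/\theta_j^2$ with the same probabilities. So $\mathit{QH}^2$ and $H_z$ have identical distributions, which gives $\Pr[\mathit{QH}>y]=\Pr[H_z>y^2]$ and therefore
\[
\qht_\eps(P,z) \;=\; \min\{y:\Pr[H_z>y^2]\leq \eps\} \;=\; \sqrt{\eht_\eps(P,z)},
\]
proving the second item. For the first item, the Cauchy--Schwarz inequality yields
\[
\qht(P,z) \;=\; \sum_j \frac{\nu_j^2}{\theta_j} \;\leq\; \Bigl(\sum_j \nu_j^2\Bigr)^{1/2}\Bigl(\sum_j \frac{\nu_j^2}{\theta_j^2}\Bigr)^{1/2} \;\leq\; \sqrt{\E[H_z]} \;\leq\; \sqrt{\eht(P,z)/2},
\]
using $\sum_j\nu_j^2 = \norm{\sqrt{\pi_{-z}}}^2 = 1-\pi_z \leq 1$ and the inequality $2\E[H_z]\leq \eht(P,z)$ established in Section~\ref{classical}.

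The main obstacle is the coefficient identity $2\delta_j^2=\nu_j^2$: it requires being careful about the $\pm$ pairing of eigenphases of $U$ versus the unsigned eigenvalues of $S_{-z}$, and about the fact that the relevant eigenvectors of $U$ live in $\aitch$ rather than in $\complex^{X\setminus\{z\}}$. Once one identifies the correct isometry between $\ay_{-z}$ and $\complex^{X\setminus\{z\}}$ and uses Szegedy's explicit description of the invariant two-planes of $W(P,z)$, the identity reduces to a direct calculation; the remaining arithmetic steps, above, are then immediate.
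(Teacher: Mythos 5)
Your proposal matches the paper's proof in all essentials: both invoke Fact~\ref{fact-akr-sze} to halve the eigenphases of $W(P,z)$, identify the projection of $\ket{\tphi_0}$ onto the $j$-th invariant $2$-plane as having squared norm $\nu_j^2$, conclude $\qht(P,z)=\sum_j \nu_j^2/\theta_j$ and $\mathit{QH}^2=H_z$, and finish with an elementary concavity bound. The only cosmetic difference is that you obtain $\sum_j\nu_j^2/\theta_j\leq\sqrt{\E[H_z]}$ via Cauchy--Schwarz, whereas the paper applies Jensen's inequality to $\E[\sqrt{H_z}]$; these are interchangeable here.
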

\begin{proof}
We follow the notation introduced in Sections~\ref{classical}
and~\ref{ct}. 
Then 
$\ket{\phi_0} = \sum_{x} \sqrt{\pi_x} \, \ket{x}\ket{p_x}$,
$\ket{\mu}=\ket{z}\ket{p_z}$,
$\ket{\tphi_0} =  \sum_{x \in X \setminus \{z\}} \sqrt{\pi_x} \, \ket{x}\ket{p_x}$.
Let $\sqrt{\pi_{-z}} = \sum_j \nu_j v_j$ be the decomposition of $\sqrt{\pi_{-z}}$ in the
normalized eigenbasis of $P_{-z}$ where the eigenvalue
of $v_j$ is $ \cos \theta_j$, with 
$0 < \theta_1 \leq \ldots  \leq \theta_{n-1} < \pi/2$. 
Let $v_j[x]$ denote the $x$-coordinate of the vector $v_j$. 
We set $\ket{\xi_j} = \sum_{x \neq z} v_j[x] \ket{x}\ket{p_x}$ and
$\ket{\zeta_j} = \sum_{y \neq z} v_j[y] \ket{p_y^*}\ket{y}$.
Then $\ket{\tphi_0} = \sum_{j} \nu_j \ket{\xi_j}$.
For every $j$, the vectors 
$\ket{\xi_j}$ and $\ket{\zeta_j}$ generate an eigenspace of $W(P,z)$ that is also generated
by two normalized eigenvectors with eigenvalues respectively 
$\e^{2 \complexi \theta_j}$ and $\e^{-2 \complexi \theta_j}$. 
This argument is still true for $\mathsf{SWAP}\cdot\reflex(\ay_{-z})$
when we divide the phases by $2$, leading to eigenvalues
$\e^{ \complexi \theta_j}$ and $\e^{- \complexi \theta_j}$ (cf.\ Fact~\ref{fact-akr-sze}). 
Since the length of the projection of
$\ket{\tphi_0}$ to this eigenspace is $\nu_j^2$, 
we have 
$\qht (P,z) = \sum_{j=1}^{n-1} \frac{\nu_j^2}{\theta_j} = \E[\sqrt{H_z}]$. 

By the Jensen inequality we get
$$
\qht (P,z) \quad \leq \quad \sqrt{\E[H_z]} \quad \leq \quad \sqrt{\eht (P,z)/2}.
$$
The second relation above immediately follows from $\mathit{QH}^2 =
H_z$.
\end{proof}
The same quadratic speed-up as above holds when there are multiple
marked elements in the state space~$X$. The search algorithm and its
analysis are similar and are omitted from this article.

\subsection{On the quadratic speedup threshold}
In this section we consider a
broad class of quantum walks defined on undirected
graphs. We are able to show that for a special case of walks on graphs,
the quadratic speedup is tight.

\suppress{
Our ideal goal was to prove:
\begin{quote}\em
For any reversible quantum walk $U_2$ on an undirected graph $G$, 
there exists a reversible ergodic Markov chain $P$ on $G$,
such that for all $z$ and $\eps$,
$$\qht_\eps(U_2,\ket{z}\ket{\phi^z}) \geq \qht_\eps(P,z)
\qquad\text{(up to some multiplicative constant),}$$
or equivalently (because of Theorem~\ref{thm-qht-sqrt}) that
$$\qht_\eps(U_2,\ket{z}\ket{\phi^z}) \geq\sqrt{\eht_\eps(P,z)}
\qquad\text{(up to some multiplicative constant),}$$
where the $1$-eigenvector $\ket{\phi_0}$ of $U_2$ decomposes in
$\ket{\phi_0}=\sum_z \sqrt{\pi_z}\ket{z}\ket{\phi^z}$,
$\pi_z\geq 0$, and $\ket{\phi^z}$ is a unit vector.
\end{quote}
}

Let $X=\{1,2,\ldots,n\}$.  Our notion of quantum walk can be seen as a
walk on the edges of a given undirected graph $G(X,E)$.  Let
$\mathcal{H}(G)=\Span(\ket{xy}:(x,y)\in E)$ be the Hilbert space that
a quantum walk on $G$ should preserve.  In the rest of this section,
we only consider operators and states in $\mathcal{H}(G)$ for some
given $G$.  Observe that $\mathsf{SWAP}$ preserves $\mathcal{H}(G)$
since $G$ is undirected.

We introduce a notion of reversibility that is justified by
Lemma~\ref{classical-analogue} below.
\begin{definition}
A {\em quantum walk} on an undirected graph $G=(X,E)$ is
a unitary $U_2=\mathsf{SWAP}\cdot F$ defined on a subspace of $\mathcal{H}(G)$, 
where $F$ is matrix with real entries of the form
$F=\sum_{x\in X}\proj{x}\otimes F^x$,
and where $U_2$ has a single $1$-eigenvector $\ket{\phi_0}$.
The quantum walk is {\em reversible} when $\mathsf{SWAP}(\ket{\phi_0})=\ket{\phi_0}$.
\end{definition}
Observe that the definition implies that $\ket{\phi_0}$ can be chosen with real entries.
This definition of quantum walk appears, for example, in the survey paper of 
Ambainis~\cite{amb}, see also~\cite{sant}.
Szegedy considered for $F^x$ a specific kind of reflection based 
on Markov chain transition probabilities (see Section~\ref{ct}).
\begin{definition}\label{classicalization}
Let $U_2=\mathsf{SWAP}\cdot F$ be a quantum walk with unit $1$-eigenvector
$\ket{\phi_0}=\sum_x \sqrt{\pi_x}\ket{x}\ket{\phi^x}$, where $\pi_x\geq 0$
and $\ket{\phi^x}$ is a unit vector with real entries.
Then the {\em classical analogue}~$P = (p_{xy})$ of $U_2$ is defined as
$p_{xy}=\braket{y}{\phi^x}^2$.
\end{definition}
Since $\ket{\phi_0}$ is a $1$-eigenvector of $\mathsf{SWAP}\cdot F$ we
directly state:
\begin{fact}\label{one}
Let $\ket{\psi^x}=F^x\ket{\phi^x}$.
Then
$\ket{\phi_0}=\sum_x\sqrt{\pi_x}\ket{\psi^x}\ket{x}$.
\end{fact}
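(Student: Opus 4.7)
The plan is to read off the identity directly from the eigenvector equation $U_2\ket{\phi_0}=\ket{\phi_0}$, since everything in the statement is a simple consequence of unpacking the definitions. First I would rewrite the eigenvector equation as $\mathsf{SWAP}\cdot F\,\ket{\phi_0}=\ket{\phi_0}$, and then, using that $\mathsf{SWAP}$ is an involution, apply $\mathsf{SWAP}$ to both sides to obtain $F\ket{\phi_0}=\mathsf{SWAP}\ket{\phi_0}$, which I would rearrange to $\ket{\phi_0}=\mathsf{SWAP}\cdot F\,\ket{\phi_0}$ rewritten after computing $F\ket{\phi_0}$ explicitly.

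The key calculation is to compute $F\ket{\phi_0}$ using the block form $F=\sum_{x\in X}\proj{x}\otimes F^x$. Substituting the expansion $\ket{\phi_0}=\sum_y\sqrt{\pi_y}\,\ket{y}\ket{\phi^y}$, only the block indexed by $x=y$ survives in each term, giving
\[
F\ket{\phi_0} \quad = \quad \sum_{x}\sqrt{\pi_x}\,\ket{x}\otimes F^x\ket{\phi^x} \quad = \quad \sum_{x}\sqrt{\pi_x}\,\ket{x}\ket{\psi^x},
\]
by the definition $\ket{\psi^x}=F^x\ket{\phi^x}$.

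Finally I would combine the two steps: applying $\mathsf{SWAP}$ to the expression above for $F\ket{\phi_0}$ gives $\mathsf{SWAP}\cdot F\ket{\phi_0}=\sum_{x}\sqrt{\pi_x}\,\ket{\psi^x}\ket{x}$, and the left-hand side equals $\ket{\phi_0}$ by the eigenvector hypothesis, which yields the claimed decomposition. There is no real obstacle here since the statement is essentially a bookkeeping consequence of the definition of quantum walk and of $\mathsf{SWAP}$; the only thing to be careful about is the order of the two registers after the swap, which is why the roles of $\ket{x}$ and $\ket{\psi^x}$ are interchanged compared to the original expansion of $\ket{\phi_0}$.
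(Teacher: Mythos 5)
Your proof is correct and is precisely the calculation the paper leaves implicit when it says the Fact follows ``directly'' from $\ket{\phi_0}$ being a $1$-eigenvector of $\mathsf{SWAP}\cdot F$: expand $\ket{\phi_0}$, apply the block-diagonal $F$, swap, and identify. The brief detour in your second sentence (rearranging back to the original eigenvector equation) is redundant, but the argument that matters is exactly right.
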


\begin{lemma}\label{classical-analogue}
The classical analogue $P$ of a quantum walk $U_2$ on $G$
is a Markov chain on $G$ with stationary probability distribution $\pi$.
Moreover, $P$ is reversible if and only if $U_2$ is a reversible quantum walk.
\end{lemma}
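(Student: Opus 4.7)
The lemma makes two assertions. The first is that $P$ is a Markov chain on $G$ with stationary distribution $\pi$; the second is the reversibility equivalence. I would address them in that order.

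For the first assertion, I would begin with the elementary properties: $p_{xy} = \braket{y}{\phi^x}^2 \geq 0$ trivially; $\sum_y p_{xy} = \|\ket{\phi^x}\|^2 = 1$ since $\ket{\phi^x}$ is a unit vector; and $p_{xy}=0$ whenever $(x,y)\notin E$ because the constraint $\ket{x}\ket{\phi^x}\in\mathcal{H}(G)$ forces $\ket{\phi^x}$ to be supported on the neighbors of $x$. Likewise $\sum_x \pi_x = \|\ket{\phi_0}\|^2 = 1$. The only non-trivial point is stationarity. Here I would exploit the two expressions for $\ket{\phi_0}$ provided by Fact~\ref{one}: comparing amplitudes on the basis vector $\ket{a}\ket{b}$ yields
\[
\sqrt{\pi_a}\,\braket{b}{\phi^a} \quad = \quad \sqrt{\pi_b}\,\braket{a}{\psi^b} \qquad \text{for all } a,b. \qquad (\star)
\]
Squaring and summing over $a$ gives $\sum_a \pi_a p_{ab} = \pi_b \sum_a \braket{a}{\psi^b}^2 = \pi_b$, where the last equality uses that $\ket{\psi^b} = F^b\ket{\phi^b}$ is a unit vector since $F^b$ is a real orthogonal block of the unitary $F$. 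Thus $\pi P = \pi$.

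For the reversibility equivalence, my first observation is that $U_2$ reversible, i.e., $\mathsf{SWAP}\ket{\phi_0}=\ket{\phi_0}$, is equivalent (combining $U_2\ket{\phi_0}=\ket{\phi_0}$ with $\mathsf{SWAP}^2=I$) to $F\ket{\phi_0}=\ket{\phi_0}$, and hence to $\ket{\psi^x}=\ket{\phi^x}$ for every $x$ with $\pi_x>0$. The forward direction is then immediate: substituting $\ket{\psi^b}=\ket{\phi^b}$ into $(\star)$ gives $\sqrt{\pi_a}\braket{b}{\phi^a}=\sqrt{\pi_b}\braket{a}{\phi^b}$, and squaring recovers detailed balance $\pi_a p_{ab}=\pi_b p_{ba}$.

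The converse is the main obstacle. Assuming $P$ reversible, combining $\pi_a\braket{b}{\phi^a}^2=\pi_b\braket{a}{\phi^b}^2$ with $(\star)$ yields $\braket{a}{\psi^b}^2=\braket{a}{\phi^b}^2$, so that each entry of $\ket{\psi^b}$ matches the corresponding entry of $\ket{\phi^b}$ \emph{up to a sign}. The difficulty is to rule out any sign flip coordinate by coordinate. My plan is to leverage the uniqueness of the $1$-eigenvector of $U_2$: I would try to show that, under $P$-reversibility, $\mathsf{SWAP}\ket{\phi_0}$ is itself fixed by $U_2$, so by uniqueness $\mathsf{SWAP}\ket{\phi_0}=\pm\ket{\phi_0}$, and a non-vanishing component (for example an amplitude on some $\ket{a}\ket{a}$ or a suitably chosen symmetric pair) pins the sign to $+$. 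Unfolding $U_2\mathsf{SWAP}\ket{\phi_0}=\mathsf{SWAP}F\mathsf{SWAP}\ket{\phi_0}=\mathsf{SWAP}F^2\ket{\phi_0}$ reduces the task to checking $(F^x)^2\ket{\phi^x}=\ket{\phi^x}$, equivalently $F^x\ket{\psi^x}=\ket{\phi^x}$, on the support of $\ket{\phi_0}$. This is the step I expect to require real care: the entrywise sign information coming from detailed balance must be stitched together globally, using the symmetry $\tau_{ab}=\tau_{ba}$ (which follows from applying $(\star)$ with $a$ and $b$ interchanged) together with the constraint that $F^x$ is orthogonal, to conclude that each $\ket{\psi^x}$ equals $\ket{\phi^x}$ rather than a diagonal sign-deformation of it.
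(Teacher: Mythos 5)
Your treatment of the first assertion---that $P$ is a Markov chain on $G$ with stationary distribution $\pi$---and of the direction ``$U_2$ reversible $\Rightarrow$ $P$ reversible'' is correct and follows essentially the same lines as the paper: your identity $(\star)$ is exactly the content of Fact~\ref{one}, and squaring-and-summing it is precisely how the paper verifies $\pi P = \pi$.

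The obstacle you flag in the converse is not a gap you failed to close---it is a genuine flaw in the ``only if'' half of the statement, so the plan you sketch cannot be carried through. Detailed balance only gives $\braket{xy}{\phi_0}^2 = \braket{yx}{\phi_0}^2$, i.e.\ entrywise agreement \emph{up to sign}, and nothing in the hypotheses forces those signs to be coherent. Here is a concrete counterexample on $G=K_3$. Take $\pi=(1/3,1/3,1/3)$ and
\[
\ket{\phi^1}=\tfrac{1}{\sqrt{2}}(\ket{2}+\ket{3}),\qquad
\ket{\phi^2}=\tfrac{1}{\sqrt{2}}(-\ket{1}+\ket{3}),\qquad
\ket{\phi^3}=\tfrac{1}{\sqrt{2}}(\ket{1}+\ket{2}),
\]
so that $\ket{\phi_0}=\tfrac{1}{\sqrt{6}}(\ket{12}+\ket{13}-\ket{21}+\ket{23}+\ket{31}+\ket{32})$ and $\mathsf{SWAP}\ket{\phi_0}\neq\ket{\phi_0}$ (the amplitude on $\ket{12}$ flips sign). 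The classical analogue $P$ is the uniform walk on $K_3$, which is reversible. Choose $F^1$ to be the $90^\circ$ rotation on $\Span(\ket{2},\ket{3})$ (so $\ket{2}\mapsto\ket{3}$, $\ket{3}\mapsto-\ket{2}$), $F^2$ the $-90^\circ$ rotation on $\Span(\ket{1},\ket{3})$ (so $\ket{1}\mapsto-\ket{3}$, $\ket{3}\mapsto\ket{1}$), and $F^3=I$. One checks $F\ket{\phi_0}=\mathsf{SWAP}\ket{\phi_0}$, so $\ket{\phi_0}$ is a $1$-eigenvector of $U_2=\mathsf{SWAP}\cdot F$, and $U_2$ acts on the six directed-edge basis states as a signed $6$-cycle,
\[
\ket{12}\mapsto\ket{31}\mapsto\ket{13}\mapsto-\ket{21}\mapsto-\ket{32}\mapsto\ket{23}\mapsto\ket{12},
\]
whose signs multiply to $+1$; hence the spectrum of $U_2$ consists of the six distinct $6$th roots of unity and the $1$-eigenvector is unique. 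Thus $U_2$ is a quantum walk in the sense of the definition whose classical analogue $P$ is reversible, yet $U_2$ is not a reversible quantum walk. In this example $(F^1)^2=-I$ on $\Span(\ket{2},\ket{3})$, so $(F^1)^2\ket{\phi^1}=-\ket{\phi^1}$, which is exactly where your proposed reduction to $(F^x)^2\ket{\phi^x}=\ket{\phi^x}$ fails. For what it's worth, the paper's own proof asserts the ``only if'' half in a single clause (``which happens precisely when $U_2$ is reversible'') without addressing the sign ambiguity; the direction actually used downstream (in Theorem~\ref{lowerbound} and the preceding conjecture) is ``$U_2$ reversible $\Rightarrow$ $P$ reversible'', which both you and the paper establish correctly.
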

\begin{proof}
First we show that $P$ is a Markov chain on $G$.
For every $x$, we have
$$\sum_y p_{xy}=\sum_y\braket{y}{\phi^x}^2=\norm{\phi^x}^2=1.$$
Moreover $p_{xy}\neq 0$ implies $\braket{y}{\phi^x}\neq 0$,
which implies that $(x,y)\in E$ 
since $\ket{\phi_0}\in \mathcal{H}(G)$.

Now we verify that $\pi$ is a stationary probability distribution.
First, $\pi$ is a probability distribution since $\ket{\phi^x}$ for
all $x \in X$ and $\ket{\phi_0}$ are unit vectors.
That $\pi$ is a stationary probability distribution can be seen from
the following sequence of equalities which hold for every $y \in X$:
\begin{eqnarray*}
\sum_x\pi_xp_{xy}&=&\sum_x\braket{xy}{\phi_0}^2\qquad\text{by definition of $P$ 
and $\ket{\phi_0}$}\\
&=&\sum_x\pi_y\braket{x}{\psi^y}^2\qquad\text{by Fact~\ref{one}}\\
&=&\pi_y\norm{\ket{\psi^y}}^2=\pi_y.
\end{eqnarray*}

For reversibility, observe that we similarly have
$\pi_x p_{xy}=\braket{xy}{\phi_0}^2$ and 
$\pi_y p_{yx}=\braket{yx}{\phi_0}^2=(\bra{xy}\mathsf{SWAP}\ket{\phi_0})^2$. 
$P$ is reversible when these two expressions are equal for every $x,y$,
which happens precisely when the quantum walk $U_2$ is reversible.
\end{proof}

Finally, we show that the quadratic speedup is tight in the special
case of walks for which all of the unitaries $F^x$ are reflections.
We state the result using the notation above.
\begin{theorem}\label{lowerbound}
Let $U_2=\mathsf{SWAP}\cdot F$ be a reversible quantum walk such that 
$F^x=2\proj{\phi^x}-I$, for all $x\in X$.
Then for all $z$ and $\eps$,
$$\qht_\eps(U_2,\ket{z}\ket{\phi^z}) = \qht_\eps(P,z)=\sqrt{\eht_\eps(P,z)}.$$
\end{theorem}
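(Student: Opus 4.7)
The plan is to exhibit a diagonal sign unitary $\Sigma$ on $\aitch$ that intertwines the abstract search algorithm associated with $U_2$ and the one associated with the Szegedy quantum analogue of $P$; the first equality of the theorem then becomes immediate and the second is Theorem~\ref{thm-qht-sqrt} applied to $P$.

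First I would observe that $F$ is itself a reflection: since the family $\{\ket{x}\ket{\phi^x} : x\in X\}$ is orthonormal,
\[
F \;=\; \sum_x \proj{x}\otimes(2\proj{\phi^x}-I) \;=\; \reflex(\ay'), \qquad \ay' := \Span(\ket{x}\ket{\phi^x} : x\in X).
\]
By comparison, the Szegedy walk of $P$ is $\mathsf{SWAP}\cdot\reflex(\ay)$ with $\ay = \Span(\ket{x}\ket{p_x})$ and $\ket{p_x} = \sum_y |\braket{y}{\phi^x}|\,\ket{y}$, so the two walks differ only through the signs $s^x_y := \sign(\braket{y}{\phi^x})$.

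I would then define $\Sigma := \sum_{x,y} s^x_y\,\proj{x}\otimes\proj{y}$. By construction $\Sigma\ket{x}\ket{p_x} = \ket{x}\ket{\phi^x}$, which immediately gives $\Sigma\,\reflex(\ay)\,\Sigma^\dagger = \reflex(\ay')$, $\Sigma\ket{z}\ket{p_z} = \ket{z}\ket{\phi^z}$, and $\Sigma$ sending the Szegedy $1$-eigenvector $\sum_x\sqrt{\pi_x}\ket{x}\ket{p_x}$ to $\ket{\phi_0}$. The crucial step is then to show that $\Sigma$ commutes with $\mathsf{SWAP}$, equivalently $s^x_y = s^y_x$ for every pair $x,y$. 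This is where reversibility of $U_2$ enters: applying $\mathsf{SWAP}\ket{\phi_0}=\ket{\phi_0}$ to $\ket{\phi_0}=\sum_x\sqrt{\pi_x}\ket{x}\ket{\phi^x}$ yields the signed detailed balance $\sqrt{\pi_x}\braket{y}{\phi^x} = \sqrt{\pi_y}\braket{x}{\phi^y}$, so both sides share the same sign (and vanish together) since $\pi_x,\pi_y\geq 0$, giving $s^x_y = s^y_x$ under the convention $\sign(0)=1$.

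Once the commutation is established, $\Sigma$ fully intertwines the Szegedy abstract search for $P$ with the one for our $U_2$: one obtains $\Sigma\,(\mathsf{SWAP}\cdot\reflex(\ay))\,\Sigma^\dagger = U_2$, $\Sigma\,(I-2\proj{z}\proj{p_z})\,\Sigma^\dagger = U_1$, and $\Sigma\ket{\tphi_0^{\mathrm{Sze}}} = \ket{\tphi_0}$. Hence the spectrum of $U = U_2 U_1$ and the decomposition of $\ket{\tphi_0}$ in its eigenbasis coincide with those in the Szegedy setting, so the random variable $\mathit{QH}$ (and thus $\qht_\eps$) is the same in both: $\qht_\eps(U_2,\ket{z}\ket{\phi^z}) = \qht_\eps(P,z)$. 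The remaining equality $\qht_\eps(P,z)=\sqrt{\eht_\eps(P,z)}$ is Theorem~\ref{thm-qht-sqrt}, applied to $P$ after the usual replacement $P \leftarrow \gamma P + (1-\gamma) I$ to enforce positive eigenvalues if needed. I expect the main obstacle to be the verification that reversibility of $U_2$ forces the sign symmetry $s^x_y = s^y_x$; this is the quantum refinement of classical detailed balance that neatly mirrors Lemma~\ref{classical-analogue}, and everything else is bookkeeping.
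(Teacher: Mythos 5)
Your proof is correct and reaches the same conclusion as the paper's, but by a cleaner, more explicit route. The paper's proof first reduces $(U_2 U_1)^2$ to a product of two reflections $\reflex(\ay_{-z})\reflex(\bee_{-z})$, invokes Szegedy's machinery linking the spectrum of such a product to the discriminant matrix $D=A^*B$, computes $D = \sqrt{\Pi}\,P_{-z}\sqrt{\Pi}^{-1}$ using reversibility, observes that this coincides with the discriminant of $W(P,z)$, and then merely asserts that the two walks must be equal up to a basis change that sends the Szegedy data to yours. You instead construct that basis change explicitly as the diagonal sign operator $\Sigma$, verify by direct computation that it intertwines $\reflex(\ay)$ with $\reflex(\ay')=F$, carries the Szegedy target and starting states to yours, and — crucially — commutes with $\mathsf{SWAP}$, where reversibility enters precisely through the signed detailed balance $\sqrt{\pi_x}\braket{y}{\phi^x}=\sqrt{\pi_y}\braket{x}{\phi^y}$. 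This sidesteps the discriminant theorem entirely and makes the unitary equivalence of the two abstract search algorithms, and hence the equality of their $\mathit{QH}$ random variables, manifest. Both proofs hinge on the same use of reversibility (the paper via $\braket{xy}{\phi_0}=\braket{yx}{\phi_0}$, you via the sign symmetry $s^x_y=s^y_x$), and both need the ergodicity assumption $\pi_x>0$ to avoid degenerate cases; your proposal is essentially self-contained whereas the paper leans on a black-box result from~\cite{Sze}. You are also right to flag that the final equality $\qht_\eps(P,z)=\sqrt{\eht_\eps(P,z)}$ requires the positive-eigenvalue hypothesis of Theorem~\ref{thm-qht-sqrt}, a point the paper glosses over.
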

\begin{proof}
Let $U_1=I-2\proj{z}\otimes\proj{\phi^z}$, for some fixed $z$.
Under the hypothesis of the theorem, $(U_2U_1)^2$ is a product of two reflections
$\reflex(\ay_{-z})$ and $\reflex(\bee_{-z})$, where
$\ay_{-z}=\Span(\ket{x}\ket{\phi^x}:x\in X\setminus\{z\})$
and $\bee_{-z}=\Span(\ket{\phi^y}\ket{y}:y\in X\setminus\{z\})=\mathsf{SWAP}(\ay)$.

{From}~\cite{Sze}, we know that the spectrum of $(U_2U_1)^2$ is
completely defined by the discriminant matrix $D=A^*B$, where
$A=\sum_{x\neq z}\ket{x}\ket{\phi^x}\bra{x}$ and $B=\sum_{y\neq
  z}\ket{\phi^y}\ket{y}\bra{y}$.  We get that
$D=(\braket{x}{\phi^y}\braket{\phi^x}{y})_{x\neq z,y\neq z}$.  The
reversibility of $U_2$ guarantees that
$\braket{xy}{\pi}=\braket{yx}{\pi}$, which implies that
$\sqrt{\pi_x}\braket{y}{\phi_x}=\sqrt{\pi_y}\braket{x}{\phi_y}$.
Since $\ket{\phi_y}$ has real entries, we have
$D=\sqrt{\Pi}P_{-z}\sqrt{\Pi}^{-1}$, where ${\Pi}=\diag(\pi_x)_{x\neq
  z}$ and $P_{-z}$ is the matrix obtained from $P$ by deleting the row
and column indexed by $z$.

Observe that this discriminant is exactly that of the quantum analogue
$W(P,z)$. So $W(P,z)$ and $(U_2U_1)^2$ are equal up to a basis change
which maps $\sum_x\sqrt{\pi_x}\ket{x}\ket{p_x}$ to $\ket{\phi_0}$,
$\ket{z}\ket{p_z}$ to $\ket{z}\ket{\phi^z}$, and therefore
$\sum_{x\neq z}\sqrt{\pi_x}\ket{x}\ket{p_x}$ to $\ket{\tphi_0}$.
\end{proof}

\section{Finding with constant probability}
\label{find}

In this section, we extend a technique devised by Tulsi~\cite{Tulsi08} for
finding a marked state on the 2D grid in time that is the
square-root of the classical hitting time.
We prove that it may be applied to a larger class of Markov chains and
target states. The technique may be combined with ideas developed in the earlier sections
to give an algorithm for the quantum analogue of
an arbitrary reversible ergodic Markov chain.

We use the notation of Section~\ref{ct}.
In our application, there is no $(-1)$-eigenvector of $U_2$.
Therefore the marked state $\ket{\mu}$ has the following
decomposition in an eigenvector basis of~$U_2$:
\begin{equation}
\label{eqn-mu}
\ket{\mu} \quad = \quad a_0 \ket{\phi_0} + \sum_{1 \le j \le J} a_j
(\ket{\phi_j^+} + \ket{\phi_j^-}),
\end{equation}
where~$J$ is some positive integer.
Last, we assume in the rest of this section
that $\braket{\phi_0}{\mu}\neq 0$.

\begin{lemma}
\label{thm-ev-rep}
The vectors~$\ket{\mu}$ and~$\ket{\tphi_0}$ have the following
representation in the basis~$\set{\ket{w_\alpha}}$ consisting of the
eigenvectors of~$U = U_2 U_1$ as given by
Lemma~\ref{lemma-akr-tulsi}:
$\ket{\mu}  =  \sum_{\alpha} \frac{1}{\norm{w_\alpha}^2} \ket{w_\alpha}$,
and 
$\ket{\tphi_0}  =  \sum_{\alpha} \frac{a_0 \complexi\cot(\frac{\alpha}{2})}
                    {\norm{w_\alpha}^2} \ket{w_\alpha}$.
\end{lemma}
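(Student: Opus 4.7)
The plan is to expand each of $\ket{\mu}$ and $\ket{\tphi_0}$ in the family $\{\ket{w_\alpha}\}$ provided by Lemma~\ref{lemma-akr-tulsi}. Since $U=U_2U_1$ is unitary and the eigenvalues $\e^{\complexi\alpha}$ indexing these vectors are distinct, the family is orthogonal, so expansion coefficients are just $\braket{w_\alpha}{v}/\norm{w_\alpha}^2$ for the vector $v$ being expanded. I would therefore compute the overlaps $\braket{w_\alpha}{\mu}$ and $\braket{w_\alpha}{\tphi_0}$ in closed form.

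The spanning step comes first. Under the standing assumptions of Section~\ref{find} --- no $(-1)$-eigenvector of $U_2$, and $\braket{\phi_0}{\mu}\neq 0$ --- Theorem~\ref{akr0} excludes any $1$-eigenvector of $U$, and the former assumption also excludes $(-1)$-eigenvectors of $U$. Therefore every eigenvalue of $U$ on the $U$-invariant subspace containing $\ket{\mu}$ is of the form $\e^{\pm \complexi\alpha}$ with $\alpha\in(0,\pi)$, and Lemma~\ref{lemma-akr-tulsi} furnishes a corresponding eigenvector for each. Both $\ket{\mu}$ and $\ket{\tphi_0}$ lie in this invariant subspace, since $\ket{\tphi_0}=\ket{\phi_0}-a_0\ket{\mu}$ and $\ket{\phi_0}$ is recoverable from the $\ket{w_\alpha'}$'s via their real $\ket{\phi_0}$-coefficients $a_0\cot(\alpha/2)$, nonzero because $a_0\neq 0$.

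Next I would compute the overlaps. From $\ket{w_\alpha}=\ket{\mu}+\complexi\ket{w_\alpha'}$ together with $\braket{\mu}{w_\alpha'}=0$ and $\norm{\mu}=1$, one immediately gets $\braket{w_\alpha}{\mu}=1$, yielding the first identity after division by $\norm{w_\alpha}^2$. For the second, use $\braket{\mu}{\tphi_0}=0$ (by construction of $\tphi_0$) together with $\braket{w_\alpha'}{\mu}=0$, so that $\braket{w_\alpha}{\tphi_0} = -\complexi\,\braket{w_\alpha'}{\phi_0}$. Reading off the $\ket{\phi_0}$-component of $\ket{w_\alpha'}$ from Lemma~\ref{lemma-akr-tulsi} gives $\braket{w_\alpha'}{\phi_0}=a_0\cot(\alpha/2)$, a real quantity; dividing by $\norm{w_\alpha}^2$ and summing over $\alpha$ produces the stated formula (the overall phase matching once $\alpha$ is paired with $-\alpha$ and the oddness of $\cot$ is accounted for).

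The main obstacle lies not in the algebra, which is routine inner-product bookkeeping, but in confirming that the eigenvectors listed in Lemma~\ref{lemma-akr-tulsi} already form a complete eigenbasis on the $U$-invariant closure of $\Span(\ket{\mu},\ket{\phi_0})$. This rests on ruling out the $\pm 1$ eigenspaces of $U$, for which the hypotheses of Section~\ref{find} are exactly tailored; once that is in hand, the two expansions follow from Parseval in the orthogonal family $\set{\ket{w_\alpha}/\norm{w_\alpha}}$.
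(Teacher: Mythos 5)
Your proof follows the same route as the paper: expand $\ket{\mu}$ and $\ket{\tphi_0}$ in the orthogonal family $\{\ket{w_\alpha}\}$ with coefficients $\braket{w_\alpha}{\cdot}/\norm{w_\alpha}^2$, and compute the needed overlaps from $\ket{w_\alpha}=\ket{\mu}+\complexi\ket{w_\alpha'}$, $\braket{\mu}{w_\alpha'}=0$, and the $\ket{\phi_0}$-component of $\ket{w_\alpha'}$ given in Lemma~\ref{lemma-akr-tulsi}. The small difference is internal: the paper computes $\braket{w_\alpha}{\tphi_0}$ by first getting $\braket{\phi_0}{w_\alpha}=a_0+a_0\complexi\cot(\alpha/2)$ and then substituting $\ket{\tphi_0}=\ket{\phi_0}-a_0\ket{\mu}$, whereas you short-circuit via $\braket{\mu}{\tphi_0}=0$ and $\braket{w_\alpha'}{\mu}=0$ to obtain $\braket{w_\alpha}{\tphi_0}=-\complexi\braket{w_\alpha'}{\phi_0}$. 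Both are valid and equally short. You also take the time to justify that $\{\ket{w_\alpha}\}$ actually spans the relevant $U$-invariant subspace (no $\pm1$-eigenvectors under the Section~\ref{find} hypotheses), which the paper leaves implicit; this is a reasonable bit of extra rigor, not a different method.

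One small caveat, which you already flag at the end: a careful evaluation gives $\braket{w_\alpha}{\tphi_0}=-\complexi a_0\cot(\alpha/2)$, so the coefficient is $-a_0\complexi\cot(\alpha/2)/\norm{w_\alpha}^2$ rather than the $+$ sign in the lemma statement. This is a harmless labeling ambiguity: relabeling $\alpha\leftrightarrow-\alpha$ (which swaps $\ket{w_\alpha}$ with $\ket{w_{-\alpha}}$, preserves $\norm{w_\alpha}$, and flips $\cot(\alpha/2)$) reconciles the sign, and in any case the downstream uses (Lemma~\ref{qht-obs}, Corollary~\ref{cor-ev-decomp-tulsi}) only need the overlaps up to such a global relabeling. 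Your remark that ``the oddness of $\cot$'' resolves the phase is exactly this observation, though stating it as a relabeling of the index set would be crisper.
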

\begin{proof}
Any vector~$\ket{\psi}$ may be expressed in the orthogonal
basis~$\set{\ket{w_\alpha}}$ as
$
\ket{\psi} = \sum_\alpha \frac{\braket{w_\alpha}{\psi}}
                         {\norm{w_\alpha}^2} \ket{w_\alpha}$.
The first equation now follows from
$\braket{w_{\alpha}}{\mu} = (\bra{\mu}-\complexi \bra{w'_{\alpha}})
\ket{\mu} = 1$.

By Lemma~\ref{lemma-akr-tulsi},
$
\braket{\phi_0}{w_{\alpha}} = \bra{\phi_0}(\ket{\mu} +
\complexi \ket{w'_{\alpha}}) = a_0 + a_0 
\complexi\cot \frac{\alpha}{2}$.
The second equation follows by combining the above
with~$\ket{\tphi_0} = \ket{\phi_0} - a_0 \ket{\mu}$.
\end{proof}

\begin{lemma}\label{qht-obs}
\suppress{
The~$\ket{\mu}$-hitting time of~$U_2$ is then given by
\[
\qht(U_2,\ket{\mu}) \quad = \quad
\sum_{\alpha > 0}  \frac{a_0^2 \cot^2(\frac{\alpha}{2})}{\norm{w_\alpha}^2}
\times \frac{2}{\alpha},
\]
and }
The inner product of the target state~$\ket{\mu}$ and the~$U$-rotation
of~$\ket{\tphi_0}$ 
is $
2 a_0 \sum_{\alpha > 0}  \frac{\cot(\frac{\alpha}{2})}{\norm{w_\alpha}^2}
$.
\end{lemma}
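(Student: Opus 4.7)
My plan is to prove this by a direct computation in the eigenbasis $\{\ket{w_\alpha}\}$ of $U$ provided by Lemma~\ref{lemma-akr-tulsi}, using the expansions of $\ket{\mu}$ and $\ket{\tphi_0}$ already furnished by Lemma~\ref{thm-ev-rep} together with the identity $\braket{\mu}{w_\alpha} = 1$ established in its proof.

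First I would reinterpret the $U$-rotation on the (generically non-orthonormal) basis $\{\ket{w_\alpha}\}$. The operation is defined in terms of the orthonormal pair basis $\{\ket{w_j^\pm}\}$ with $\ket{w_j^-}=\overline{\ket{w_j^+}}$, but since each $\ket{w_j^\pm}$ lies in the same one-dimensional eigenspace as $\ket{w_{\pm\alpha_j}}/\norm{w_{\pm\alpha_j}}$, the two choices differ only by unit-modulus phases $\lambda_j$ and $\overline{\lambda_j}$ (the conjugation symmetry forces these to be complex conjugate). A short check then shows that the $U$-rotation is simply the diagonal operator which acts as $+1$ on eigenspaces with phase $\alpha \ge 0$ (and, were it present, on the $(-1)$-eigenspace) and as $-1$ on eigenspaces with phase $\alpha < 0$. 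In other words, if $\ket{\tphi_0}=\sum_\alpha c_\alpha \ket{w_\alpha}$, then the $U$-rotation equals $\sum_{\alpha\ge 0} c_\alpha\ket{w_\alpha} - \sum_{\alpha<0} c_\alpha \ket{w_\alpha}$.

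Next I would plug in the coefficients $c_\alpha = a_0\complexi\cot(\alpha/2)/\norm{w_\alpha}^2$ given by Lemma~\ref{thm-ev-rep} and take the inner product with $\ket{\mu}$, using $\braket{\mu}{w_\alpha}=1$ for every eigenphase $\alpha$ of $U$. The sum reduces to $\sum_{\alpha\ge 0} c_\alpha - \sum_{\alpha<0} c_\alpha$, and under the assumption of this section ($\braket{\phi_0}{\mu}\ne 0$ and no $(-1)$-eigenvector of~$U_2$) the eigenphases of $U$ come in conjugate pairs $\pm\alpha$ with $\alpha > 0$. The conjugation symmetry $\ket{w_{-\alpha}} = \overline{\ket{w_\alpha}}$ gives $\norm{w_{-\alpha}} = \norm{w_\alpha}$, and $\cot(-\alpha/2) = -\cot(\alpha/2)$. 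Thus the sign flip of the $U$-rotation on $\alpha<0$ combines with the sign from the cotangent so that the two contributions within each pair reinforce each other, producing twice the single-pair value and giving the claimed expression $2 a_0 \sum_{\alpha>0}\cot(\alpha/2)/\norm{w_\alpha}^2$.

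The only subtle point — and the step where I would take the most care — is the phase bookkeeping in the first paragraph: the orthonormal basis vectors $\ket{w_j^\pm}$ are only determined up to conjugate phases, and the coefficients $\delta_j$ in the definition of the $U$-rotation are required to be \emph{real}, which forces a particular phase choice. Writing out $\delta_j(\ket{w_j^+}\pm\ket{w_j^-})$ in terms of $c_{\pm\alpha_j}\ket{w_{\pm\alpha_j}}$ and using $c_{-\alpha}=\overline{c_\alpha}$ (a consequence of $\ket{\tphi_0}$ having real entries) is what justifies identifying the $U$-rotation with the diagonal sign operator on $\{\ket{w_\alpha}\}$. Once this identification is in place, no further calculation beyond the trigonometric pairing above is needed.
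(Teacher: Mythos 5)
Your proof is correct and fills in the argument that the paper leaves implicit (Lemma~\ref{qht-obs} is stated without proof immediately after Lemma~\ref{thm-ev-rep}). You use exactly the ingredients the authors would have used: the expansion of~$\ket{\tphi_0}$ in the (non-orthonormal) basis~$\{\ket{w_\alpha}\}$ from Lemma~\ref{thm-ev-rep}, the identity $\braket{w_\alpha}{\mu}=1$, the pairing $\ket{w_{-\alpha}}=\overline{\ket{w_\alpha}}$, and the reinterpretation of the $U$-rotation as a sign flip on negative eigenphases. Your careful discussion of the phase bookkeeping (relating the orthonormal pair basis $\ket{w_j^\pm}$ with its real coefficients~$\delta_j$ to the $\ket{w_{\pm\alpha_j}}$ basis) is exactly the step one must justify, and your resolution is right: since $c_{\alpha}$ is purely imaginary and $c_{-\alpha}=\overline{c_\alpha}=-c_\alpha$, the real-$\delta_j$ decomposition forces $\ket{w_j^\pm}$ to differ from $\ket{w_{\pm\alpha_j}}/\norm{w_{\alpha_j}}$ by conjugate unit phases, and the $U$-rotation then acts as~$\pm1$ on $\ket{w_{\pm\alpha}}$.

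One point to flag, for completeness rather than as a gap: if you carry the factor~$\complexi$ in $c_\alpha = a_0\complexi\cot(\alpha/2)/\norm{w_\alpha}^2$ all the way through, the inner product actually comes out to~$2a_0\complexi\sum_{\alpha>0}\cot(\alpha/2)/\norm{w_\alpha}^2$, i.e.\ a global phase~$\complexi$ relative to the lemma statement. The paper itself silently drops this~$\complexi$ (compare Corollary~\ref{cor-ev-decomp-tulsi}, which retains the~$\complexi$, with Corollary~\ref{cor-qht-tulsi}, which does not), and since the lemma is used only to lower-bound the magnitude of the overlap, the phase is immaterial. Your write-up should say explicitly that it is the magnitude of the overlap that is intended, or carry the~$\complexi$ and note it is a global phase.
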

Theorem~\ref{thm-qht-sqrt} shows that the quantum hitting time is
bounded by the square-root of the classical hitting time when~$U_2$ is
derived from a reversible Markov chain~$P$, i.e., $U_2 =
\mathsf{SWAP}\cdot\reflex(\ay)$ in the notation of
Section~\ref{sec-algorithms}. This allows for the detection of marked
elements (or more generally for checking if~$\braket{\mu}{\phi_0} \neq
0$) and also the creation of the~$U$-rotation of~$\ket{\tphi_0}$
in the stated time bound.  However, the overlap of the
$U$-rotation of~$\ket{\tphi_0}$ with the target $\ket{\mu}$ may
be~$\order(1)$. Tulsi~\cite{Tulsi08} discovered a technique, described
below, to boost this overlap to~$\Omega(1)$ in the case of a quantum
walk on the 2D grid.

Let $\theta\in[0,\pi/2)$. Let~$R_\theta$ denote the rotation
in~$\complex^2$ by angle~$\theta$:
\[
R_\theta \quad = \quad \left[ \begin{array}{cc}
                                 \cos\theta & -\sin\theta \\
                                 \sin\theta & \cos\theta
                              \end{array}
                       \right],
\]
and let~$\ket{\theta} = R_\theta^\dagger\ket{0}$,
and~$\ket{\theta^\perp} = R_\theta^\dagger\ket{1}$. 
Define 
$U_1^\theta=\density{0} \tensor \id + \density{1} \tensor U_1$, and
$U_2^\theta=( \density{\theta} \tensor (-\id)
                    +\density{\theta^\perp} \tensor U_2 )$.
Then  $U_1^\theta=\id - 2 \proj{1}\otimes\proj{\mu}$, meaning that
the modified marked state is $\ket{1}\ket{\mu}$.
Then the modified abstract search algorithm becomes:
\begin{eqnarray}
\label{eqn-tulsi-def}
\tulsi(U_1,U_2,\theta) 
    \quad = \quad U_2^\theta U_1^\theta
    \quad = \quad ( \density{\theta} \tensor (-\id )
                    +\density{\theta^\perp} \tensor U_2 )
                  ( \density{0} \tensor \id + \density{1} \tensor U_1 )
\end{eqnarray}
This is precisely the circuit used by Tulsi: his
rotation~$\hat{R}_{\theta}=R_{\theta}^\dagger$ in our notation.
Tulsi~\cite{Tulsi08} proved that the principal eigenvalue
of the operator above is
closely related to that of the unitary operator~$U_2 U_1$.
We extend his findings in terms that are more readily used in our
context.

The eigenvalues of~$U_2^\theta$ are the same as those of~$U_2$, except
for the addition of the new eigenvalue $-1$.
The eigenvectors corresponding to eigenvalues~$\e^{\pm \complexi \theta_j}$
are now~$\ket{\theta^\perp}\ket{\phi_j^{\pm}}$.
Any state of the form~$\ket{\theta}\ket{\psi}$ is a~$-1$ eigenvector of~$U_2^\theta$.
\begin{fact}
The decomposition of $\ket{1}\ket{\mu}$ in the eigenvector basis
of~$U_2^\theta$ is:
\[
\ket{1}\ket{\mu} \quad = \quad
    \cos\theta \; \ket{\theta^\perp} \left(a_0 \ket{\phi_0} 
    + \sum_{1 \le j \le J} a_j (\ket{\phi_j^{+}} + \ket{\phi_j^{-}})
    \right) - \sin\theta \; \ket{\theta}\ket{\mu},
\]
where the coefficients~$a_0,a_j$ are precisely those in
Eq.~(\ref{eqn-mu}).
\end{fact}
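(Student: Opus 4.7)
The plan is to reduce the claim to a direct computation by (i) rewriting $\ket{1}$ in the tilted basis $\{\ket{\theta},\ket{\theta^\perp}\}$, and (ii) inserting the already-known eigenbasis expansion of~$\ket{\mu}$ with respect to~$U_2$. No substantive ideas are needed beyond these two observations, since the eigenstructure of~$U_2^\theta$ is transparent from its definition as a controlled unitary.

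First I would unpack the definition of~$R_\theta$ to get
$\ket{\theta} = \cos\theta\,\ket{0}-\sin\theta\,\ket{1}$ and
$\ket{\theta^\perp} = \sin\theta\,\ket{0}+\cos\theta\,\ket{1}$,
and invert this to obtain
\[
\ket{1} \quad = \quad -\sin\theta\,\ket{\theta}+\cos\theta\,\ket{\theta^\perp}.
\]
Tensoring with~$\ket{\mu}$ yields
\[
\ket{1}\ket{\mu} \quad = \quad -\sin\theta\;\ket{\theta}\ket{\mu}
                             +\cos\theta\;\ket{\theta^\perp}\ket{\mu},
\]
which is already the desired separation between the~$\ket{\theta}$ and~$\ket{\theta^\perp}$ components.

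Next I would observe directly from~$U_2^\theta = \density{\theta}\otimes(-\id)+\density{\theta^\perp}\otimes U_2$ that any state of the form~$\ket{\theta}\otimes\ket{\chi}$ is a~$-1$-eigenvector of~$U_2^\theta$, while states of the form~$\ket{\theta^\perp}\otimes\ket{v}$ are eigenvectors of~$U_2^\theta$ with the same eigenvalue that~$\ket{v}$ has under~$U_2$. Consequently, any expansion of the second-register factor in an eigenbasis of~$U_2$ inside the~$\ket{\theta^\perp}$ summand is automatically an expansion in an eigenbasis of~$U_2^\theta$, whereas the~$\ket{\theta}$ summand needs no further decomposition.

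Finally, I would substitute the expression~(\ref{eqn-mu}) for~$\ket{\mu}$ into the $\ket{\theta^\perp}$ part only, leaving the $\ket{\theta}$ part intact (since it already lies entirely in the~$-1$-eigenspace of~$U_2^\theta$). This gives exactly the claimed formula with the same coefficients~$a_0,a_j$. There is no real obstacle: the only thing to check is the sign convention in~$R_\theta^\dagger$, which fixes the minus sign in front of~$\sin\theta\,\ket{\theta}\ket{\mu}$.
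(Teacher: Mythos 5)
Your proof is correct and follows the same route the paper takes implicitly: the paper gives no explicit proof, but the sentences immediately preceding the Fact lay out exactly the eigenstructure of $U_2^\theta$ that you use, and the remaining step is the straightforward change of basis $\ket{1} = -\sin\theta\,\ket{\theta} + \cos\theta\,\ket{\theta^\perp}$ followed by substitution of Eq.~(\ref{eqn-mu}). Your computation of $\ket{\theta}$, $\ket{\theta^\perp}$, and hence of $\ket{1}$ in the tilted basis is accurate, and the sign in front of $\sin\theta\,\ket{\theta}\ket{\mu}$ comes out correctly.
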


\begin{lemma}\label{cor-tulsi}
The eigenvalues~$\e^{\pm \complexi \alpha^\theta}$,
of the operator~$\tulsi(U_1,U_2,\theta)$ are solutions to
the equation
\[
a_0^2 \cot\frac{x}{2} 
             + \sum_j a_j^2 \left(\cot \left(\frac{x + \theta_j}{2}
               \right) + \cot\left( \frac{x - \theta_j}{2} \right)
               \right)
             - \tan^2\theta \tan \frac{x}{2} \quad = \quad 0.
\]
\suppress{or equivalently
\[
\abs{\sin \left(\frac{\alpha^\theta}{2}\right)}
\quad = \quad \frac{1}{2} \times \frac{a_0}{\sqrt{\sum_j
\frac{a_j^2}{\cos\alpha^\theta-\cos \theta_j}+ 
\frac{\tan^2\theta}{4 \cos^2({\alpha^\theta}/{2})}}}.
\]}
The corresponding unnormalized eigenvectors $\ket{w_{\alpha,\theta}}
= \ket{1}\ket{\mu} + \complexi\ket{w'_{\alpha,\theta}}$ satisfy
$\braket{1,\mu}{w'_{\alpha,\theta}}=0$
and
\begin{eqnarray*}
\ket{w'_{\alpha,\theta}} &=&
 \cos\theta\; \ket{\theta^\perp}\left(a_0 \cot\left(\frac{\alpha^\theta}{2}\right) \, \ket{\phi_0}
+ \sum_j a_j  \left( \cot\left(\frac{\alpha^\theta - \theta_j}{2}\right)
\ket{\phi_j^{+}} + \cot\left(\frac{\alpha^\theta + \theta_j}{2}\right) 
\ket{\phi_j^{-}} \right)\right) \\
&& \mbox{} + \sin\theta\; \ket{\theta} \left(
\tan\left(\frac{\alpha^\theta}{2}\right) \ket{\mu}\right).
\end{eqnarray*}
\end{lemma}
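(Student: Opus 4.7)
The plan is to recognize $\tulsi(U_1,U_2,\theta) = U_2^\theta U_1^\theta$ as itself being an abstract search algorithm in the sense of Section~\ref{ct}, with ``target state'' $\ket{1}\ket{\mu}$ and ``walk operator'' $U_2^\theta$, and then to simply invoke Lemma~\ref{lemma-akr-tulsi}. First I would check the hypotheses: $U_1^\theta = \id - 2\proj{1}\otimes\proj{\mu} = \id - 2\density{1,\mu}$ is a reflection through a real unit vector, and $U_2^\theta$ is real unitary whose spectrum can be read off from its definition: the unique $1$-eigenvector is $\ket{\theta^\perp}\ket{\phi_0}$, the $\e^{\pm\complexi\theta_j}$-eigenvectors are $\ket{\theta^\perp}\ket{\phi_j^\pm}$, and the $(-1)$-eigenspace is the entire subspace $\ket{\theta}\otimes\aitch$.

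Second, I would use the Fact immediately preceding the lemma to read off the decomposition of $\ket{1}\ket{\mu}$ in the eigenbasis of $U_2^\theta$. Comparing with the decomposition of $\ket{\mu}$ in Eq.~(\ref{eqn-mu}) that feeds into Lemma~\ref{lemma-akr-tulsi}, the coefficients transform as $a_0 \mapsto \cos\theta\cdot a_0$, $a_j \mapsto \cos\theta\cdot a_j$, and the $(-1)$-contribution becomes $-\sin\theta\,\ket{\theta}\ket{\mu}$, so the role of $a_{-1}\ket{\phi_{-1}}$ is played by $-\sin\theta\,\ket{\theta}\ket{\mu}$, i.e.\ $a_{-1}^2 \mapsto \sin^2\theta$. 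Substituting into the characteristic equation of Lemma~\ref{lemma-akr-tulsi} gives
\[
\cos^2\theta\, a_0^2 \cot\tfrac{x}{2} + \cos^2\theta \sum_j a_j^2\left(\cot\tfrac{x+\theta_j}{2} + \cot\tfrac{x-\theta_j}{2}\right) - \sin^2\theta \tan\tfrac{x}{2} \quad = \quad 0,
\]
and dividing through by $\cos^2\theta$ produces the claimed equation with $\tan^2\theta$ in place of $a_{-1}^2$. The eigenvector formula for $\ket{w'_{\alpha,\theta}}$ follows in the same way: plug the new coefficients and new eigenvectors into the formula for $\ket{w'_\alpha}$ given by Lemma~\ref{lemma-akr-tulsi}, and factor out $\cos\theta$ from the $\ket{\theta^\perp}$-component and $\sin\theta$ from the $\ket{\theta}$-component (the extra sign on $a_{-1}$ combines with the minus sign in front of $a_{-1}^2\tan(\alpha/2)\ket{\phi_{-1}}$ to yield a plus sign).

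The one genuinely new point, and what I expect to be the main obstacle, is that the $(-1)$-eigenspace of $U_2^\theta$ has dimension $\dim\aitch$ rather than being one-dimensional as in Lemma~\ref{lemma-akr-tulsi}. The key observation to get past this is that the subspace of $\ket{\theta}\otimes\aitch$ orthogonal to $\ket{\theta}\ket{\mu}$ is fixed pointwise by $U_1^\theta$ (since $U_1^\theta$ only acts nontrivially on $\ket{1,\mu}$) and is $(-1)$-invariant under $U_2^\theta$, so it is invariant under $\tulsi(U_1,U_2,\theta)$ and contributes only trivial eigenvalues orthogonal to our subspace of interest. The dynamics relevant to eigenvalues $\e^{\pm\complexi\alpha^\theta}\ne \pm 1$ therefore only see the one-dimensional ``effective'' $(-1)$-direction spanned by the projection $-\sin\theta\,\ket{\theta}\ket{\mu}$, which is precisely what makes the reduction to Lemma~\ref{lemma-akr-tulsi} go through verbatim with $a_{-1}$ replaced by $\sin\theta$ (up to a sign that is absorbed into the choice of $\ket{\phi_{-1}}$). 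Once this invariance argument is in place, the statement is immediate.
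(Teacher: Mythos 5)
Your proposal is correct and takes essentially the same route as the paper: the paper's proof is exactly the one-line substitution $a_0^\theta = a_0\cos\theta$, $a_j^\theta = a_j\cos\theta$, $a_{-1}^\theta = \sin\theta$ into Lemma~\ref{lemma-akr-tulsi}, together with a remark that $U_2^\theta$ (unlike $U_2$) has a $(-1)$-eigenspace. Your additional observation --- that the high-dimensional $(-1)$-eigenspace is harmless because $\ket{\theta}\otimes\ket{\mu}^\perp$ is a $\tulsi(U_1,U_2,\theta)$-invariant subspace on which the operator acts as $-\id$, so that the ``effective'' decomposition of $\ket{1}\ket{\mu}$ sees only the one-dimensional direction $\ket{\theta}\ket{\mu}$ --- is a worthwhile elaboration that the paper leaves implicit, and your sign bookkeeping is right.
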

\begin{proof}
We apply Lemma~\ref{lemma-akr-tulsi}
from Section~\ref{sec-comparison} with $a_0^\theta = a_0 \cos
\theta$, $a_j^\theta = a_j \cos \theta$, $a_{-1}^\theta = \sin \theta$.
Note that $U_2$ does not have any $(-1)$-eigenvectors (by assumption), but
$U_2^\theta$ does.
\end{proof}

The target vector in the modified algorithm is~$\ket{1}\ket{\mu}$. The
start state is chosen to be~$\ket{\tphi_{0,\theta}} =
\ket{\theta^\perp}\ket{\tphi_0}$. The following are analogous to
Lemmata~\ref{thm-ev-rep} and~\ref{qht-obs}:
\begin{corollary}
\label{cor-ev-decomp-tulsi}
The vectors~$\ket{1}\ket{\mu}$ and~$\ket{\tphi_{0,\theta}}$ have the following
representation in the basis~$\set{\ket{w_{\alpha,\theta}}}$ consisting of the
eigenvectors of~$\tulsi(U_1, U_2, \theta)$ as given by
Corollary~\ref{cor-tulsi}:
$$\ket{1}\ket{\mu}  =  \sum_{\alpha^\theta} \frac{1}{\norm{w_{\alpha,\theta}}^2}
                  \ket{w_{\alpha,\theta}}, \quad\text{and}\quad
\ket{\tphi_{0,\theta}}  =  (a_0 \complexi\, \cos\theta) \sum_{\alpha^\theta}
                  \frac{\cot(\frac{\alpha^\theta}{2})}
                    {\norm{w_{\alpha,\theta}}^2} \ket{w_{\alpha,\theta}}.$$
\end{corollary}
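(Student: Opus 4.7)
The plan is to observe that $\tulsi(U_1,U_2,\theta)=U_2^\theta U_1^\theta$ is itself an abstract search algorithm in the sense of Section~\ref{ct}, so the corollary is simply Lemma~\ref{thm-ev-rep} applied in this new setting. Under this identification, $U_1^\theta=I-2\proj{1,\mu}$ plays the role of $U_1$ (with new target $\ket{1}\ket{\mu}$), and $U_2^\theta$ is a real unitary whose unique $1$-eigenvector is $\ket{\theta^\perp}\ket{\phi_0}$ (it acts as $-I$ on the $\ket{\theta}$-sector and as $U_2$ on the $\ket{\theta^\perp}$-sector, and $\ket{\phi_0}$ is the unique $1$-eigenvector of $U_2$). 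The effective overlap is $\braket{1,\mu}{\theta^\perp,\phi_0}=a_0\cos\theta$, replacing $a_0$ in the formulas of Lemma~\ref{thm-ev-rep}.

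Concretely, I would use the orthogonal-basis decomposition $\ket{\psi}=\sum_{\alpha^\theta}\frac{\braket{w_{\alpha,\theta}}{\psi}}{\norm{w_{\alpha,\theta}}^2}\ket{w_{\alpha,\theta}}$ and evaluate the coefficient for $\psi=\ket{1,\mu}$ and $\psi=\ket{\tphi_{0,\theta}}$ separately. For the first, writing $\ket{w_{\alpha,\theta}}=\ket{1,\mu}+\complexi\ket{w'_{\alpha,\theta}}$ with $\braket{1,\mu}{w'_{\alpha,\theta}}=0$ (from Lemma~\ref{cor-tulsi}) gives $\braket{w_{\alpha,\theta}}{1,\mu}=1$ immediately, which yields the first formula. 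For the second, the explicit expression for $\ket{w'_{\alpha,\theta}}$ in Lemma~\ref{cor-tulsi} gives $\braket{w'_{\alpha,\theta}}{\theta^\perp,\phi_0}=a_0\cos\theta\cot(\alpha^\theta/2)$, because the $\ket{\theta}$-sector contribution is orthogonal to $\ket{\theta^\perp}\ket{\phi_0}$ and within the $\ket{\theta^\perp}$-sector only the $\ket{\phi_0}$-component survives.

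Combining these ingredients exactly as in the proof of Lemma~\ref{thm-ev-rep} --- computing $\braket{w_{\alpha,\theta}}{\theta^\perp,\phi_0}=a_0\cos\theta(1-\complexi\cot(\alpha^\theta/2))$ and subtracting $a_0\cos\theta$ times the $\ket{1,\mu}$-coefficient to form the coefficient of the starting state --- yields the stated value $a_0\complexi\cos\theta\cot(\alpha^\theta/2)/\norm{w_{\alpha,\theta}}^2$. The main bookkeeping point I expect to require care is interpreting $\ket{\tphi_{0,\theta}}$ as the component of the new $1$-eigenvector $\ket{\theta^\perp}\ket{\phi_0}$ orthogonal to the new target $\ket{1}\ket{\mu}$, as dictated by the convention of Section~\ref{ct}; once this is set, no calculation beyond that of Lemma~\ref{thm-ev-rep} is needed, and the two formulas follow in parallel.
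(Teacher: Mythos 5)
Your approach---recognizing $\tulsi(U_1,U_2,\theta)$ as an abstract search algorithm on $\mathbb{C}^2\otimes\aitch$ with target $\ket{1}\ket{\mu}$, with $\ket{\theta^\perp}\ket{\phi_0}$ as the unique $1$-eigenvector of $U_2^\theta$ and overlap $a_0\cos\theta$, then invoking Lemma~\ref{thm-ev-rep} with these substitutions---is exactly the route the paper intends; this is what it means by saying the corollary is ``analogous to Lemmata~\ref{thm-ev-rep} and~\ref{qht-obs}'', just as Lemma~\ref{cor-tulsi} is obtained from Lemma~\ref{lemma-akr-tulsi} by substituting $a_0^\theta=a_0\cos\theta$, $a_j^\theta=a_j\cos\theta$, $a_{-1}^\theta=\sin\theta$.

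However, the ``main bookkeeping point'' you flag is a genuine gap, not a convention one can freely adopt. You interpret $\ket{\tphi_{0,\theta}}$ as the component of $\ket{\theta^\perp}\ket{\phi_0}$ orthogonal to $\ket{1}\ket{\mu}$, i.e.\ $\ket{\theta^\perp}\ket{\phi_0}-(a_0\cos\theta)\ket{1}\ket{\mu}$, and for this vector Lemma~\ref{thm-ev-rep} with $a_0\mapsto a_0\cos\theta$ does give the stated expansion. But the paper \emph{defines} $\ket{\tphi_{0,\theta}}=\ket{\theta^\perp}\ket{\tphi_0}$, and the two vectors differ by $a_0\sin\theta\,\ket{0}\ket{\mu}$, which is nonzero for $\theta\neq 0$. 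Moreover this difference is not orthogonal to the $\ket{w_{\alpha,\theta}}$: using the eigenvalue equation in Lemma~\ref{cor-tulsi} one finds $\braket{0,\mu}{w_{\alpha,\theta}}=\complexi\tan\theta\,\tan(\alpha^\theta/2)$. Consequently a direct expansion of $\ket{\theta^\perp}\ket{\tphi_0}$ picks up a coefficient proportional to $\cot(\alpha^\theta/2)-\tan^2\theta\,\tan(\alpha^\theta/2)$ rather than $\cot(\alpha^\theta/2)$. So identifying $\ket{\tphi_{0,\theta}}$ with the orthogonal projection of the $1$-eigenvector is an unjustified step: a complete argument must either carry the correction term explicitly (and argue it is negligible, which it is, given the later choice $\tan\theta=a_0\cot(\alpha_1/2)/10$), or redefine the start state as the orthogonal projection and adjust the algorithmic preparation accordingly.
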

\begin{corollary}
\label{cor-qht-tulsi}
\suppress{The~$\ket{1}\ket{\mu}$-hitting time of~$U_2^\theta$ is then given by
\[
\qht(U_2^\theta,\ket{1}\ket{\mu}) \quad = \quad
(a_0 \cos\theta)^2 \sum_{\alpha^\theta > 0} 
     \frac{\cot^2(\frac{\alpha^\theta}{2})}{\norm{w_{\alpha,\theta}}^2}
\times \frac{2}{\alpha^\theta}, 
\]
and} 
The inner product of the target state $\ket{1}\ket{\mu}$
and the~$\tulsi(U_1,U_2,\theta)$-rotation of~$\ket{\tphi_{0,\theta}}$
is given by the expression
$
(2 a_0 \cos\theta) \sum_{\alpha^\theta > 0}
 \frac{\cot(\frac{\alpha^\theta}{2})}{\norm{w_{\alpha,\theta}}^2}$.
\end{corollary}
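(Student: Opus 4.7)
The proof proceeds in parallel with Lemma~\ref{qht-obs}, substituting the spectral data of $\tulsi(U_1,U_2,\theta)$ for that of $U=U_2U_1$. The only new ingredient is the prefactor $\cos\theta$ appearing in the start state, which passes through the computation unchanged.

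First, I would normalize the eigenvectors by setting $\ket{\tilde w_{\alpha,\theta}}=\ket{w_{\alpha,\theta}}/\norm{w_{\alpha,\theta}}$. Reading off the explicit formula from Corollary~\ref{cor-tulsi}, and using that $\ket{\phi_0}$, $\ket{\mu}$, $\ket{\theta^\perp}$ have real coordinates while $\ket{\phi_j^-}=\overline{\ket{\phi_j^+}}$, one verifies by inspection that $\ket{\tilde w_{-\alpha,\theta}}=\overline{\ket{\tilde w_{\alpha,\theta}}}$ and $\norm{w_{-\alpha,\theta}}=\norm{w_{\alpha,\theta}}$, so the eigenvectors of $\tulsi(U_1,U_2,\theta)$ come in conjugate pairs indexed by $\alpha^\theta>0$.

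Next, I would recast the expansion of $\ket{\tphi_{0,\theta}}$ from Corollary~\ref{cor-ev-decomp-tulsi} in the canonical form~(\ref{eqn-decomp}): group each $\alpha^\theta>0$ contribution with its $-\alpha^\theta$ partner and fix the phase of the representative unit eigenvector $\ket{w_j^+}$ so that the pairing coefficient $\delta_j$ is real. A short calculation gives $\delta_j$ proportional to $a_0\cos\theta\,\cot(\alpha^\theta/2)/\norm{w_{\alpha,\theta}}$. By the definition of the $\tulsi$-rotation, which negates the $\ket{w_j^-}$ component, the net effect on the orthonormal expansion of $\ket{\tphi_{0,\theta}}$ is exactly to replace $\cot(\alpha^\theta/2)$ by $\cot(\abs{\alpha^\theta}/2)$ in the coefficient of $\ket{\tilde w_{\alpha,\theta}}$.

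Finally, I would compute the inner product of the rotated state with $\ket{1}\ket{\mu}=\sum_{\alpha^\theta}\ket{w_{\alpha,\theta}}/\norm{w_{\alpha,\theta}}^2$ by using orthogonality of the normalized eigenvectors; combining each $\alpha^\theta>0$ with its partner $-\alpha^\theta$ doubles the summand and yields the claimed expression $2\,a_0\cos\theta\sum_{\alpha^\theta>0}\cot(\alpha^\theta/2)/\norm{w_{\alpha,\theta}}^2$, matching the analogous formula of Lemma~\ref{qht-obs}. The only mild bookkeeping issue is the $-1$-eigenspace of $U_2^\theta$ introduced by the Tulsi gadget: these eigenvectors all lie in $\ket{\theta}\otimes\aitch$, whereas the start state $\ket{\theta^\perp}\ket{\tphi_0}$ sits in the orthogonal subspace $\ket{\theta^\perp}\otimes\aitch$, so no $\ket{w_{-1}}$ component needs to be tracked. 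Their influence is already absorbed in the equation of Corollary~\ref{cor-tulsi} that determines the $e^{\pm\complexi\alpha^\theta}$-spectrum of $\tulsi(U_1,U_2,\theta)$, which is why the calculation collapses to the one above.
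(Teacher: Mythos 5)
Your proposal is correct and reconstructs exactly the computation the paper leaves implicit: expand $\ket{\tphi_{0,\theta}}$ and $\ket{1}\ket{\mu}$ in the (orthogonal, unnormalized) eigenbasis $\set{\ket{w_{\alpha,\theta}}}$ via Corollary~\ref{cor-ev-decomp-tulsi}, apply the $\tulsi$-rotation by flipping the sign of the negative-eigenphase components, and take the inner product using $\braket{w_{\alpha',\theta}}{w_{\alpha,\theta}}=\delta_{\alpha,\alpha'}\norm{w_{\alpha,\theta}}^2$; pairing $\pm\alpha^\theta$ doubles the positive-phase sum, and the $\cos\theta$ prefactor from the initial state rides along. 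This is precisely the analogue of the (also unproved) Lemma~\ref{qht-obs}, so the route matches the paper's intent.

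Two small remarks. First, as stated both the paper's formula and your final expression suppress an overall phase $\complexi$ that comes from the definition of the $U$-rotation together with the $\complexi$ in Corollary~\ref{cor-ev-decomp-tulsi}; only the magnitude is what is used downstream, so this is a cosmetic issue, and your derivation inherits it from the paper's conventions rather than introducing it. Second, your closing remark about the $-1$-eigenspace refers to the $-1$-eigenvectors of $U_2^\theta$ (which indeed live in $\ket{\theta}\otimes\aitch$), whereas what is really being used is that $\ket{\tphi_{0,\theta}}$ lies entirely in the span of the $\ket{w_{\alpha,\theta}}$ with $\alpha^\theta\neq 0,\pi$ — that is, a statement about the eigenspaces of $\tulsi(U_1,U_2,\theta)$, not of $U_2^\theta$. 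This is implicit in Corollary~\ref{cor-ev-decomp-tulsi} and is the same assumption the paper makes for Lemma~\ref{thm-ev-rep}, so your proof is no less rigorous than the paper's, but the remark as phrased addresses a slightly different operator than the one whose spectrum matters here.
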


We choose for the rest of this section $\theta \in [0,\pi/2]$ such that
$\tan\theta =
{a_0 \cot ({\alpha_1}/{2})}/10
$.
Let~$\alpha^\theta_1$ be the smallest positive eigenphase of 
the modified search algorithm~$\tulsi(U_1,U_2,\theta)$.

Lemma~\ref{lem-approximation1} 
(proof in Appendix~\ref{applem-approximation1})
proves that~$\alpha^\theta_1$ 
is of the same order as the principal eigenphase~$\alpha_1$ of the 
original algorithm~$U_2 U_1$.
Lemma~\ref{lem-approximation2} (proof in Appendix~\ref{applem-approximation2})
is the final piece in our argument. It relates the norm of the principal 
eigenvectors of the modified walk to the norm of the original ones.
Both lemmas extend corresponding results by Tulsi
in the case of the 2D grid.
\begin{lemma}
\label{lem-approximation1}
There is a unique eigenvalue phase $\alpha_1^\theta$ of the
operator~$\tulsi(U_1,U_2,\theta)$ 
in $(0,\alpha_1]$.
Moreover, 
$\cot(\alpha_1^\theta/2)\leq 1.01\times \cot(\alpha_1/2)$.
Therefore  if $0\leq\alpha_1\leq\pi/4$, then
$0.78\times\alpha_1\leq\alpha_1^\theta\leq\alpha_1$.
\end{lemma}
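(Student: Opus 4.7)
The plan is to analyse the characteristic equation for $\tulsi(U_1,U_2,\theta)$ given by Lemma~\ref{cor-tulsi} and compare it with the characteristic equation for $U = U_2 U_1$ from Lemma~\ref{lemma-akr-tulsi} (with $a_{-1}=0$ since $U_2$ has no $(-1)$-eigenvector by assumption). Set
\[
f(x) \;=\; a_0^2 \cot\tfrac{x}{2} + \sum_{j} a_j^2\left[\cot\tfrac{x+\theta_j}{2} + \cot\tfrac{x-\theta_j}{2}\right],
\qquad
g(x) \;=\; f(x) - \tan^2\!\theta \cdot \tan\tfrac{x}{2}.
\]
The positive eigenphases of $U$ are the zeros of $f$ and those of $\tulsi(U_1,U_2,\theta)$ are the zeros of $g$.

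First I would establish monotonicity on $(0,\theta_1)$: every cotangent summand of $f$ has derivative $-\tfrac{1}{2}\csc^2$, which is strictly negative on this interval, so $f$ is strictly decreasing. Since $f(0^+) = +\infty$ and $f(\theta_1^-) = -\infty$, the smallest positive eigenphase $\alpha_1$ is the unique zero of $f$ on $(0,\theta_1)$. The extra Tulsi term $-\tan^2\!\theta\tan(x/2)$ is also strictly decreasing on $(0,\pi)$, so $g$ is strictly decreasing on $(0,\theta_1)$; combined with $g(0^+) = +\infty$ and $g(\alpha_1) = -\tan^2\!\theta\tan(\alpha_1/2) < 0$, this yields a unique $\alpha_1^\theta \in (0,\alpha_1) \subset (0,\alpha_1]$ with $g(\alpha_1^\theta)=0$, proving uniqueness and $\alpha_1^\theta \le \alpha_1$.

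For the quantitative bound $\cot(\alpha_1^\theta/2) \le 1.01\cot(\alpha_1/2)$, let $x^\star$ be defined by $\cot(x^\star/2) = 1.01\cot(\alpha_1/2)$; by monotonicity of $g$ it suffices to verify $g(x^\star) \ge 0$. The sum $\sum_j a_j^2[\cot((x+\theta_j)/2)+\cot((x-\theta_j)/2)]$ is itself a sum of monotonically decreasing cotangents, so at $x^\star < \alpha_1$ its value is at least its value at $\alpha_1$, which by $f(\alpha_1)=0$ equals $-a_0^2\cot(\alpha_1/2)$. Hence
\[
f(x^\star) \;\ge\; 1.01\, a_0^2\cot\tfrac{\alpha_1}{2} - a_0^2\cot\tfrac{\alpha_1}{2} \;=\; \tfrac{1}{100}\, a_0^2\cot\tfrac{\alpha_1}{2}.
\]
Plugging in $\tan^2\!\theta = a_0^2\cot^2(\alpha_1/2)/100$ and $\tan(x^\star/2) = [1.01\cot(\alpha_1/2)]^{-1}$ gives $\tan^2\!\theta\tan(x^\star/2) = a_0^2\cot(\alpha_1/2)/101$, and since $1/100 > 1/101$ the inequality $g(x^\star) \ge 0$ holds. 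By monotonicity of $g$, $\alpha_1^\theta \ge x^\star$, i.e., $\cot(\alpha_1^\theta/2) \le 1.01\cot(\alpha_1/2)$.

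Finally, to transfer this into $\alpha_1^\theta \ge 0.78\,\alpha_1$ when $\alpha_1 \le \pi/4$, invert to obtain $\tan(\alpha_1^\theta/2) \ge \tan(\alpha_1/2)/1.01$, and use the elementary estimate that on $(0,\pi/8]$ one has $y \le \tan(y) \le y\cdot \tan(\pi/8)/(\pi/8) < 1.056\, y$. This yields $\alpha_1^\theta \ge \alpha_1/(1.01 \cdot 1.056) > 0.78\,\alpha_1$. The main routine work is the monotonicity calculation for $f$ and for the sum appearing in $f$; the one real conceptual step is the choice of the test point $x^\star$ and the observation that the Tulsi perturbation, scaled by $\tan\theta = a_0\cot(\alpha_1/2)/10$, is too small (by a factor $100$ vs.\ $101$) to push $\alpha_1^\theta$ far below $\alpha_1$.
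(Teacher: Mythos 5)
Your proof is correct and shares the skeleton of the paper's argument: treat $\alpha_1^\theta$ as the unique zero of the Tulsi characteristic function (your $g$) on $(0,\theta_1)$, which exists by monotonicity and the limits at $0^+$ and $\theta_1^-$, and test it against the point $x^\star$ with $\cot(x^\star/2)=1.01\cot(\alpha_1/2)$, using $f(\alpha_1)=0$ to lower-bound the cotangent sum. Two steps differ. In verifying $g(x^\star)\geq 0$ you substitute the exact value $\tan(x^\star/2)=1/(1.01\cot(\alpha_1/2))$, so the Tulsi term equals $a_0^2\cot(\alpha_1/2)/101$ and $g(x^\star)\geq a_0^2\cot(\alpha_1/2)(\tfrac{1}{100}-\tfrac{1}{101})>0$ strictly; the paper instead uses the looser bound $\tan(x^\star/2)\leq\tan(\alpha_1/2)$, giving exactly $1/100$ and $g(x^\star)\geq 0$ with possible equality. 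Both suffice. More substantively, in converting $\cot(\alpha_1^\theta/2)\leq 1.01\cot(\alpha_1/2)$ into $\alpha_1^\theta\geq 0.78\,\alpha_1$, the paper introduces an auxiliary angle $\beta$ with $\tan(\beta/2)=\tan(\alpha_1/2)/\sqrt{1.01}$, asserts $f(\beta)\geq 0$, and then applies the chord inequality $\tan(\beta/2)\leq 2\beta/\pi$. You bypass $\beta$ and apply the linear tangent estimates $y\leq\tan y\leq 1.056\,y$ on $(0,\pi/8]$ directly to $\alpha_1/2$ and $\alpha_1^\theta/2$, obtaining $\alpha_1^\theta>\alpha_1/(1.01\cdot 1.056)>0.93\,\alpha_1$, a sharper constant. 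Your direct route is also cleaner: the paper's claim $f(\beta)\geq 0$ for the $\sqrt{1.01}$-angle does not obviously follow from its displayed bound $f(x)\geq a_0^2\cot(x/2)-1.01\,a_0^2\cot(\alpha_1/2)$ (plugging in $\cot(\beta/2)=\sqrt{1.01}\cot(\alpha_1/2)$ gives a negative right-hand side), whereas your argument avoids $\beta$ altogether. One minor caveat: you conclude $g(\alpha_1)<0$ and hence $\alpha_1^\theta\in(0,\alpha_1)$ strictly, which requires $\theta>0$; when $\theta=0$ one has $\alpha_1^\theta=\alpha_1$, which is why the lemma states $(0,\alpha_1]$.
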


\begin{lemma}
\label{lem-approximation2}
$\norm{w_{\pm\alpha_1,\theta}} \leq
(3 \cos\theta)\times \norm{w_{\pm\alpha_1}}$.
\end{lemma}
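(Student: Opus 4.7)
The plan is to compute both norms in closed form using the eigenvector expressions of Lemma~\ref{lemma-akr-tulsi} and Corollary~\ref{cor-tulsi}, then compare them via Lemma~\ref{lem-approximation1} and our specific choice of $\theta$.

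Since $\ket{w_\alpha}=\ket{\mu}+\complexi\ket{w'_\alpha}$ with $\braket{\mu}{w'_\alpha}=0$, we have $\norm{w_\alpha}^2=1+N(\alpha)$, where
\[
N(\alpha) \ :=\ a_0^2\cot^2(\alpha/2)+\sum_j a_j^2\bigl(\cot^2((\alpha-\theta_j)/2)+\cot^2((\alpha+\theta_j)/2)\bigr).
\]
Note that $U_2$ has no $(-1)$-eigenvalue in this section because $P$ has positive eigenvalues, so the $\tan(\alpha/2)$ term in Lemma~\ref{lemma-akr-tulsi} drops out. Similarly, using the orthogonality of $\ket{\theta}$ and $\ket{\theta^\perp}$ in Corollary~\ref{cor-tulsi},
\[
\norm{w_{\alpha,\theta}}^2 \ =\ 1+\cos^2\theta\cdot N(\alpha^\theta)+\sin^2\theta\tan^2(\alpha^\theta/2).
\]
Thus the lemma reduces to bounding $\cos^2\theta\,N(\alpha_1^\theta)+\sin^2\theta\tan^2(\alpha_1^\theta/2)$ by a small multiple of $\cos^2\theta(1+N(\alpha_1))$.

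The next step is to show $N(\alpha_1^\theta)\le C_1\,N(\alpha_1)$ for an absolute constant $C_1$ by a term-by-term comparison. The $a_0^2$ contribution is handled directly by Lemma~\ref{lem-approximation1}, which gives $\cot(\alpha_1^\theta/2)\le 1.01\cot(\alpha_1/2)$. For each $j\ge 1$, the interlacing $\alpha_1<\theta_1$ (which follows from a sign analysis of the equation in Lemma~\ref{lemma-akr-tulsi}, since $S(0^+)=+\infty$ and $S(\theta_1^-)=-\infty$), together with the positivity assumption on $P$ (so every $\theta_j\le\pi/2$), places the relevant arguments $(\theta_j\pm\alpha^\theta)/2$ in $(0,\pi/2)$, where $\cot$ is positive and strictly decreasing. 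Hence $\cot^2((\theta_j-\alpha^\theta)/2)$ only decreases as $\alpha^\theta$ shrinks from $\alpha_1$ to $\alpha_1^\theta$. The only growing subterm $\cot^2((\theta_j+\alpha^\theta)/2)$ can be controlled via the identity $\cot x_2-\cot x_1=\sin(x_1-x_2)/(\sin x_1\sin x_2)$ applied with $x_1-x_2=(\alpha_1-\alpha_1^\theta)/2\le 0.11\,\alpha_1$ (using $\alpha_1^\theta\ge 0.78\alpha_1$) and $\sin x_i\ge\sin(\theta_j/2)\ge\sin(\alpha_1/2)$, yielding an absolute constant bound on the ratio $\cot((\theta_j+\alpha_1^\theta)/2)/\cot((\theta_j+\alpha_1)/2)$.

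Finally, the extra $\sin^2\theta\tan^2(\alpha_1^\theta/2)$ is controlled using $\tan\theta=a_0\cot(\alpha_1/2)/10$ and $\tan(\alpha_1^\theta/2)\le\tan(\alpha_1/2)$:
\[
\sin^2\theta\tan^2(\alpha_1^\theta/2) \ \le\ \cos^2\theta\tan^2\theta\tan^2(\alpha_1/2) \ =\ \cos^2\theta\cdot a_0^2/100 \ \le\ \cos^2\theta\cdot N(\alpha_1)/100,
\]
where the last step uses $a_0^2\le a_0^2\cot^2(\alpha_1/2)\le N(\alpha_1)$ since $\cot^2(\alpha_1/2)\ge 1$ for $\alpha_1\le\pi/4$. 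Combining these bounds gives $\norm{w_{\pm\alpha_1,\theta}}^2\le 1+(C_1+\tfrac{1}{100})\cos^2\theta\,N(\alpha_1)$, and the factor $9$ in the lemma provides enough slack: when $\cos\theta\ge 1/3$ the ``$+1$'' is absorbed into $9\cos^2\theta$, and when $\cos\theta<1/3$ the choice of $\theta$ forces $a_0^2\cot^2(\alpha_1/2)=100\tan^2\theta\ge 800$, hence $N(\alpha_1)\ge 800$ and the ``$+1$'' becomes negligible relative to $9\cos^2\theta\,N(\alpha_1)$. The main obstacle is the uniform upper bound on the ratio $\cot^2((\theta_j+\alpha_1^\theta)/2)/\cot^2((\theta_j+\alpha_1)/2)$ across all $j$, since this is the only subterm that can grow as $\alpha$ decreases, and it is what ultimately fixes the constant $C_1$ and forces one to invoke both Lemma~\ref{lem-approximation1} and the interlacing property.
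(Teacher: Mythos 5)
Your decomposition $\norm{w_\alpha}^2 = 1 + N(\alpha)$ with $N(\alpha)$ a sum of squared cotangents is a genuinely different starting point from the paper's. The paper avoids the term-by-term comparison altogether by working with $\norm{w_\alpha - w_{-\alpha}}^2$ (which equals $2\norm{w_\alpha}^2$ by orthogonality and equality of norms). In that difference, the $\ket{\phi_j^{\pm}}$ coefficients combine to $\cot\bigl(\tfrac{\alpha-\theta_j}{2}\bigr)+\cot\bigl(\tfrac{\alpha+\theta_j}{2}\bigr) = \tfrac{2\sin\alpha}{\cos\theta_j - \cos\alpha}$, whose \emph{square} is monotone increasing in $\alpha$ on $(0,\theta_j)$ (the derivative of $\sin\alpha/(\cos\alpha-\cos\theta_j)$ has sign $1-\cos\alpha\cos\theta_j>0$). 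Hence $\alpha_1^\theta\le\alpha_1$ immediately gives every $j$-term at $\alpha_1^\theta$ bounded by the same term at $\alpha_1$, with no constant to chase. Your form $\cot^2\bigl(\tfrac{\alpha-\theta_j}{2}\bigr)+\cot^2\bigl(\tfrac{\alpha+\theta_j}{2}\bigr)$ loses this structure: the two subterms move in opposite directions, and you are forced into the ratio-bound detour you flag as ``the main obstacle.''

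That detour does not close. Using $\cot x_2-\cot x_1=\sin(x_1-x_2)/(\sin x_1\sin x_2)$ and then weakening $\sin x_i\ge\sin(\theta_j/2)$ to $\sin x_i\ge\sin(\alpha_1/2)$ yields an \emph{additive} bound $\cot x_2-\cot x_1\lesssim 0.11\,\alpha_1/\sin^2(\alpha_1/2)\approx 0.44/\alpha_1$, which blows up as $\alpha_1\to 0$. Converting this to a \emph{ratio} bound requires dividing by $\cot x_1 = \cot\bigl(\tfrac{\alpha_1+\theta_j}{2}\bigr)$, and that quantity is not comparable to $1/\alpha_1$: it is governed by $\theta_j$, not $\alpha_1$. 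So the claimed ``absolute constant bound on the ratio'' is not established by the argument as written; making it work requires a case split on the size of $\theta_j$ that you do not carry out. The irony is that a much simpler observation closes your gap entirely: since $0<\alpha_1^\theta$, we have $\tfrac{\theta_j+\alpha_1^\theta}{2}>\tfrac{\theta_j}{2}>\tfrac{\theta_j-\alpha_1}{2}>0$, all in $(0,\pi/2)$ where $\cot^2$ is decreasing, hence
\[
\cot^2\!\left(\tfrac{\theta_j+\alpha_1^\theta}{2}\right) \ \le\ \cot^2\!\left(\tfrac{\theta_j}{2}\right) \ \le\ \cot^2\!\left(\tfrac{\theta_j-\alpha_1}{2}\right),
\]
which together with the monotone decrease of the $(\theta_j-\alpha)/2$ subterm gives $N(\alpha_1^\theta)\le 2\,N(\alpha_1)$ in one line, with no ratio bound at all. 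With $C_1=2$ your closing ``absorb the $+1$'' argument (which is correct) then finishes. As written, however, the key comparison step has a real gap, and the approach is also noticeably heavier than the paper's difference-of-eigenvectors computation, which sidesteps the whole issue.
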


We have all the ingredients for the main result of this section.
\begin{theorem}
Let $\eps>0$ be any constant.
Suppose that the squared length
of the projection of the state~$\ket{\tphi_0}$ onto the principal
eigenspace of~$U_2 U_1$ is bounded below by $1-\eps$. 
Then, for every $T\geq \qht_\eps(U_2,\ket{\tphi_{0}})/0.78$,
the procedure $\algof(\tulsi(U_1,U_2,\theta),1/T,1/4)$ maps $\ket{\tphi_{0,\theta}}$ 
to a state with constant overlap with the target state~$\ket{1}\ket{\mu}$.
\end{theorem}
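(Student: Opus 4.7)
My strategy is to apply Theorem~\ref{finding} to the Tulsi operator $V = \tulsi(U_1, U_2, \theta)$ with start state $\ket{\tphi_{0,\theta}} = \ket{\theta^\perp}\ket{\tphi_0}$, and then lower bound the overlap of the resulting $V$-rotation with the target $\ket{1}\ket{\mu}$.

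To verify the preconditions of Theorem~\ref{finding}: although $V$ has $-1$-eigenvectors of the form $\ket{\theta}\ket{\psi}$ with $\braket{\mu}{\psi}=0$, all of these are manifestly orthogonal to $\ket{\tphi_{0,\theta}}$, so the eigenphases of $V$ relevant to the algorithm lie in $(-\pi/2,\pi/2)$. For the precision requirement, $\qht_{1/4}(U_2^\theta,\ket{1}\ket{\mu}) \leq 1/\alpha_1^\theta$ trivially, since $\mathit{QH}^\theta$ cannot exceed the inverse of the smallest positive eigenphase of~$V$. Meanwhile the concentration hypothesis forces $\qht_\eps(U_2,\ket{\mu}) = 1/\alpha_1$ for any $\eps<1/2$, so Lemma~\ref{lem-approximation1} supplies $1/\alpha_1^\theta \leq 1/(0.78\,\alpha_1) = \qht_\eps(U_2,\ket{\mu})/0.78 \leq T$.

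Theorem~\ref{finding} then outputs a state at Euclidean distance $O(\sqrt{1/4})$ from the $V$-rotation of $\ket{\tphi_{0,\theta}}$, so it is enough to show the rotation itself has a positive constant inner product with $\ket{1}\ket{\mu}$. By Corollary~\ref{cor-qht-tulsi} this inner product equals
\[
(2a_0 \cos\theta)\sum_{\alpha^\theta > 0} \frac{\cot(\alpha^\theta/2)}{\|w_{\alpha,\theta}\|^2}.
\]
Every summand is nonnegative, so I drop all but the principal $\alpha_1^\theta$-term and then invoke Lemma~\ref{lem-approximation1} ($\cot(\alpha_1^\theta/2) \geq \cot(\alpha_1/2)$ since $\alpha_1^\theta \leq \alpha_1$) and Lemma~\ref{lem-approximation2} ($\|w_{\alpha_1,\theta}\|^2 \leq 9\cos^2\theta\,\|w_{\alpha_1}\|^2$) to reduce the expression to $2a_0\cot(\alpha_1/2)/(9\cos\theta\,\|w_{\alpha_1}\|^2)$. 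The defining choice $\tan\theta = a_0\cot(\alpha_1/2)/10$, together with the hypothesis recast as $\|w_{\alpha_1}\|^2 \leq 2a_0^2\cot^2(\alpha_1/2)/(1-\eps)$ (a direct consequence of Lemma~\ref{thm-ev-rep}), causes the free parameters $a_0$ and $\alpha_1$ to cancel, leaving a universal lower bound of order $1-\eps$.

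The main obstacle is this last algebraic cancellation: Tulsi's particular scaling of $\theta$ is precisely what balances $\cos\theta\,\|w_{\alpha_1}\|^2$ against $a_0\cot(\alpha_1/2)$, and one must verify that the resulting constant dominates the Euclidean error of Theorem~\ref{finding} so that the final output of $\algof$ retains $\Omega(1)$ inner product with $\ket{1}\ket{\mu}$.
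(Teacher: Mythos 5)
Your proof is correct and follows the same route as the paper's (terse) proof: apply Theorem~\ref{finding} to $\tulsi(U_1,U_2,\theta)$ with the Tulsi start state, use the concentration hypothesis to pin down $\qht_\eps(U_2,\ket{\mu})=1/\alpha_1$, invoke Lemma~\ref{lem-approximation1} to match the Tulsi precision to $T$, and lower bound the rotation's overlap via Corollary~\ref{cor-qht-tulsi} together with Lemmata~\ref{lem-approximation1} and~\ref{lem-approximation2}. You actually supply the final algebraic cancellation (showing the overlap is at least roughly $(1-\eps)/(90\sin\theta)\geq(1-\eps)/90$) which the paper leaves implicit; one small caveat is that your handling of Theorem~\ref{finding}'s precondition ``all eigenphases of $V$ lie in $[-\pi/2,\pi/2]$'' dispatches only the $(-1)$-eigenvectors and not eigenphases in $(\pi/2,\pi)$ --- but the paper glosses over this point in the same way.
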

\begin{proof}
First we prove that
$T=\qht_\eps(U_2^\theta,\ket{1}\ket{\mu})$ is of the order of~$\qht_\eps(U_2,\ket{\mu})$.
Let $l = 2 a_0^2 (\cot^2 \tfrac{\alpha_1}{2}) / \norm{w_{\alpha_1}}^2$.
We know that $l\geq 1-\eps$. Using Lemma~\ref{thm-ev-rep} we get that
$\qht_\eps(U_2,\ket{\mu}) = 1/\alpha_1$.
Moreover, by definition, $T \leq 1/\alpha_1^\theta$. 
By Lemma~\ref{lem-approximation1},
$T \leq 1/(0.78\alpha_1) = \qht_\eps(U_2,\ket{\mu}) / 0.78 $.
We now get our conclusion
by applying Corollary~\ref{cor-qht-tulsi}, Lemmata~\ref{lem-approximation1}
and~\ref{lem-approximation2}, and Theorem~\ref{finding}.
\end{proof}

\suppress{
TO BE CONTINUED FROM HERE... (what I wrote in the para
below does not seem to be correct...)

We note that the intial
state~$\ket{\tphi_{0,\theta}}$ has a constant length projection onto the
principal eigenspace of~$\tulsi(U_1,U_2,\theta)$. By
Corollary~\ref{cor-ev-decomp-tulsi}, the squared length of the
projection is
\begin{eqnarray*}
\lefteqn{ 2a_0^2 (\cos^2\theta) \frac{\cot^2(\frac{\alpha_\mu^\theta}{2})}
{\norm{w_{\alpha,\theta}}^2}} \\
    & \geq & 2 a_0^2 (\cos^2\theta) \frac{\cot^2(\frac{\alpha_\mu}{2})}
             {c_2^2 \norm{w_{\alpha}}^2}
\end{eqnarray*}
}

\suppress{
First we prove Part~(1).  From Lemma \ref{lemma-akr-tulsi} we have
\begin{equation*}
a_0^2 \cot \frac{\beta}{2} = 2 \sum_j \frac{\sin \beta}{\cos \beta - \cos
  \theta_j} a_j^2
\end{equation*}
and from Corollary \ref{cor-tulsi} we have:
\begin{equation*}
a_0^2 \cot\frac{\beta^\theta}{2} = 2 \sum_j \frac{\sin \beta^\theta}{\cos
  \beta^\theta - \cos \theta_j} a_j^2 + \tan^2 \theta \tan
  \frac{\beta^\theta}{2}
\end{equation*}
It follows from Lemma \ref{lem-approximation1} that:
\begin{eqnarray*}
0 & \leq & 2 \sum_j \frac{\sin \beta}{\cos \beta - \cos \theta_j} a_j^2 +
  \tan^2 \theta \tan \frac{\beta}{2} - a_0^2 \cot \frac{\beta^\theta}{ 2} \\
& = & a_0^2 \cot \beta/2 + \tan^2 \theta \tan \frac{\beta}{2} - a_0^2 \cot
\frac{\beta^\theta}{2} \\
& = & 1.01 a_0^2 \cot \beta/2 - a_0^2 \cot \beta^\theta/2
\end{eqnarray*}
Hence, $\cot\tfrac{\beta^\theta}{2} \leq 1.01 \cot\tfrac{\beta}{2}$. 
It follows that
$\beta^\theta \sim \beta$; in particular, the quantum hitting time of
$T(U_2 U_1)$ is no worse than that of $U_2 U_1$ up to a constant factor.

Now we prove Part (2).  Let us estimate:
\begin{equation*}
||w_{\beta^\theta} - w_{-\beta^\theta}||^2 = 4 \cos^2 \theta \; \left[a_0^2
  \cot^2 \beta^\theta/2 + 2 \sum_j a_j^2 \left(\frac{\sin
  \beta^\theta}{\cos \beta^\theta - \cos \theta_j} \right)^2 + \tan^2 \theta
  \; \tan^2 \frac{\beta^\theta}{2} \right]
\end{equation*}
By Lemma \ref{lem-approximation1}, the quantity inside of the square
brackets is at most:
\begin{equation*}
2 \left( a_0^2 \cot^2 \beta/2 + 2 \sum_j a_j^2 \left(
\frac{\sin \beta}{\cos \beta - \cos \theta_j} \right)^2 + a_0^2 \right)
\end{equation*}
Hence,
\begin{equation*}
||w_{\beta^\theta} - w_{-\beta^\theta}||^2 \leq 8 \cos^2 \theta
  ||w_\beta - w_{-\beta}||^2
\end{equation*}
and so by Lemma \ref{qht-obs} we have:
\begin{eqnarray}
|\braket{\mu}{\tilde{\phi}_0^{T^\theta}}| & = & \sum_\alpha \frac{a_0
 \cos \theta \cot(\alpha^\theta/2)}{||w_{\alpha^\theta}||^2} \\
& \geq & \frac{a_0 \cos \theta \cot(\beta^\theta/2)}{||w_{\beta^\theta}||^2} \\
& \geq & \frac{a_0 \cot \beta/2}{||w_\beta||^2 \cos \theta} \\
& \sim & \frac{(a_0 \cot \beta/2)^2}{||w_\beta||^2} = \Omega(1)
\end{eqnarray}
It follows that the observation probability of $T(U_2,U_1)$ is $\Omega(1)$.
}

We combine the above theorem with Lemma~\ref{transitive}
to get our final result. 
\begin{corollary}
  Let $P$ be a state-transitive reversible ergodic Markov chain, and
  let $z$ be any state.  Set $\ket{\mu}=\ket{z}\ket{p_z}$,
  $U_1=I-2\proj{\mu}$, and let $U_2$ be the quantum analogue of $P$.
 Then for every $\eps\leq 1/2$
 and $T\geq \qht_\eps(U_2,\ket{\tphi_{0}})/0.78$, the procedure
  $\algof(\tulsi(U_1,U_2,\theta),1/T,1/4)$ maps
  $\ket{\tphi_{0,\theta}}$ to a state with constant overlap with the
  target state~$\ket{1}\ket{\mu}$.
\end{corollary}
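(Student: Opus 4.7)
The plan is to apply the preceding theorem with $\eps = 1/2$, reducing the corollary to verifying its one nontrivial hypothesis---namely that the squared length of the projection of $\ket{\tphi_0}$ onto the principal eigenspace of $U_2 U_1$ is at least $1/2$. I would establish this by combining Lemma~\ref{transitive} with the state-transitivity of~$P$.

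First I would invoke Lemma~\ref{transitive} on the orthonormal family $M = \{\ket{\mu_z} := \ket{z}\ket{p_z} : z \in X\}$. The vectors in $M$ are pairwise orthogonal because their first registers are, have real nonnegative coefficients in the computational basis, and $\ket{\phi_0} = \sum_x \sqrt{\pi_x}\,\ket{x}\ket{p_x}$ lies in $\Span(M)$ with real coefficients, so the lemma's hypotheses are met. It yields some $z_0 \in X$ for which the projection of $\ket{\tphi_0}$ onto $\Span(\ket{w_{\alpha_{z_0}}},\ket{w_{-\alpha_{z_0}}})$---i.e., onto the principal eigenspace of the associated search algorithm $U_2(I - 2\proj{\mu_{z_0}})$---has length at least $1/\sqrt{2}$.

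Next I would use state-transitivity to upgrade ``some $z_0$'' to ``every $z$''. For any pair $z_0,z$ there is an automorphism $\sigma$ of $P$ (a permutation of $X$ with $p_{\sigma(x)\sigma(y)}=p_{xy}$) sending $z_0$ to~$z$, inducing the permutation unitary $V_\sigma = \sum_{x,y}\ket{\sigma(x)\sigma(y)}\bra{xy}$ on $\aitch$. A short computation using the automorphism identity gives $V_\sigma\ket{x}\ket{p_x} = \ket{\sigma(x)}\ket{p_{\sigma(x)}}$; hence $V_\sigma$ permutes the basis of $\ay$, commutes with $\reflex(\ay)$ and with $\mathsf{SWAP}$, and therefore with $U_2$. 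It also fixes $\ket{\phi_0}$ (by uniqueness of the stationary distribution, since $\pi \circ \sigma^{-1} = \pi$) and maps $\ket{\mu_{z_0}}$ to $\ket{\mu_z}$. Consequently $V_\sigma$ conjugates $U_2 U_1(z_0)$ onto $U_2 U_1(z)$, carries the principal eigenspace of the former onto that of the latter, and maps $\ket{\tphi_0(z_0)}$ to $\ket{\tphi_0(z)}$. The projection lengths therefore coincide, and the squared projection is at least $1/2$ for every $z$.

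Finally I would invoke the preceding theorem with $\eps = 1/2$: its hypothesis is now satisfied, so it yields the desired conclusion for every $T \geq \qht_{1/2}(U_2,\ket{\mu})/0.78$. For any $\eps \leq 1/2$ the monotonicity $\qht_\eps \geq \qht_{1/2}$, which is immediate from the definition of the Monte Carlo hitting time, upgrades this to every $T \geq \qht_\eps(U_2,\ket{\mu})/0.78$, matching the corollary's bound. The main obstacle is the bookkeeping in the second step: one must carefully check that a Markov-chain automorphism lifts to a unitary commuting with the full quantum walk structure, which ultimately reduces to the single identity $V_\sigma\ket{x}\ket{p_x} = \ket{\sigma(x)}\ket{p_{\sigma(x)}}$, a direct consequence of $p_{\sigma(x)\sigma(y)}=p_{xy}$ and the definition of $\ket{p_x}$.
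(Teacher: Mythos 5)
Your proof is correct and takes essentially the same approach as the paper, which simply asserts that state-transitivity lets the conclusion of Lemma~\ref{transitive} for some particular $z$ carry over to every $z$. You spell out the automorphism-lifting argument (constructing $V_\sigma$, checking it commutes with $U_2$, fixes $\ket{\phi_0}$, and carries $\ket{\mu_{z_0}}$ to $\ket{\mu_z}$) that the paper leaves implicit, and you correctly handle the $\eps \leq 1/2$ monotonicity step.
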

\begin{proof}
The proof is direct once we realize that
 the conclusions of Lemma~\ref{transitive} for a particular element $z$
 remain valid for any element because of the state-transitivity of $P$.
\end{proof}
\bibliography{qwalk}

\appendix

\section{Proof of Theorem~\ref{justif}}\label{appjustif}
\begin{proof}
  For the first statement, we set $k = \left(4 \ln
    \frac{2}{\eps}\right) \eht_{\eps/2}(P,z)$, and we denote by $s_k$
  the probability that $z$ is not reached in the first $k$ steps. The
  claim follows if we show that $s_k \leq \eps$. It is not hard to see
  that $s_k = \pi_{-z}^\adjoint P_{-z}^k u_{-z} = \sum_j \nu_j^2 (\cos
  \theta_j)^k$. We bound $s_k$ from above as follows:
\begin{eqnarray*}
s_k & \leq & \sum_{j : 1/ \theta_j^2 >  \eht_{\eps /2}(P,z) }\nu_j^2  ~+~ 
\sum_{j : 1/ \theta_j^2 \leq  \eht_{\eps /2}(P,z) }\nu_j^2 (\cos \theta_j)^k \\
  & \leq & \eps/2 ~+~ \left(1 - \frac{1}{4\; \eht_{\eps /2}(P,z)} \right)^k .
\end{eqnarray*}
The first summation is at most $\eps/2$ by the definition of
$\eht_{\eps /2}(P,z)$, and the bound for the second summation is
justified since $1/ \theta_j^2 \leq \eht_{\eps /2}(P,z)$ implies $\cos
\theta_j \leq 1 - 1/ (4 \eht_{\eps /2}(P,z))$.  Thus the second
term is also at most $\eps/2$ by the choice of $k$.

For the second statement, set $k = \frac{1}{2} h_{\eps/3}(P,z)$. Then
using the definition of $ h_{\eps/3}(P,z)$ and bounding $(1 -
1/2k)^{2k}$ from below by 1/3, we get
\[
\frac{\eps}{3} \quad \geq \quad
s_{2k} \quad = \quad
\sum_j  \nu_j^2 (\cos \theta_j)^{2k}  \quad > \quad
\sum_{j : \cos \theta_j > 1 - \frac{1}{2k}} \nu_j^2 
\left(1 - \frac{1}{2k} \right)^{2k}  \quad \geq \quad
\frac{1}{3} \sum_{j : \cos \theta_j > 1 - \frac{1}{2k}} \nu_j^2.
\]
Now observe that if $1/ \theta_j^2 > k$, then $\cos \theta_j > 1 -
\frac{1}{2k}$.  Therefore $\Pr [ H_z > k] \leq \eps$, and the
statement follows.
\end{proof}

\section{Proof of Lemma~\ref{transitive}}\label{apptransitive}
\begin{proof}
We know from Lemma~\ref{lemma-akr-tulsi} that if
\[
\ket{\mu_z} \quad = \quad 
    a_{0,z} \ket{\phi_0} + \sum_j a_{j,z} \left( \ket{\phi_j^+}
    + \ket{\phi_j^-} \right) + a_{-1,z} \ket{\phi},
\]
then
\[ a_{0,z}^2 \cot\frac{\alpha}{2} \quad  = \quad
             - \sum_j a_{j,z}^2 \left(\cot \left(\frac{\alpha + \theta_j}{2}
               \right) + \cot\left( \frac{\alpha - \theta_j}{2} \right)
               \right)
             + a_{-1,z}^2 \tan \frac{\alpha}{2},
\]             
where $0 < \alpha_z < \theta_1 \leq \theta_2 \leq \ldots < \pi$.
Since
$$- \left(\cot \left(\frac{\alpha + \theta_j}{2}
               \right) + \cot\left( \frac{\alpha - \theta_j}{2} \right)
               \right)
\quad = \quad \frac{2 \sin \alpha}{\cos \alpha - \cos \theta_j},$$
this is equivalent to
\begin{eqnarray}
\label{eqn-largest-ev}
a_{0,z}^2 \cot\frac{\alpha}{2} & = &
\sum_j a_{j,z}^2
\frac{2 \sin \alpha}{\cos \alpha - \cos \theta_j}
+ a_{-1,z}^2 \tan \frac{\alpha}{2}.
\end{eqnarray}
We first claim that there exists $z$ such that $|a_{0,z}|^2 \leq |a_{1,z}|^2$. Indeed, 
since $\ket{\phi_0}$ belongs to $\Span(M)$,
$$
\sum_{z} a_{0,z}^2 \quad = \quad \sum_{z} \braket{\mu_z}{\phi_0}
\quad = \quad 1 
\quad = \quad 
\sum_{z } \braket{\mu_z}{\phi_1^+} \quad = \quad \sum_{z } a_{1,z}^2.
$$
Fix now arbitrarily such a $z$. To simplify the notation, from now on we refer to $\alpha_z$
and to the coefficients $a_{j,z}$ without the subscript $z$.
Since all of the terms on the right hand side of Eq.~(\ref{eqn-largest-ev})
are positive, it follows that
\begin{equation}\label{eq1}
  \cot\frac{\alpha}{2} \quad \geq \quad 
  \frac{2 \sin \alpha}{\cos \alpha - \cos \theta_1}.
\end{equation}
Since the right hand side decreases if $\theta_1$ is replaced by some
$\theta_1 < \theta \leq \pi$, we obtain for every $j >1$,
\begin{equation}\label{eq2}
  \cot\frac{\alpha}{2} \quad \geq \quad 
  \frac{2 \sin \alpha}{\cos \alpha - \cos \theta_j},
\end{equation}
and
\begin{equation}\label{eq3}
  \cot\frac{\alpha}{2} \quad \geq \quad 
  \frac{2 \sin \alpha}{\cos \alpha - \cos \pi} 
  \quad = \quad 2 \tan \frac{\alpha}{2} \quad > \quad
  \tan \frac{\alpha}{2}.
\end{equation}

We also know that the 
eigenvectors $\ket{w_{\pm \alpha}}
= \ket{\mu} + \complexi\ket{w_{\pm \alpha}'}$
corresponding to the eigenvalues $\e^{\pm \complexi \alpha}$
satisfy 
\[
\ket{w_{ \pm \alpha}'} = 
    a_0 \cot\frac{\pm \alpha}{2} \, \ket{\phi_0} 
    + \sum_j a_j  \left( \cot\left(\frac{\pm \alpha - \theta_j}{2}\right)
      \ket{\phi_j^{+}} + \cot\left(\frac{\pm \alpha + \theta_j}{2}\right)
      \ket{\phi_j^{-}} \right) 
    - a_{-1} \tan\frac{\pm \alpha}{2} \, \ket{\phi}.
\]
Let us now define the vector $\ket{s}$ in the two dimensional space 
generated by
$\ket{w_{\pm \alpha}}$ by
$$
\ket{s} \quad = \quad \frac{ \ket{w_{ \alpha}} - \ket{w_{- \alpha}}}
{\complexi}.
$$
Observe that $\ket{\mu}$ is orthogonal to $\ket{w_{\pm\alpha}}$ and
therefore to $\ket{s}$.  Then the length of the projection of
$\ket{\tphi_0}$ to the subspace is the same as the one of
$\ket{\phi_0}$. This is at least 
$|\braket{s}{\phi_0}|/ \norm{s}$, which we now bound from below. Since
the functions $\tan$ and $\cot$ are odd, we get
$$\ket{s} \quad = \quad
2 a_0 \cot\frac{ \alpha}{2} \, \ket{\phi_0}  +
\sum_j a_j  \left( \cot\left(\frac{ \alpha - \theta_j}{2}\right)
      + \cot\left(\frac{ \alpha + \theta_j}{2}\right)  \right)
      \left( \ket{\phi_j^{+}} + \ket{\phi_j^{-}} \right) -
      2 a_{-1} \tan\frac{ \alpha}{2} \, \ket{\phi},
$$
and therefore 
$$
\norm{s}^2 \quad = \quad 
4 a_0^2  \cot^2\frac{ \alpha}{2} + 
8 \sum_j a_j^2 \frac{ \sin^2 \alpha}{(\cos \alpha - \cos \theta_j)^2} +
4 a_{-1}^2 \tan^2\frac{ \alpha}{2}.
$$
{From}~\eqref{eqn-largest-ev}, \eqref{eq1}, \eqref{eq2}, and~\eqref{eq3} it follows that
$$ 
a_0^2  \cot^2\frac{ \alpha}{2} \quad \geq \quad
2\sum_j a_j^2 \frac{ \sin^2 \alpha}{(\cos \alpha - \cos \theta_j)^2} +
a_{-1}^2 \tan^2\frac{ \alpha}{2}
$$
and therefore
$$
\norm{s}^2 \quad \leq \quad 8 a_0^2  \cot^2\frac{ \alpha}{2}.
$$
Since $
\braket{s}{\phi_0} = 2 a_0 \cot\frac{ \alpha}{2}$, 
we can indeed conclude that
$$\frac{|\braket{s}{\phi_0}|}
{\norm{s}} \quad \geq \quad \frac{1}{\sqrt{2}}.
$$
\end{proof}

\section{Proof of Lemma~\ref{lem-approximation1}}\label{applem-approximation1}
\begin{proof}
By the definition of $\alpha_1$ (Lemma~\ref{lemma-akr-tulsi}),
\begin{equation}\label{alpha}
a_0^2 \cot\frac{\alpha_1}{2} 
             + \sum_j a_j^2 \left(\cot \left(\frac{\alpha_1 + \theta_j}{2}
               \right) + \cot\left( \frac{\alpha_1 - \theta_j}{2} \right)
               \right)
         \quad = \quad 0.
\end{equation}
Fix any $\theta\geq 0$ and define the monotonically decreasing and
continuous function $f:(0,\theta_{1})\mapsto\mathbb{R}$:
\[
f(x) \quad = \quad
a_0^2 \cot\frac{x}{2} 
             + \sum_j a_j^2 \left(\cot \left(\frac{x + \theta_j}{2}
               \right) + \cot\left( \frac{x - \theta_j}{2} \right)
               \right)
             - \tan^2\theta \tan \frac{x}{2}.
\]
\suppress{
We are looking for the only value $\alpha_1^\theta\in(0,\theta_{1})$
such that $f(\alpha_1^\theta)=0$.
}
We know that $\lim_{x\to 0+} f(x)=+\infty$,
$\lim_{x\to\theta_{1}-} f(x)=-\infty$, and $f(\alpha_1)\leq 0$. 
Therefore there is a unique $\alpha_1^\theta \in 
(0,\alpha_1]$ such that~$f(\alpha_1^\theta) = 0$.

{From} the monotonicity of $\cot$, 
for~$x \in (0,\alpha_1]$, we have
$$\sum_j a_j^2 \left(\cot \left(\frac{x + \theta_j}{2}
               \right) + \cot\left( \frac{x - \theta_j}{2} \right)
               \right)
\quad \geq \quad 
\sum_j a_j^2 \left(\cot \left(\frac{\alpha + \theta_j}{2}
               \right) + \cot\left( \frac{\alpha - \theta_j}{2} \right)
               \right)
.$$
Using this inequality together with
Eq.~\eqref{alpha}
and the monotonicity of $\tan$, 
we bound the function $f$ from above as follows:
\begin{eqnarray*}
f(x)&\geq &a_0^2 \cot\frac{x}{2}-a_0^2 \cot\frac{\alpha}{2}- \tan^2\theta \tan \frac{x}{2}\\
&\geq &a_0^2 \cot\frac{x}{2}-a_0^2 \cot\frac{\alpha}{2}- \tan^2\theta \tan \frac{\alpha}{2}\\
&\geq &a_0^2 \cot\frac{x}{2}-1.01\times a_0^2 \cot\frac{\alpha}{2},
\end{eqnarray*}
where the last inequality comes from the hypothesis
$0\leq\tan\theta\leq a_0\cot(\alpha_1/2)/10$.
Since $f(\alpha_1^\theta)=0$, we get that 
$\cot(\alpha_1^\theta/2)\leq 1.01\times \cot(\alpha_1/2)$.

We now  prove that
$f(0.78\times \alpha_1)\leq 0$, which concludes the proof.
In the rest of the proof, we restrict the variable~$x$ to
the interval $[0.78\times \alpha_1,\alpha_1]$.
Let $\beta$ be the solution of $\tan(\beta/2)= \tan(\alpha_1/2)/\sqrt{1.01}$
in~$[0,\pi/2)$.
Then $f(\beta)\geq 0$, and therefore $\alpha_1^\theta\geq\beta$.

Since $\alpha_1\geq 0$, we have $\tan(\alpha_1/2)\geq \alpha_1/2$ and
$$
\tan\frac{\beta}{2}
\quad = \quad \frac{\tan(\frac{\alpha_1}{2})}{\sqrt{1.01}}
\quad \geq \quad \frac{\alpha_1}{2\sqrt{1.01}}.
$$
Moreover since $0\leq\beta\leq\alpha_1\leq\pi/4$, we have
$\tan (\beta/2)\leq 2\beta/\pi$, and therefore
$$\beta \quad \geq \quad \frac{\pi}{4\sqrt{1.01}}\times\alpha
\quad \geq \quad 0.78\times \alpha_1.$$
\end{proof}

\section{Proof of Lemma~\ref{lem-approximation2}}\label{applem-approximation2}
\begin{proof}
Since the two vectors $\ket{w_{\pm\alpha_1,\theta}}$ 
(respectively, $\ket{w_{\pm\alpha_1}}$)
have the same norm and are orthogonal,
it suffices to upper bound the following squared norm:
\begin{equation}\label{long-expression}
\norm{w_{\alpha_1,\theta} - w_{-\alpha_1,\theta}}^2 \quad = \quad 4 \cos^2 \theta \; \left[a_0^2
  \cot^2 \alpha_{1}^\theta/2 + 2 \sum_j a_j^2 \left(\frac{\sin
  \alpha_{1}^\theta}{\cos \alpha_{1}^\theta - \cos \theta_j} \right)^2 + \tan^2 \theta
  \; \tan^2 \frac{\alpha_{1}^\theta}{2} \right]
\end{equation}
By Lemma \ref{lem-approximation1}, $$a_0^2 \cot^2 \alpha_{1}^\theta/2
\quad \leq \quad 1.01 \; a_0^2 \cot^2 \alpha_1/2$$ and
$$\tan^2 \theta \; \tan^2 \frac{\alpha_{1}^\theta}{2} \quad = \quad
0.01 \; a_0^2 \cot^2 \frac{\alpha_1}{2} \; \tan^2 \frac{\alpha_{1}^\theta}{2}
\quad \leq \quad 0.01 \; a_0^2$$
by the choice of $\tan \theta = a_0 \cot(\alpha_1/2)/10$ and the
monotonicity of $\cot^2 = 1/\tan^2$ on $(0,\pi/2)$.  Since
\begin{equation}
\sum_j a_j^2 \left(\frac{\sin
  \alpha_{1}^\theta}{\cos \alpha_{1}^\theta - \cos \theta_j} \right)^2
  \quad \leq \quad \sum_j a_j^2 \left(
  \frac{\sin \alpha_1}{\cos \alpha_1 - \cos \theta_j} \right)^2,
\end{equation}
the quantity inside of the square
brackets of (\ref{long-expression}) is at most:
\begin{equation*}
2 \left( a_0^2 \cot^2 \alpha_1/2 + 2 \sum_j a_j^2 \left(
\frac{\sin \alpha_1}{\cos \alpha_1 - \cos \theta_j} \right)^2 + a_0^2 \right).
\end{equation*}
Hence,
\begin{equation*}
\norm{w_{\alpha_1,\theta} - w_{-\alpha_1,\theta}}^2 \quad \leq \quad
(8 \cos^2 \theta) \times  \norm{w_{\alpha_1} - w_{-\alpha_1}}^2.
\end{equation*}
\end{proof}

\end{document}